\newcommand{\eq}[1]{\begin{equation} #1 \end{equation}}
\newcommand{\eqa}[2]{\begin{equation} #1 \label{#2} \end{equation}}
\newcommand{\bcases}[1]{\begin{cases} #1 \end{cases}}
\newcommand{\figin}[4]
{\begin{figure}[tb]
\centering
\includegraphics[width= #1]{#2.pdf}
\caption{#3}
\label{f:#4}
\end{figure}}
\newcommand{\todayd}{\the\year/\the\month/\the\day}
\newcommand{\lb}{\label}
\newcommand{\nt}{\notag}
\newcommand{\Tr}{\mathrm{Tr}}
\newcommand{\bel}{\begin{easylist}}
\newcommand{\eel}{\end{easylist}}
\newcommand{\be}[1]{\begin{enumerate} #1 \end{enumerate}}
\newcommand{\eref}[1]{Eq.~\eqref{#1}}
\newcommand{\fref}[1]{Fig.~\ref{f:#1}}
\def\({\left(}
\def\){\right)}
\def\[{\left[}
\def\]{\right]}
\newcommand{\abs}[1]{\left|#1\right|}
\newcommand{\sumtwo}[2]%
{\mathop{\sum_{#1}}_{#2}}
\newcommand{\sumthree}[3]%
{\mathop{\mathop{\sum_{#1}}_{#2}}_{#3}}
\newcommand{\sumfour}[4]%
{\mathop{\mathop{\mathop{\sum_{#1}}_{#2}}_{#3}}_{#4}} 
\newcommand{\prodtwo}[2]%
{\mathop{\prod_{#1}}_{#2}}
\newcommand{\mintwo}[2]%
{\mathop{\min_{#1}}_{#2}}
\newcommand{\maxtwo}[2]%
{\mathop{\max_{#1}}_{#2}}
\newcommand{\maxthree}[3]%
{\mathop{\mathop{\max_{#1}}_{#2}}_{#3}}
\newcommand{\limtwo}[2]%
{\mathop{\lim_{#1}}_{#2}}
\newcommand{\suptwo}[2]%
{\mathop{\sup_{#1}}_{#2}}
\newcommand{\supthree}[3]%
{\mathop{\mathop{\sup_{#1}}_{#2}}_{#3}}
\newcommand{\supfour}[4]%
{\mathop{\mathop{\mathop{\sup_{#1}}_{#2}}_{#3}}_{#4}} 
\newcommand{\inftwo}[2]%
{\mathop{\inf_{#1}}_{#2}}
\newcommand{\infthree}[3]%
{\mathop{\mathop{\inf_{#1}}_{#2}}_{#3}}
\newcommand{\inffour}[4]%
{\mathop{\mathop{\mathop{\inf_{#1}}_{#2}}_{#3}}_{#4}} 
\newcommand\calC{{\cal C}}
\newcommand\calD{{\cal D}}
\newcommand\calE{{\cal E}}
\newcommand\calI{{\cal I}}
\newcommand\calM{{\cal M}}
\newcommand\calP{{\cal P}}
\newcommand\calQ{{\cal Q}}
\newcommand\calS{{\cal S}}
\newcommand\calT{{\cal T}}
\newcommand\calU{{\cal U}}
\newcommand{\bsc}{\boldsymbol{c}}
\newcommand{\bbE}{\mathbb{E}}
\newcommand{\bbN}{\mathbb{N}}
\newcommand{\bbO}{\mathbb{O}}
\newcommand{\bbP}{\mathbb{P}}
\newcommand{\bbZ}{\mathbb{Z}}
\newcommand{\ep}{\varepsilon}
\newcommand{\Di}{\mathit{\Delta}}
\newtheorem{thm}{Theorem}
\newtheorem*{thm*}{Theorem}
\newtheorem{lm}[thm]{Lemma}
\newtheorem{pro}[thm]{Proposition}
\newcommand{\bthm}[1]{\begin{thm} #1 \end{thm}}
\newcommand{\blm}[1]{\begin{lm} #1 \end{lm}}
\theoremstyle{definition}
\newtheorem{dfn}[thm]{Definition}
\newcommand{\bal}{\begin{equation}\begin{aligned}}
\newcommand{\eal}{\end{aligned}\end{equation}}
\newcommand{\ketbra}[2]{|{#1}\rangle\!\langle{#2}|}
\newcommand{\dm}[1]{\ketbra{#1}{#1}}
\DeclareMathOperator{\cl}{cl}
\DeclareMathOperator{\Gibbs}{Gibbs}
\newcommand{\sbar}{\;\rule{0pt}{9.5pt}\right|\;}
\newcommand{\lset}{\left\{\left.}
\newcommand{\rset}{\right\}}
\crefname{dfn}{definition}{definitions}
\Crefname{dfn}{Definition}{Definitions}
\crefname{thm}{theorem}{theorems}
\Crefname{thm}{Theorem}{Theorems}
\crefname{lm}{lemma}{lemmas}
\Crefname{lm}{Lemma}{Lemmas}
\crefname{pro}{proposition}{propositions}
\Crefname{pro}{Proposition}{Propositions}
\renewcommand{\@cite}[1]{\textsuperscript{#1)}}
\begin{document}

\preprint{APS/123-QED}



\title{Recovery of the second law in fully quantum thermodynamics}

\author{
{Naoto Shiraishi}
}
\email{shiraishi@phys.c.u-tokyo.ac.jp}
\affiliation{
{Department of Basic Science, The University of Tokyo, 3-8-1 Komaba, Meguro-ku, Tokyo, 153-0041, Japan} 
}%


\author{
{Ryuji Takagi}
}
\email{ryujitakagi.pat@gmail.com}
\affiliation{
{Department of Basic Science, The University of Tokyo, 3-8-1 Komaba, Meguro-ku, Tokyo, 153-0041, Japan} 
}%


\begin{abstract}
Quantum thermodynamics investigates how robust the second law of thermodynamics serves as the unique fundamental law in the small quantum world.
To tackle this problem, the quantum coherence constitutes a major difficulty of investigations, which provides severe constraints hindering the recovery of a single thermodynamic potential.
Here we solve this long-standing problem of quantum information theory by revealing that the state convertibility under thermal operations is fully characterized by the second law of thermodynamics.
Specifically, we prove that whether a quantum state with quantum coherence is convertible to another by a thermal operation with a correlated catalyst is completely determined by the free energy ordering.
Unlike previous attempts, our setting does not resort to any additional external coherent assist, providing a faithful operational characterization of thermodynamic state transformation.
\end{abstract}

\maketitle

The second law of thermodynamics is one of the most fundamental and universal principles in our physical world.
The second law manifests the entropy (adiabatic case) or the free energy (isothermal case) as the universal unique criterion of state convertibility of macroscopic systems and excludes the presence of other fundamental restrictions.
An axiomatic thermodynamics~\cite{LY99} derives the presence of the aforementioned second law in the macroscopic world, where the importance of the uniqueness of the thermodynamic potential is argued in detail.

Quantum thermodynamics investigates the validity of the second law in the small quantum world~\cite{Gour-review, Los-review, Sag-review}.
In this field, a thermal operation is generally set as a tractable thermodynamic operation, which is implementable by an energy-conserving unitary operation with an auxiliary system in the Gibbs state~\cite{Jan00, HO13, Bra13} (\fref{setup}.(a)).
This bottom-up definition provides a clear operational foundation of this operation class, which is a reason why the thermal operation is preferred in quantum thermodynamics.

The thermal operation is, however, known to have a limited conversion power, where many constraints other than the second law naturally arise and the second law is buried under them~\cite{HO13, Gou18}.
To enhance convertibility and recover the second law, the addition of an external auxiliary system called a catalyst, which does not change its own state but helps the conversion of the state in the system~\cite{Bra15,Mul18,Dat22, BWN23, Shi25}, is widely used as a standard approach.
In particular, we focus on a correlated catalyst, where the system and the catalyst can have a negligibly small correlation in the final state (\fref{setup}.(b)). 
Notably, if we restrict possible states to energy-diagonal states, the second law is indeed recovered in this setting~\cite{Mul18}.

In contrast to the above success, the situation becomes much more difficult for general quantum states.
A major obstacle is arisen from the presence of quantum coherence among energy eigenstates, which cannot be created from scratch by any energy-conserving operation including thermal operations~\cite{Mar-thesis, FOR15, TT25}.
This barrier again leads to emergence of various constraints other than the second law~\cite{NG15, Cwi15, Los15}.
The difficulty of handling quantum coherence is highlighted by the coherence no-broadcasting theorem, establishing that an incoherent initial state is never transformed into a coherent state by any energy-conserving operation even with the help of a correlated catalyst~\cite{LM19, MS19}.
On the basis of these observations, most of existing studies on quantum thermodynamics with thermal operations reluctantly resort to ad hoc external coherent assists supplying some coherence from outside~\cite{Bra13, SSP14, Fai19,Gou22}.
Some other existing studies give up treating thermal operations and instead employ a larger class of operations known as Gibbs-preserving operations~\cite{SS21}, though a recent study suggests that this relaxation may incur unfeasible thermodynamic resources to implement~\cite{TT25}.
Overall, previous studies indicate that the recovery of the second law with thermal operations in the fully quantum setting is prohibitively difficult.

Here, in contrast to the previous pessimistic view, we solve this long-standing problem by establishing the second law of thermodynamics as a single inequality in terms of the free energy.
This provides the necessary and sufficient condition of state conversion by a thermal operation with a correlated catalyst, which puts positive answers to the conjectures~\cite{LM19, GKS24}.
We specifically prove that a coherent quantum state $\rho$ is convertible to $\rho'$ by a thermal operation with a correlated catalyst if and only if $\rho$ has a larger free energy than $\rho'$, which declares quantum thermodynamics with thermal operations as a reversible theory.
Unlike previous literature, we do not adopt any external coherence assist that supplies small but finite amount of coherence to the system without restoration.
Instead, we extract the required coherence only from the given state $\rho$ and the catalyst, not relying on the external coherence supply and thus making the setting operationally relevant and faithful. 
As a result, the second law is fully recovered for all coherent states.

\figin{8.7cm}{setup3}{
{\bf Thermal operation with and without a correlated catalyst.a-b.}
{\bf (a)} 
In a thermal operation, we employ an auxiliary systems; a thermal environment in its Gibbs state $\tau_{\Gibbs , E}$.
We apply an energy-conserving unitary operation on them.
The final reduced state of the system is $\rho'$.
{\bf (b)} 
In a thermal operation with a correlated-catalyst, we employ a further external auxiliary system; a catalyst in state $c$, which is arbitrary but should return to its original state.
We apply an energy-conserving unitary operation on the system and two external systems.
The final reduced state of the system is $\rho'$ and that of the catalyst returns to $c$.
}{setup}

\bigskip

{\bf Thermal operation with a catalyst} --- In quantum thermodynamics, we study feasible state transformations with the available thermodynamic operations. 
It has turned out that the resource-theoretic approach~\cite{Chitambar2019quantum} is useful to this end, as it allows us to formulate the operational setting in a rigorous manner and offers a powerful technique developed in quantum information theory.

To start our investigation, we first need to formalize the available thermodynamic operations.
Here, the standard and operationally natural choice for the set of available operations is {\it thermal operations}, which models the class of operations implemented freely in an isothermal condition, as tractable thermodynamic processes in the small quantum regime.
Here, a thermal operation $\Lambda$ is constructed by an energy-conserving unitary $U$ on system $S$ and external system $E$ whose initial state is in the Gibbs state as $\Lambda(\rho)=\Tr_E[U (\rho\otimes \tau_{\Gibbs,E}) U^\dagger]$ (\fref{setup}.(a))~\cite{HO13,Bra13,Sagawa2021asymptotic-reversibility,Fai19}.
The thermal operation is defined in a bottom-up fashion, combining a freely preparable state (a Gibbs state) and a thermodynamically implementable operation (a unitary operation satisfying the law of energy conservation), which is well supported from the operational viewpoint.
This is why the thermal operation is regarded as the standard choice of a set of tractable thermodynamic operations.

The thermal operation itself does not have much state convertibility, which motivates us to enhance the convertibility by introducing some external aids with keeping thermodynamic picture.
We here employ one of the standard framework in resource theories, a {\it catalytic framework}, where we employ an auxiliary system called {\it catalyst} $C$ which does not change its own state while it helps state conversions in the system (\fref{setup}).
Although the presence of catalyst may initially appear to play almost no role since the state of the catalyst does not change at all, in fact the catalyst leads to a significant change in the convertibility.
The idea of the catalyst has already been seen in the axiomatic characterization of thermodynamics~\cite{LY99}, where we examine state convertibility without leaving any change in the rest of the universe.
In other words, we can borrow arbitrary states from somewhere in the universe as long as we return the state exactly the same as the borrowed one.
In this article, we employ a {\it correlated catalyst}~\cite{Dat22, BWN23, Shi25, Mul18, SS21, GKS24} whose reduced state at the final stage returns to the original state exactly but may have a small correlation between the system and the catalyst.
In this setting, by hiding the main system, an observer in the rest of the universe cannot find any change in the catalyst, which ensures the catalytic nature of this setting.

In summary, we examine whether conversion $\rho\to \rho'$ can be implemented by energy-conserving unitary $U$ on $S\otimes E\otimes C$ as $\pi=U(\rho\otimes \tau_{\Gibbs,E}\otimes c)U^\dagger$ satisfying $\Tr_{EC}[\pi]=\rho'$ and $\Tr_{SE}[\pi]=c$, where $\tau_{\Gibbs,E}$ refers to the thermal Gibbs state of $E$.
In this implementation, an external system in the Gibbs state $\tau_{\Gibbs,E}$ can be discarded, while an external system not in the Gibbs state should return to its original state, which reflects the fact that a Gibbs state is the only state we can freely prepare.
If we allow a {\it vanishing error} in the final state of the system, the condition $\Tr_{EC}[\pi]=\rho'$ is replaced by $\|\Tr_{EC}[\pi]-\rho'\|_1<\ep$ for any given $\ep>0$, which is also stated as a conversion {\it with arbitrary accuracy}.
Note that we do not allow any small error in the catalyst, since a catalyst with vanishing error yields unphysical consequences that any state transformation is possible~\cite{Bra15}.

\bigskip
{\bf Difficulty of quantum thermodynamics} --- 
We here clarify what has already been known and what has been left unknown in existing literature, which elucidates why quantum thermodynamics is difficult.

Many efforts have been invested to see whether the second law---formalized by the single inequality on the Helmholtz free energy---can be recovered under the thermal operations.
We have now gained a good understanding if we restrict possible states to \emph{quasi-classical states}, which are the states whose density matrices have a block-diagonal structure in the energy eigenbasis.  
In particular, the second law as a single inequality with respect to the Helmholtz free energy is recovered in the asymptotic~\cite{Bra13} and the correlated-catalytic~\cite{Mul18}  framework, for the case where both the initial state and the target state are quasi-classical.

Unlike the aforementioned success achieved under restricted states, the characterization of general quantum states remains far from a comprehensive understanding.
This challenge stems from the presence of \emph{quantum coherence}, i.e., nonzero off-diagonal terms representing the quantum superposition among energy eigenstates.
This is a fundamental problem that cannot be avoided in quantum thermodynamics, because quantum superposition represents the unique quantum feature that does not appear in the classical setting. 
The technical difficulty in dealing with the fully quantum setting lies in the fact that thermal operations cannot create coherence from scratch, and thus if the target state contains nonzero coherence, it appears inevitable to get coherence supply from somewhere else. 
This obstacle led Refs.~\cite{Bra13,Gou22,Fai19,Sagawa2021asymptotic-reversibility} to relax the setting of asymptotic transformation to the one where external supply of coherence over a sublinear range of energy levels is allowed, in order to recover the second law in this restricted setting. 
This external coherence, however, is an additional resource supplied from the outside and thus these results do not directly resolve the issue of the second law in the fully operational sense.

The difficulty stemming from quantum coherence motivates another line of attempts involving catalysts.
Indeed, Ref.~\cite{Mul18}, which showed the second law for quasi-classical states, first conjectured that the second law could be extended to a fully quantum setting with the help of correlated catalysts. 
However, the possibility of direct extension to the fully quantum setting was defeated by the coherence no-broadcasting theorem~\cite{LM19,MS19}, and reflecting this finding the above conjecture was withdrawn and instead the same author conjectured the recovery of the second law for Gibbs-preserving operations~\cite{LM19}.
Here, the set of Gibbs-preserving operations is strictly larger than the set of thermal operations, where thermal operations are contained as a proper subset.
This modified problem was settled by Ref.~\cite{SS21}, which showed that the second law is recovered under the Gibbs-preserving operations with correlated catalysts.
However, given that Gibbs-preserving operations can create coherence from scratch~\cite{FOR15} and come with a significant coherence cost for its implementation~\cite{TT25}, this result does not directly address the core problem either.

Overall, the question of whether one can characterize the general state convertibility under thermal operations (potentially in the setting of asymptotic or catalytic transformation) by the single inequality of free energy without the external coherence supply has been left wide open.

\bigskip
{\bf The second law of thermodynamics as a reversible theory} --- Before stating our main result, we shall introduce some concepts which we will use to characterize whether an initial state has nonzero coherence.
Let us define the set of \emph{coherent modes} $\calD(\rho)$ and \emph{resonant coherent modes} $\calC(\rho)$ defined by
$\calD(\rho):=\lset \Di_{ij} \sbar \Di_{ij}=E_i-E_j,\ (i,j)\in \calM(\rho)\rset$ and $\calC(\rho):=\lset x  \sbar x=\sum_{\Di\in \calD(\rho)} a_\Di \Di,\ a_\Di\in \bbZ\rset$,
where $\calM(\rho)$ is the set of integer pairs $(i,j)$ such that there exists a pair $\ket{E_i}$, $\ket{E_j}$ of energy eigenstates with energies $E_i$ and $E_j$ satisfying $\braket{E_i|\rho|E_j}\neq 0$, and $\bbZ$ denotes the set of integers. 
Namely, $\calD(\rho)$ is the energy difference (mode) on which $\rho$ has nonzero coherence, and $\calC(\rho)$ is the set of modes that can be obtained as an integer-linear combination of the coherent modes. 
We employ this slightly complicated concept because it is known that we can merge coherent modes (e.g., we can convert a state with coherent modes 1 and $\sqrt{2}$ to a state with coherent mode $1+\sqrt{2}$) with the help of a correlated catalyst~\cite{ST23}.
We stress that these sets are defined regardless of the \emph{amount} of coherence $\rho$ has:
What only matters is whether $\rho$ has nonzero coherence on those modes or not. 

Our result shows that, as long as all coherent modes of the target state $\rho'$ is also coherent in the initial state $\rho$ in terms of the resonant coherent modes, the feasible transformation is completely characterized by the free energy. (See Method for a proof sketch and Supplemental Material for its detailed proof.)

\begin{thm*}
For arbitrary two states $\rho$ and $\rho'$, suppose that $\rho$ has resonant coherent modes of $\rho'$ with integer coefficients, i.e., $\calC(\rho')\subseteq\calC(\rho)$.
Then, we can convert $\rho$ to $\rho'$ by a thermal operation with a correlated catalyst with arbitrary accuracy if and only if the free energy ordering 
\eq{
F(\rho)\geq F(\rho')
}
is satisfied, where $F(\rho)=\Tr(\rho H) - S(\rho)/\beta$ is the nonequilibrium free energy with Hamiltonian $H$ and inverse temperature $\beta$.
In addition, the correlation between the system and the catalyst can be set arbitrarily small.
\end{thm*}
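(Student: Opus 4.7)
The plan is to prove the two directions separately, with the ``only if'' direction (necessity of $F(\rho)\geq F(\rho')$) being a standard resource-theoretic argument and the ``if'' direction (sufficiency) requiring a carefully constructed catalyst that simultaneously handles the free-energy balance and the coherence transfer.

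For necessity I would use the identification $F(\rho)=\beta^{-1}D(\rho\|\tau_{\Gibbs,S})+F(\tau_{\Gibbs,S})$ together with monotonicity and subadditivity of the quantum relative entropy. Concretely, if a correlated-catalytic thermal operation converts $\rho\otimes c$ to a joint state $\pi_{SC}$ with marginals $\pi_S\approx\rho'$ and $\pi_C=c$, the Gibbs-preserving property of thermal operations gives $D(\pi_{SC}\|\tau_{\Gibbs,S}\otimes\tau_{\Gibbs,C})\leq D(\rho\otimes c\|\tau_{\Gibbs,S}\otimes\tau_{\Gibbs,C})$, while subadditivity yields $D(\pi_S\|\tau_{\Gibbs,S})+D(\pi_C\|\tau_{\Gibbs,C})\leq D(\pi_{SC}\|\tau_{\Gibbs,S}\otimes\tau_{\Gibbs,C})$. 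Since the catalyst terms cancel, I obtain $D(\pi_S\|\tau_{\Gibbs,S})\leq D(\rho\|\tau_{\Gibbs,S})$, which translates to $F(\rho)\geq F(\rho')$ upon taking $\ep\to 0$ by lower semicontinuity of relative entropy.

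For sufficiency, the construction has three main ingredients. First, I would reduce to the energy-diagonal setting by separating the states into their diagonal populations and off-diagonal coherence parts. The result of M\"uller~\cite{Mul18} guarantees that for quasi-classical states the free-energy inequality is sufficient for correlated-catalytic thermal conversion, so any imbalance in the diagonal populations can be absorbed by a classical catalyst $c_{\rm cl}$, with the budget $F(\rho)-F(\rho')\geq 0$. Second, I would build a ``coherence catalyst'' $c_{\rm coh}$ from many copies of $\rho$ (or a coherent reference state distilled from $\rho$) carrying nonzero coherence on every mode in $\calC(\rho)$. Third, since $\calC(\rho')\subseteq\calC(\rho)$, the coherence required to imprint $\rho'$ on the system can be synthesized from $c_{\rm coh}$ via the mode-merging construction of Ref.~\cite{ST23}, which allows integer-linear combinations of already-present coherent modes to be produced catalytically. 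The full catalyst $c_{\rm cl}\otimes c_{\rm coh}$ is combined with $\rho$ and $\tau_{\Gibbs,E}$, and an energy-conserving unitary is engineered to execute the M\"uller protocol on the diagonal block while simultaneously transferring the appropriate coherence from $c_{\rm coh}$ onto the system.

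The main obstacle I anticipate is ensuring that the coherence catalyst returns to its original state \emph{exactly} while still delivering a finite and correctly shaped off-diagonal amplitude to the system, given that any small error in the catalyst is forbidden. Following techniques developed in the correlated-catalytic literature, I would take the dimension of $c_{\rm coh}$ very large so that the per-copy coherence withdrawal is small, and add a ``repumping'' step that redistributes the residual coherence to restore the reduced state of $c_{\rm coh}$ exactly, at the price of creating a small system--catalyst correlation. Simultaneously meeting the exact catalyst-restoration condition, the accuracy bound $\|\Tr_{EC}[\pi]-\rho'\|_1<\ep$, and the smallness of the final $SC$ correlation will require a joint continuity/truncation argument together with a quantitative version of M\"uller's theorem and the mode-merging lemma. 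The final claim on arbitrarily small system--catalyst correlation then follows by letting the catalyst dimension diverge while explicitly tracking the correlation that the repumping step introduces.
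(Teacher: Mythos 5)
Your necessity argument is essentially the same as the paper's (which invokes the superadditivity of relative entropy); that part is fine. The sufficiency direction, however, has several genuine gaps.

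First, your plan to ``separate the states into their diagonal populations and off-diagonal coherence parts'' and treat the two independently is not well-defined as an operation, and the claim that ``any imbalance in the diagonal populations can be absorbed by a classical catalyst'' misstates what a catalyst can do: by definition it neither supplies nor absorbs free energy. More importantly, you never confront the coherence no-distillation theorem of Marvian: a full-rank $\rho$ cannot be asymptotically converted to a pure coherent reference with finite rate. So you cannot ``build a coherence catalyst from many copies of $\rho$'' in the ordinary asymptotic sense and then wield it. The paper explicitly flags this as the reason one must work with the \emph{asymptotic marginal} transformation --- i.e., only the reduced state on each subsystem needs to approximate $\rho'$, and correlations between copies are allowed --- and then converts this to a single-shot correlated-catalytic transformation by the Duan--M\"uller--Shiraishi type register construction, with the $o(n)$ size mismatch handled by appending Gibbs states as in Fang et al. This is what guarantees the catalyst is restored \emph{exactly}. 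Your ``repumping'' step is left unspecified, and it is precisely the place where a hand-wave does not suffice.

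Second, and most critically, you are missing the mechanism by which $F(\rho)\ge F(\rho')$ actually powers the coherent part of the transformation. The paper's protocol pinches $\rho^{\otimes\mu}$ to a classical state $\rho_{\cl,\mu}$, converts it to a classical state $\rho'_{\cl,\mu}$ with the spectrum of $\rho'^{\otimes\mu}$ using the classical asymptotic second law (Brand\~ao et al., not M\"uller), and then rotates $\rho'_{\cl,\mu}$ to $\rho'^{\otimes\mu}$ via an \AA{}berg catalytic-coherence implementation of a coherence-generating unitary, \emph{reusing} a single resource state $\eta$ over all $\nu=O(n)$ blocks. The central technical issue is that the coherence register's energy distribution spreads as a random walk over repeated uses and may hit the ground level, at which point the implementation degrades. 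The paper shows that the inequality $F(\rho)>F(\rho')$ translates (via the placement condition on $\rho'_{\cl,\mu}$ relative to $\rho'^{\otimes\mu}$) into a \emph{positive drift} of this random walk, and a martingale/optional-stopping argument then bounds the ground-state hitting probability by something exponentially small in the initial offset $M=O(1)$. This is the heart of the proof: the free-energy gap simultaneously pays for the classical population conversion \emph{and} for the indefinite reusability of the coherence register. Without this ingredient, your appeal to ``a joint continuity/truncation argument together with a quantitative version of M\"uller's theorem and the mode-merging lemma'' does not close the argument --- the mode-merging result of Shiraishi--Takagi gives you the \emph{which} modes are accessible (and you are right that it is where $\calC(\rho')\subseteq\calC(\rho)$ enters), but not the \emph{how much} coherence can be transferred catalytically with exact restoration.
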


Our result shows that, given the coherence condition $\calC(\rho')\subseteq\calC(\rho)$, the state transformation is fully characterized by the single inequality on Helmholz free energy, recovering the second law in the fully quantum scenario.
This theorem provides an answer to an outstanding problem in quantum information theory and quantum thermodynamics, asking what is the necessary and sufficient condition of state conversions by thermal operations and how the second law of thermodynamics is recovered.
In particular, we prove the conjecture raised in Ref.~\cite{Mul18} for states with finite coherence and the conjecture raised in Ref.~\cite{KGS23} with a slight modification on the coherence condition.

We emphasize that, unlike previous attempts, our recovery of the second law of thermodynamics with thermal operations is fulfilled without any external coherence assist, and all systems are treated in a closed fashion.
Contrasting our result with that in Ref.~\cite{SS21} leads to an important consequence that thermal operations and Gibbs-preserving operations have essentially the same state convertibility, which means the collapse of these two resource theories in the correlated-catalytic framework.
This result is striking in view of the fact that thermal operations and Gibbs-preserving operations have significantly different state convertibility without catalyst, whose gap cannot be closed by consuming finite amount of work~\cite{TT25}.

\bigskip
{\bf Restriction from quantum coherence} --- 
We stress that our coherence condition $\calC(\rho')\subset \calC(\rho)$ is extremely mild:
If the initial state $\rho$ has even a tiny amount of coherence on some modes, then the target state can have any amount of coherence on this mode. 
Consequently, even a small perturbation to the initial state completely removes the restriction on the coherent modes. 
To put this in another way, the state such that $\calC(\rho)$ contains all modes dominates the state space with measure 1, and thus for any $\rho'$ randomly sampled $\rho$ almost surely satisfies the coherence condition.

Recall that we must have conditions on coherence because of the coherence no-broadcasting theorem~\cite{LM19, MS19}, which prohibits a conversion from an incoherent state to a coherent state regardless of the free energy ordering. 
Thus, our condition is essentially the best we could hope for.
It also shows an interesting instability in the coherence no-broadcasting theorem that we cannot obtain any coherent state from the incoherent state even with the help of a correlated catalyst, but the restriction is discontinuously switched to the free energy ordering once the initial state contains nonzero coherence.

\bigskip
{\bf Reversible theory} --- 
Our main theorem establishes a {\it reversible theory} in quantum thermodynamics.
Here, a theory is reversible if we can prepare a resource storage system $A$ with the following property: For any $\rho$ and $\rho'$, there exists a state $\alpha$ and $\alpha'$ on $A$ such that $(\rho, \alpha)$ can be transformed to $(\rho', \alpha')$ and vice versa.
In our case, any (sufficiently large) system serves as a resource storage system by choosing states $\alpha$ and $\alpha'$ as satisfying $F(\rho)-F(\rho')=F(\alpha')-F(\alpha)$.

The reversibility is a desirable property since it declares that all resources can be completely exhausted without any loss by storing residual resources in the resource storage system.
We note that many resource theories including the entanglement theory~\cite{Lami2023no} lack reversibility and thus generally have inevitable losses in a composition of operations.
In addition to this point, a reversible theory leads to a complete ordering by a single monotonic resource measure, which is another advantage of reversibility from the theoretical aspect.
In light of these facts, the reversibility in quantum thermodynamics is highly nontrivial and desirable.

\figin{8.7cm}{schematic2}{
{\bf Overview of the protocol.}
We first divide $n$ copies of the initial state $\rho$ into $\mu\nu$ copies and $o(n)$ copies.
Using the dephasing channel, we remove off-diagonal elements of each $\mu$ copies of $\rho$ and obtain energy-diagonal state $\rho_{\cl,\mu}$ on $S^{\otimes \mu}$.
We then convert $\nu$ copies of $\rho_{\cl,\mu}$ into another energy-diagonal state $\rho_{\cl,\mu}'$ with the same number of copies.
The state $\rho_{\cl,\mu}'$ has the same spectrum as $\rho'^{\otimes \mu}$.
Separately from this process, we distill state $\eta$ with broad coherence from $o(n)$ copies of $\rho$.
We finally recover $\rho'^{\otimes \mu}$ from $\rho_{\cl,\mu}'$ by an energy-conserving unitary operation by using $\eta$ as catalytic coherence repeatedly.
}{schematic}

\bigskip
{\bf Conversion protocol} --- 
We here explain several key ideas to construct a correlated-catalytic protocol converting $\rho$ to $\rho'$ if the free energy ordering is satisfied.
We first observe the technique for constructing a correlated-catalytic transformation from an asymptotic conversion protocol~\cite{Duan2005multiple-copy,SS21,TS22,ST23,GKS24}.
However, we emphasize that the conventional asymptotic conversion is precluded in our setting, since the coherence no-distillation theorem~\cite{marvian2020coherence} prohibits the asymptotic transformation from a full rank state $\rho$ to a pure coherent state $\rho'$ regardless of their free energy ordering. 
To avoid this difficulty, we need to employ a modified version of an asymptotic conversion, known as {\it asymptotic marginal conversion}~\cite{Ferrari2023asymptotic,GKS24,ST23}, which aims to obtain a state whose local state is close to the target state. 
We then employ the construction of the desired correlated-catalytic transformation from an asymptotic marginal transformation, which is recently introduced in Refs.~\cite{ST23, GKS24}.

This allows us to focus on constructing the asymptotic marginal transformation with the unit rate for the initial and target states $\rho$ and $\rho'$ with $F(\rho)\geq F(\rho')$.
The core idea underlying our construction is to first obtain the classical state that has approximately the same eigenvalues and free energy of $\rho'$, and then rotate its eigenbasis to the coherent eigenbasis of $\rho'$ (\fref{schematic}).
Recalling that the second law is recovered for classical states~\cite{Bra13,HO13,Shiraishi2020twoconstructive}, we conclude that such a classical state can be obtained from $\rho$ under the free energy ordering $F(\rho)\geq F(\rho')$. 
Therefore, our problem is reduced to the feasibility of the coherence-generating unitary that rotates the classical basis to the coherent eigenbasis of the target state.

\figin{6cm}{ladder}{
{\bf Random walks modeling the energy distribution of the coherent resource state.}
If the free energy ordering $F(\rho)\geq F(\rho')$ is satisfied, the expected energy of the coherent resource state increases, corresponding to the random walk with moving upward on average. In this case, the probability of hitting the ground energy can be arbitrarily suppressed. On the other hand, if the free energy ordering is violated, i.e., $F(\rho)<F(\rho')$, then the expected energy of the coherent resource state decreases by this process, resulting in hitting the ground energy with high probability and causing undesirable error.
}{ladder}

We show that the coherence-generating unitary is implementable by using only an additional \emph{constant} number of copies of the initial state $\rho$ with respect to the number of copies of $\rho$ in the initial state of the asymptotic marginal transformation.
Notice that extra $O(1)$ (or more generally, $o(n)$) copies of the initial state $\rho$ does not affect the asymptotic transformation rate.
We stress that this does \emph{not} mean that we use an external coherence resource, and this extra use of the initial state can be properly taken into account in constructing the catalyst. 
At the first sight, it might look implausible that only the constant number of the initial state $\rho$ allows us to apply the desired coherent unitary, \emph{which needs to be applied to a growing number of subsystems}, as our goal is to obtain $n$ copies of the state $\rho'$ in the local subsystem. 
We accomplish this by \emph{reusing} the coherence extracted from the initial state $\rho$ over and over to apply the coherent unitary repeatedly. 
The main technique we employ is \emph{catalytic coherence} introduced in Ref.~\cite{Abe14}, which provides a way of reusing the coherent resource state for the unitary implementation.

However, we cannot directly apply the catalytic coherence protocol in our setting, since our coherent resource state sits in a system with a ground energy.
If the state might eventually hit the ground state, we can no longer perform the unitary implementation (\fref{ladder}).  
Fortunately, if the free energy ordering $F(\rho)\geq F(\rho')$ holds, the probability of the hitting event to the ground state can be arbitrarily suppressed by initially pumping \emph{constant} amount of energy, which implies an accurate implementation. 
We show this by mapping the behavior of the resource state into classical \emph{random walks} and estimate the probability of hitting the ground state using the martingale theory. 
Here, the free energy ordering $F(\rho)\geq F(\rho')$ ensures that the random walk moves upward on average, which allows us to suppress the hitting probability during an \emph{infinite} number of applications of coherent unitary.
In contrast, without the free energy ordering  (i.e., $F(\rho)< F(\rho')$), the dynamics of the coherent resource state are energy-lowering on average, which inevitably results in hitting the ground state.

\bigskip
{\bf Discussion} --- Our result solves a major open problem in quantum information theory, particularly in quantum thermodynamics, by establishing the necessary and sufficient condition of state conversions by thermal operation in the correlated-catalytic framework.
We successfully recover the second law in the small quantum regime, without resorting any external aid on coherence.

Our result claims essentially the same conversion power of thermal operations as that of another important operation class, Gibbs-preserving operations~\cite{Gour-review, Los-review}, which is a strictly wider class than the thermal operation.
This result shows clear contrast to a recent study implying a big discrepancy between thermal operations and Gibbs-preserving operations without catalyst~\cite{TT25}.
We also establish the collapse of thermal operations and {\it enhanced thermal operations}~\cite{Shi25-2}, which is the intersection of Gibbs-preserving operations and energy-preserving operations, in the correlated catalytic framework.
Although the enhanced thermal operation has been proven to be a strictly larger class than the thermal operation without catalyst~\cite{Cwi15, DDH21}, these two provide the same theory with a correlated catalyst.

We finally comment on how complex our operation is.
In principle, a thermal operation may require a complicated energy-conserving unitary operations with an elaborated auxiliary system in the Gibbs state.
However, it has been discovered that in the correlated catalytic framework several more restricted and tractable operational classes, two-body interactions and Markovian dynamics, have the same state conversion power as thermal operations~\cite{SN23}.
Hence, it suffices to prepare weaker classes of operations, such as two-body interactions and Markovian dynamics, to implement the desired thermal operation.

\bigskip

{\bf Method}

{\footnotesize

\medskip

{\bf Thermal operation}

In quantum thermodynamics, a set of free operations which we can implement freely is thermal operations.
A channel $\Lambda$ from $S$ to $S'$ is called a thermal operation if there is an external system $E$ with Hamiltonian $H_E$, the Gibbs state $\tau_{\Gibbs,E} = e^{-\beta H_E}/\Tr(e^{-\beta H_E})$, and energy-conserving partial isometry $V_{SE\to S'E}$ with $V_{SE\to S'E}(H_S+H_E)=(H_{S'}+H_E)V_{SE\to S'E}$ satisfying~\cite{Sagawa2021asymptotic-reversibility} 
\bal
 \Lambda(\rho) = \Tr_E\left[V_{SE\to S'E}(\rho\otimes \tau_{\Gibbs,E}) V^\dagger_{SE\to S'E}\right].
\eal

The thermal operation belongs to two important classes of operations.
One is Gibbs-preserving operations.
A channel $\Lambda$ on $S$ is called a Gibbs-preserving operation if the channel satisfies
\eq{
\Lambda(\tau_{\Gibbs,S})=\tau_{\Gibbs,S},
}
where $\tau_{\Gibbs,S} = e^{-\beta H_S}/\Tr(e^{-\beta H_S})$ is the Gibbs states. 
It is known that if states are block-diagonal in the energy basis (i.e., quasi-classical), then thermal operations and Gibbs-preserving operations have the same power of state transformations~\cite{HO13, Shiraishi2020twoconstructive}, i.e., for two quasi-classical states $\rho$ and $\rho'$ any $\ep>0$ there is a thermal operation $\Lambda_{\rm TO}$ satisfying $\|\Lambda_{\rm TO}(\rho)-\rho'\|_1<\ep$ if and only if there is a Gibbs-preserving operation $\Lambda_{\rm GPO}$ satisfying $\|\Lambda_{\rm GPO}(\rho)-\rho'\|_1<\ep$.
On the other hand, for general quantum states with quantum coherence, thermal operations is a strict subset of Gibbs-preserving operations due to the covariant property explained below~\cite{FOR15}.
Moreover, the coherence cost compensating this gap is shown to diverge in general~\cite{TT25}, which implies a marked discrepancy between these two classes in general quantum settings.

The other important class which thermal operations belong to is covariant operations.
A channel $\Lambda$ from $S$ to $S'$ is called a covariant operation if 
\bal
 \Lambda(e^{-iH_S t}\rho e^{iH_S t}) = e^{-iH_{S'}t}\Lambda(\rho) e^{iH_{S'} t}.
\eal
is satisfied for any $t$.
Note that we have an equivalent operational definition of covariant operations~\cite{marvian2020coherence}.
A channel $\Lambda$ from $S$ to $S'$ is a covariant operation if there is an external system $E$ with Hamiltonian $H_E$ and its block-diagonal state in the energy basis $\kappa_{\rm diag}$, and energy-conserving isometry $V_{SE\to S'E}$ with $V_{SE\to S'E}(H_S+H_E)=(H_{S'}+H_E)V_{SE\to S'E}$ satisfying 
\bal
 \Lambda(\rho) = \Tr_E\left[V_{SE\to S'E}(\rho\otimes \kappa_{\rm diag}) V^\dagger_{SE\to S'E}\right].
\eal
Thus, a covariant operation can be obtained from the definition of a thermal operation by replacing the Gibbs state $\tau_{\Gibbs,E}$ by a general classical state $\kappa_{\rm diag}$.

These two properties, the Gibbs-preserving property and the covariant property, are considered to be two key characterizations of thermal operations~\cite{Los-review}.
However, it is also revealed that the set of thermal operations is still a strict subset of the intersection of Gibbs-preserving operations and covariant operations, which is also called {\it enhanced thermal operations}, even in the approximate sense~\cite{DDH21}.
Namely, we have a thermal operation $\Lambda_{\rm TO}$ transforming $\rho$ to $\rho'$ such that there is no sequence of covariant Gibbs-preserving operations $\{ \Lambda_{{\rm covGPO},m}\}_m$ satisfying $\lim_{m\to \infty}\Lambda_{{\rm covGPO},m}(\rho)=\rho'$.
This result implies a (maybe subtle but) missing constraint in thermal operations other than the Gibbs-preserving property and the covariant property.

\bigskip

{\bf Caralytic framework} 

In a catalytic framework, we use an ancillary system which helps the state transformation, while the state of the ancillary system returns to its original state. 
This ancillary system is called {\it catalyst}, and this transformation is called a {\it catalytic transformation}.
We especially focus on the framework with a {\it correlated catalyst}, where the final state may have a correlation between the main system and the catalyst, while the reduced state on the catalyst should be exactly the same state as the initial catalyst state. 
In other words, a state $\rho$ can be transformed to a state $\rho'$ by a thermal operation with a correlated catalyst with a vanishing error if for any $\ep>0$ there exists a state $c$ on a catalyst $C$ and a thermal operation $\Lambda$ such that 
\bal
 \Lambda(\rho\otimes c) = \pi_{SC},\quad \|\Tr_C(\pi_{SC})-\rho'\|_1\leq \epsilon,\quad \Tr_S(\pi_{SC})=c.
\eal

Here, we allow a vanishing error in the system but do not allow any error in the catalyst in order to avoid {\it embezzlement}.
In fact, it is reported that if we allow a vanishing error in the catalyst, then any transformation becomes possible, at least in the quasi-classical regime~\cite{Bra15}, which spoils the second law of thermodynamics.
This is the reason why we conventionally prohibit any error in the catalyst.

\bigskip

{\bf Restriction on coherence manipulation} 

The covariant condition, which respects the preciousness of coherence, accompanies various restrictions on state transformations.
One basic property is the monotonic decrease of coherent modes.
Let $\{ \ket{E_i}\}$ be a set of energy eigenstates with energies $E_i$.
We define the {\it coherent modes} of $\rho$ by a set of differences between two eigenenergies with nonzero off-diagonal elements in terms of the energy eigenbasis:
\eq{
\calD(\rho):=\lset \Di_{ij} \sbar \Di_{ij}=E_i-E_j,\ (i,j)\in \calM(\rho)\rset,
}
where $\calM(\rho)$ is the set of integer pairs $(i,j)$ satisfying $\braket{E_i|\rho|E_j}\neq 0$.
It is shown that if a covariant operation $\Lambda$ transforms $\rho$ to $\rho'$, then $\calD(\rho')\subseteq \calD(\rho)$ holds~\cite{MS14}.
This means that the coherent modes monotonically decay through a covariant operation.

The above monotonicity is shown for single-shot covariant operations, and we may avoid this consequence in some enhanced frameworks including catalytic and asymptotic frameworks.
However, even these enhanced frameworks we still face harsh restrictions on possible conversions.

The first restriction is the {\it coherence no-broadcasting theorem}~\cite{MS19, LM19} on the catalytic framework, claiming that if $\rho$ is an incoherent state on $S$ (i.e., $\calD(\rho)=\{ 0\}$), then for any catalyst $C$ with state $c$ and a covariant operation $\Lambda: S\otimes C\to S'\otimes C$ satisfying
\eq{
\Lambda(\rho\otimes c)=\pi, \hspace{10pt}
\Tr_{S'}[\pi]=c,
}
then we find that $\Tr_C[\pi]$ is also incoherent.
This no-go theorem prohibits a transformation from an incoherent state to a coherent state by a covariant operation even with the help of a correlated catalyst.

Another restriction is the {\it coherence no distillation theorem}~\cite{marvian2020coherence} on the asymptotic framework, claiming that if $\rho$ is a full-rank state (i.e., its matrix representation $\rho_{ij}:=\braket{E_i|\rho|E_j}$ is a full-rank matrix) and $\rho'$ is a pure coherent state, then no covariant operation asymptotically transforms $\rho$ to $\rho'$ with a finite transformation rate.
Namely, for any  $r>0$ there is a sufficiently large $N$ and $\ep>0$ such that for any $n\geq N$ no covariant operation $\Lambda: S^{\otimes n}\to S'^{\otimes \lfloor rn\rfloor}$ satisfies $\| \Lambda(\rho^{\otimes n})-\rho'^{\otimes \lfloor rn\rfloor}\|_1<\ep$.
This no-go theorem essentially prohibits an asymptotic transformation to a pure maximally coherent state, which is usually set as a {\it golden unit} of a resource theory of U(1) asymmetry, with a meaningful transformation rate.

\bigskip

{\bf Asymptotic-catalytic correspondence} 

We first note the deep connection between catalytic conversions and {\it asymptotic conversions}~\cite{Duan2005multiple-copy,SS21,TS22,ST23,GKS24}.
In an asymptotic conversion, we consider conversions of multiple copies of $\rho$ to multiple copies of $\rho'$.
Specifically, we focus on the {\it asymptotic marginal conversion} with conversion rate arbitrarily close to 1 with a vanishing error.
In an asymptotic marginal conversion, we consider states on $S^{\otimes n}$ and convert $\rho^{\otimes n}$ to $\Xi$ on system $S^{\otimes n'}$ with $n'=n-o(n)$ satisfying 
\eq{
\|\rho'-\Tr_{\backslash i}[\Xi]\|_1<\ep
}
for any given $\ep>0$ for $i=1,\dots,n'$.
We emphasize that the definition of the error of the asymptotic marginal conversion is different from that of the conventional asymptotic conversion, which is measured by 
\eq{
\| \rho'^{\otimes n'}-\Xi\|_1<\ep.
}
An asymptotic conversion is always an asymptotic marginal conversion, while the opposite is not always true, since the asymptotic marginal conversion does not care about correlations among copies.
Although this correlation can be set arbitrarily small, this difference leads to substantially different consequences.
In particular, in order to avoid the restriction of no distillation theorem~\cite{marvian2020coherence}, we should employ this tricky setting, the asymptotic marginal conversion, for our proof.

Below we show that the presence of an asymptotic marginal conversion with rate arbitrarily close to 1 implies a correlated-catalytic conversion.
Although a general method for constructing a correlated catalytic transformation from the asymptotic transformation is well known~\cite{Duan2005multiple-copy,SS21,ST23,fang2024surpassingfundamentallimitsdistillation}, we need to slightly modify the protocol in order to apply to our setting, since our asymptotic transformation involves different numbers of copies between the input and output states.
We address this by appropriately placing thermal Gibbs states in our catalyst as discussed in Ref.~\cite{fang2024surpassingfundamentallimitsdistillation}.

To this end, we modify the above thermal operation $\Lambda$ that comes with input and output systems of different sizes to the one with the same input and output systems by attaching the $o(n)$ copies of thermal Gibbs state $\tau_{\Gibbs}^{\otimes o(n) }$ to the output system, which we denote by $\Lambda'$.
Namely, expressing $\Xi\coloneqq \Lambda(\rho^{\otimes n})$, the output from $\Lambda'$ denoted by $\Xi'$ is given by $\Xi'=\Lambda'(\rho^{\otimes n}) = \Xi\otimes \tau_{\Gibbs}^{\otimes o(n)}$.

Now we construct the catalyst of the single-shot correlated-catalytic transformation from $\rho$ to $\rho'$.
Let $R$ be a register (label) system spanned by $\{\ket{i}\}_{i=1}^{n}$ with a trivial Hamiltonian $H_R=I$, where $I$ is an identity operator.
Then we set catalyst $C$ as $S^{\otimes n-1}\otimes R$ and its state $c$ as
\eq{
c=\frac1n \sum_{k=1}^n \rho^{\otimes k-1}\otimes \Xi'_{n-k}\otimes \dm{k},
}
where $\Xi'_i$ is the reduced state of $\Xi'$ to the first $i$-th copies of $S$, that is, $\Xi'_i:=\Tr_{\overline{S^{\otimes i}}}[\Xi']$.
The initial state of the composite system $S\otimes C$ reads
\eq{
\rho\otimes c=\frac1n \sum_{k=1}^n \rho^{\otimes k}\otimes \Xi'_{n-k}\otimes \dm{k}.
}
Here, the last $n-1$ copies of $S$ are assigned to $C$.
We apply $\Lambda$ on $S\otimes C=S^{\otimes n}\otimes R$ if the label is $\ket{n}$.
Then, the obtained state with proper relabeling indeed fulfills the desired correlated catalytic conversion.

Thus, it suffices to construct the asymptotic marginal conversion from $\rho$ to $\rho'$ with rate arbitrarily close to 1 with arbitrary accuracy.

\bigskip
{\bf Outline of the asymptotic marginal conversion protocol}

We now outline the steps of our asymptotic marginal conversion protocol:
\be{
\item[(0)] We first decompose $n$ copies of $\rho$ into $\nu$ blocks of $\mu$ copies of $\rho$ and an additional sublinear ($o(n)$) copies of $\rho$ (i.e., $n=\mu\nu+o(n)$).
\item We apply a dephasing channel to each $\mu$ copies $\rho^{\otimes \mu}$ so that all the off-diagonal elements vanish.
The state becomes an energy-diagonal state $\rho_{\cl,\mu}\in S^{\otimes  \mu}$.
\item We convert energy-diagonal state $\rho_{\cl,\mu}^{\otimes \nu}$ to another energy-diagonal state $\rho'^{\otimes \nu}_{\cl,\mu}$ with the same number of copies by a thermal operation. Here, $\rho'_{\cl,\mu}$ is a state with the same spectrum as $\rho'^{\otimes  \mu}$ and satisfies $\frac{1}{ \mu}F(\rho'_{\cl,\mu})\simeq \frac{1}{ \mu}F(\rho'^{\otimes  \mu})$ but $F(\rho'_{\cl,\mu})> F(\rho'^{\otimes  \mu})$.
\item In parallel with steps (i) and (ii), we distill a state $\eta$ with broad coherence from $o(n)$ copies of $\rho$.
\item Using $\eta$ repeatedly as catalytic coherence, we convert energy-diagonal state $\rho'_{\cl,\mu}$ to coherent state $\rho'^{\otimes  \mu}$.
}
Below we explain these steps in detail.

\bigskip
{\bf Reduction to energy-diagonal states} 

In our asymptotic marginal conversion protocol, we first decompose $n$ copies of $\rho$ into $\nu$ sets of $\mu$ copies of $\rho$ and sublinear ($o(n)$) copies of $\rho$ (step (0)).
The latter sublinear number of copies are used at step (3).
In step (1), we first apply a dephasing channel known as pinching to each $\mu$ copies, $\rho^{\otimes \mu}$, so that all the off-diagonal elements vanish and the state becomes an energy-diagonal state $\rho_{\cl,\mu}\in S^{\otimes \mu}$.
It is known that the decrease of free energy density through pinching can be arbitrarily small by setting $\mu$ sufficiently large~\cite{Hayashi2002optimal}:
\eq{
\frac1\mu F(\rho^{\otimes \mu})\simeq\frac1\mu F(\rho_{\cl,\mu}).
}
Through step (1), we obtain $\nu$ copies of $\rho_{\cl,\mu}$.

The state is now energy-diagonal, which enables us to disregard all unwanted effects from quantum coherence.
We here introduce the incoherent counterpart of $\mu$ copies of the final state $\rho'^{\otimes \mu}$ denoted by $\rho_{\cl,\mu}'\in S^{\otimes \mu}$.
We require $\rho_{\cl,\mu}'$ to have the same spectrum as $\rho'^{\otimes \mu}$ and to satisfy 
\eq{
\frac1\mu F(\rho_{\cl,\mu}')\simeq \frac1\mu F(\rho'^{\otimes \mu}) \ {\rm  but} \ F(\rho_{\cl,\mu}')> F(\rho'^{\otimes \mu}).
}
Recall that the thermal operation and the Gibbs-preserving operation have the same state convertibility for energy-diagonal states, and the necessary and sufficient condition of state conversions by a Gibbs-preserving operation in the asymptotic regime has already been established as the second law~\cite{Bra13, Fai19}.
Hence, if $F(\rho_{\cl,\mu})\geq F(\rho_{\cl,\mu}')$ holds, there exists a thermal operation converting $\rho_{cl,\mu}^{\otimes \nu}\to \rho_{\cl,\mu}'^{\otimes \nu}$ with a sufficiently large $\nu$.
Through this procedure, we obtain states $\nu$ copies of $\rho_{\cl,\mu}'$ with arbitrary accuracy.

\bigskip
{\bf Catalytic coherence} 

We finally convert incoherent state $\rho_{\cl,\mu}'$ to the desired coherent state $\rho'^{\otimes \mu}$.
To this end, we employ the technique of {\it catalytic coherence}~\cite{Abe14}.
This technique allows us to implement arbitrary unitary operations including coherence-generating channels by using only energy-conserving unitaries.
To this end, we introduce an external system $E$ prepared in a broadly coherent state $\eta$, which absorbs the backaction imposed by energy conservation.
Thanks to the broad coherence of $\eta$, the coherent aspects of unitary operation are faithfully implemented by an energy-conserving operation.

To grasp the idea of catalytic coherence, let us consider as an example the implementation of the Hadamard gate $H=\ketbra{+}{0}+\ketbra{-}{1}$ where $\ket{\pm}=(\ket{0}\pm\ket{1})/\sqrt{2}$.
Let $\Di^{\pm}$ be the shift operator shifting the energy of the external system by $\pm D$, where $D$ is the energy difference from $\ket{0}$ to $\ket{1}$.
We implement the Hadamard gate on $S$ by the following energy-conserving unitary on $SE$:
\eq{
U= \frac{1}{\sqrt{2}}I+\frac{1}{\sqrt{2}}\ketbra{0}{1}\otimes \Di^++\frac{1}{\sqrt{2}}\ketbra{1}{0}\otimes \Di^-. \nt
}
If $\eta$ has broad coherence (e.g., $\ket{\eta}=\frac{1}{\sqrt{n}}\sum_{i=1}^n \ket{i}$), two states $\eta$ and $\Di^{\pm}\eta\Di^{\mp}$ have large overlap, suggesting that the Hadamard gate, which is a coherence-generating channel, is implemented accurately: 
\eq{
\Tr_E[U(\rho\otimes \eta)U^\dagger]\simeq H\rho H^\dagger.
}
Notably, the ability of $\eta$ for this implementation does not decrease through this procedure, and the external system can be reused in the next emulation.
Namely, we can employ $\eta'=\Tr_S[U(\rho\otimes \eta)U^\dagger]$ also as the initial state of the external system $E$ instead of $\eta$.

\bigskip
{\bf Random-walk argument for coherence recovery}

Now we apply this implementation technique to the last step of our asymptotic marginal conversion protocol.
We firstly distill a coherent resource state $\eta$ from the remaining $o(n)$ copies of $\rho$ (step (3)).
Then we implement a unitary operation converting $\rho_{\cl,\mu}'$ into $\rho'^{\otimes \mu}$ on all $\nu$ sets repeatedly by reusing $\eta$ as catalytic coherence (step (4)).
Owing to the reusability of the catalytic coherence, a single copy of $\eta$ suffices to implement the unitary operation by $\nu=O(n)$ times.

However, an additional argument is needed to establish an accurate implementation by the catalytic coherence protocol. 
The problem is that, the distribution in $E$ over the energy levels spreads, and its end will reach the ground state.
We fail to continue repeated implementations if a state touches the ground state, since we cannot lower the energy in this case.
Thus, to demonstrate accurate repeated unitary implementation, we need to show that the probability of this hitting event of the ground state can be made arbitrarily small.

To this end, we map the dynamics of the coherent resource state onto a classical random walk.
For simplicity, suppose that $\mu=1$ and the eigenenergies of the system $S$ are given by $E_n=nE$, with level spacing $E$.
Now, we construct the random walk on the ladder of energy levels corresponding to the unitary $U$ with $U_{ij}=\braket{E_i|U|E_j}$ as follows.
We expand the quasi-classical state after step (2) as $\rho_{\cl,\mu=1}'=\sum_n p'_n\ket{E_m}\bra{E_n}$.
Then, the probability, denoted by $P(c)$, that the random walker (the coherent resource state) moves upward by $c$ levels in a single step is given by
\eq{
P(c)=\sum_{(i,j), j-i=c} \abs{U_{ij}}^2 p'_j.
}
Although the actual quantum state in our quantum composite system $SE$ is a quantum superposition of these dynamics, by tracing out the main system $S$ the dynamics of the coherent resource system is properly described by this classical picture.

Importantly, if $F(\rho_{\cl,\mu=1}')>F(\rho')$ is satisfied, which is fulfilled in the case of free energy ordering $F(\rho)\geq F(\rho')$, this random walk is biased to the upward direction:
\eq{
\sum_c cP(c)>0.
}
Here, in the case of $F(\rho)=F(\rho')$ we slightly modify the state $\rho'$ to $\rho''$ within the permitted error $\ep$ such that $F(\rho)>F(\rho'')$ and rename $\rho''$ as $\rho'$.
This ensures $F(\rho)> F(\rho')$ in the case of the free energy ordering, leading to the upward bias.
It is well known that with high probability a random walk with drift do not hit a site at finite distance against the drift from the initial position.
Adopting this result, we conclude that pumping up the coherent resource state by only finite ($O(1)$) energy level suffices to avoid the hitting event of the ground state with arbitrarily high probability, which guarantees an accurate implementation of the desired unitary operations.

\let\oldaddcontentsline\addcontentsline
\renewcommand{\addcontentsline}[3]{}

\bibliographystyle{apsrmp4-2}
\bibliography{myref}

\begin{thebibliography}{54}%
\makeatletter
\providecommand \@ifxundefined [1]{%
 \@ifx{#1\undefined}
}%
\providecommand \@ifnum [1]{%
 \ifnum #1\expandafter \@firstoftwo
 \else \expandafter \@secondoftwo
 \fi
}%
\providecommand \@ifx [1]{%
 \ifx #1\expandafter \@firstoftwo
 \else \expandafter \@secondoftwo
 \fi
}%
\providecommand \natexlab [1]{#1}%
\providecommand \enquote  [1]{``#1''}%
\providecommand \bibnamefont  [1]{#1}%
\providecommand \bibfnamefont [1]{#1}%
\providecommand \citenamefont [1]{#1}%
\providecommand \href@noop [0]{\@secondoftwo}%
\providecommand \href [0]{\begingroup \@sanitize@url \@href}%
\providecommand \@href[1]{\@@startlink{#1}\@@href}%
\providecommand \@@href[1]{\endgroup#1\@@endlink}%
\providecommand \@sanitize@url [0]{\catcode `\\12\catcode `\$12\catcode
  `\&12\catcode `\#12\catcode `\^12\catcode `\_12\catcode `\%12\relax}%
\providecommand \@@startlink[1]{}%
\providecommand \@@endlink[0]{}%
\providecommand \url  [0]{\begingroup\@sanitize@url \@url }%
\providecommand \@url [1]{\endgroup\@href {#1}{\urlprefix }}%
\providecommand \urlprefix  [0]{URL }%
\providecommand \Eprint [0]{\href }%
\providecommand \doibase [0]{https://doi.org/}%
\providecommand \selectlanguage [0]{\@gobble}%
\providecommand \bibinfo  [0]{\@secondoftwo}%
\providecommand \bibfield  [0]{\@secondoftwo}%
\providecommand \translation [1]{[#1]}%
\providecommand \BibitemOpen [0]{}%
\providecommand \bibitemStop [0]{}%
\providecommand \bibitemNoStop [0]{.\EOS\space}%
\providecommand \EOS [0]{\spacefactor3000\relax}%
\providecommand \BibitemShut  [1]{\csname bibitem#1\endcsname}%
\let\auto@bib@innerbib\@empty
\bibitem [{\citenamefont {Lieb}\ and\ \citenamefont {Yngvason}(1999)}]{LY99}%
  \BibitemOpen
  \bibfield  {author} {\bibinfo {author} {\bibfnamefont {E.~H.}\ \bibnamefont
  {Lieb}}\ and\ \bibinfo {author} {\bibfnamefont {J.}~\bibnamefont
  {Yngvason}},\ }\bibfield  {title} {\bibinfo {title} {The physics and
  mathematics of the second law of thermodynamics},\ }\href
  {https://doi.org/https://doi.org/10.1016/S0370-1573(99)00029-0} {\bibfield
  {journal} {\bibinfo  {journal} {Phys. Rep.}\ }\textbf {\bibinfo {volume}
  {314}},\ \bibinfo {pages} {669} (\bibinfo {year} {1999})}\BibitemShut
  {NoStop}%
\bibitem [{\citenamefont {Gour}\ \emph {et~al.}(2015)\citenamefont {Gour},
  \citenamefont {Müller}, \citenamefont {Narasimhachar}, \citenamefont
  {Spekkens},\ and\ \citenamefont {{Yunger Halpern}}}]{Gour-review}%
  \BibitemOpen
  \bibfield  {author} {\bibinfo {author} {\bibfnamefont {G.}~\bibnamefont
  {Gour}}, \bibinfo {author} {\bibfnamefont {M.~P.}\ \bibnamefont {Müller}},
  \bibinfo {author} {\bibfnamefont {V.}~\bibnamefont {Narasimhachar}}, \bibinfo
  {author} {\bibfnamefont {R.~W.}\ \bibnamefont {Spekkens}},\ and\ \bibinfo
  {author} {\bibfnamefont {N.}~\bibnamefont {{Yunger Halpern}}},\ }\bibfield
  {title} {\bibinfo {title} {The resource theory of informational
  nonequilibrium in thermodynamics},\ }\href
  {https://doi.org/https://doi.org/10.1016/j.physrep.2015.04.003} {\bibfield
  {journal} {\bibinfo  {journal} {Phys Rep.}\ }\textbf {\bibinfo {volume}
  {583}},\ \bibinfo {pages} {1} (\bibinfo {year} {2015})}\BibitemShut {NoStop}%
\bibitem [{\citenamefont {Lostaglio}(2019)}]{Los-review}%
  \BibitemOpen
  \bibfield  {author} {\bibinfo {author} {\bibfnamefont {M.}~\bibnamefont
  {Lostaglio}},\ }\bibfield  {title} {\bibinfo {title} {An introductory review
  of the resource theory approach to thermodynamics},\ }\href
  {https://doi.org/10.1088/1361-6633/ab46e5} {\bibfield  {journal} {\bibinfo
  {journal} {Rep. Prog. Phys.}\ }\textbf {\bibinfo {volume} {82}},\ \bibinfo
  {pages} {114001} (\bibinfo {year} {2019})}\BibitemShut {NoStop}%
\bibitem [{\citenamefont {Sagawa}(2022)}]{Sag-review}%
  \BibitemOpen
  \bibfield  {author} {\bibinfo {author} {\bibfnamefont {T.}~\bibnamefont
  {Sagawa}},\ } {\emph {\bibinfo {title} {Entropy, divergence, and
  majorization in classical and quantum thermodynamics}}},\ Vol.~\bibinfo
  {volume} {16}\ (\bibinfo  {publisher} {Springer Nature},\ \bibinfo {year}
  {2022})\BibitemShut {NoStop}%
\bibitem [{\citenamefont {Janzing}\ \emph {et~al.}(2000)\citenamefont
  {Janzing}, \citenamefont {Wocjan}, \citenamefont {Zeier}, \citenamefont
  {Geiss},\ and\ \citenamefont {Beth}}]{Jan00}%
  \BibitemOpen
  \bibfield  {author} {\bibinfo {author} {\bibfnamefont {D.}~\bibnamefont
  {Janzing}}, \bibinfo {author} {\bibfnamefont {P.}~\bibnamefont {Wocjan}},
  \bibinfo {author} {\bibfnamefont {R.}~\bibnamefont {Zeier}}, \bibinfo
  {author} {\bibfnamefont {R.}~\bibnamefont {Geiss}},\ and\ \bibinfo {author}
  {\bibfnamefont {T.}~\bibnamefont {Beth}},\ }\bibfield  {title} {\bibinfo
  {title} {Thermodynamic {Cost} of {Reliability} and {Low} {Temperatures}:
  {Tightening} {Landauer}'s {Principle} and the {Second} {Law}},\ }\href
  {https://doi.org/10.1023/A:1026422630734} {\bibfield  {journal} {\bibinfo
  {journal} {Int. J. Theor. Phys.}\ }\textbf {\bibinfo {volume} {39}},\
  \bibinfo {pages} {2717} (\bibinfo {year} {2000})}\BibitemShut {NoStop}%
\bibitem [{\citenamefont {Horodecki}\ and\ \citenamefont
  {Oppenheim}(2013)}]{HO13}%
  \BibitemOpen
  \bibfield  {author} {\bibinfo {author} {\bibfnamefont {M.}~\bibnamefont
  {Horodecki}}\ and\ \bibinfo {author} {\bibfnamefont {J.}~\bibnamefont
  {Oppenheim}},\ }\bibfield  {title} {\bibinfo {title} {Fundamental limitations
  for quantum and nanoscale thermodynamics},\ }\href
  {https://doi.org/10.1038/ncomms3059} {\bibfield  {journal} {\bibinfo
  {journal} {Nat. Commun.}\ }\textbf {\bibinfo {volume} {4}},\ \bibinfo {pages}
  {2059} (\bibinfo {year} {2013})}\BibitemShut {NoStop}%
\bibitem [{\citenamefont {Brand\~ao}\ \emph {et~al.}(2013)\citenamefont
  {Brand\~ao}, \citenamefont {Horodecki}, \citenamefont {Oppenheim},
  \citenamefont {Renes},\ and\ \citenamefont {Spekkens}}]{Bra13}%
  \BibitemOpen
  \bibfield  {author} {\bibinfo {author} {\bibfnamefont {F.~G. S.~L.}\
  \bibnamefont {Brand\~ao}}, \bibinfo {author} {\bibfnamefont {M.}~\bibnamefont
  {Horodecki}}, \bibinfo {author} {\bibfnamefont {J.}~\bibnamefont
  {Oppenheim}}, \bibinfo {author} {\bibfnamefont {J.~M.}\ \bibnamefont
  {Renes}},\ and\ \bibinfo {author} {\bibfnamefont {R.~W.}\ \bibnamefont
  {Spekkens}},\ }\bibfield  {title} {\bibinfo {title} {Resource theory of
  quantum states out of thermal equilibrium},\ }\href
  {https://doi.org/10.1103/PhysRevLett.111.250404} {\bibfield  {journal}
  {\bibinfo  {journal} {Phys. Rev. Lett.}\ }\textbf {\bibinfo {volume} {111}},\
  \bibinfo {pages} {250404} (\bibinfo {year} {2013})}\BibitemShut {NoStop}%
\bibitem [{\citenamefont {Gour}\ \emph {et~al.}(2018)\citenamefont {Gour},
  \citenamefont {Jennings}, \citenamefont {Buscemi}, \citenamefont {Duan},\
  and\ \citenamefont {Marvian}}]{Gou18}%
  \BibitemOpen
  \bibfield  {author} {\bibinfo {author} {\bibfnamefont {G.}~\bibnamefont
  {Gour}}, \bibinfo {author} {\bibfnamefont {D.}~\bibnamefont {Jennings}},
  \bibinfo {author} {\bibfnamefont {F.}~\bibnamefont {Buscemi}}, \bibinfo
  {author} {\bibfnamefont {R.}~\bibnamefont {Duan}},\ and\ \bibinfo {author}
  {\bibfnamefont {I.}~\bibnamefont {Marvian}},\ }\bibfield  {title} {\bibinfo
  {title} {Quantum majorization and a complete set of entropic conditions for
  quantum thermodynamics},\ }\href {https://doi.org/10.1038/s41467-018-06261-7}
  {\bibfield  {journal} {\bibinfo  {journal} {Nat. Commun.}\ }\textbf {\bibinfo
  {volume} {9}},\ \bibinfo {pages} {5352} (\bibinfo {year} {2018})}\BibitemShut
  {NoStop}%
\bibitem [{\citenamefont {Brandão}\ \emph {et~al.}(2015)\citenamefont
  {Brandão}, \citenamefont {Horodecki}, \citenamefont {Ng}, \citenamefont
  {Oppenheim},\ and\ \citenamefont {Wehner}}]{Bra15}%
  \BibitemOpen
  \bibfield  {author} {\bibinfo {author} {\bibfnamefont {F.}~\bibnamefont
  {Brandão}}, \bibinfo {author} {\bibfnamefont {M.}~\bibnamefont {Horodecki}},
  \bibinfo {author} {\bibfnamefont {N.}~\bibnamefont {Ng}}, \bibinfo {author}
  {\bibfnamefont {J.}~\bibnamefont {Oppenheim}},\ and\ \bibinfo {author}
  {\bibfnamefont {S.}~\bibnamefont {Wehner}},\ }\bibfield  {title} {\bibinfo
  {title} {The second laws of quantum thermodynamics},\ }\href
  {https://doi.org/10.1073/pnas.1411728112} {\bibfield  {journal} {\bibinfo
  {journal} {Proc. Nat. Acad. Sci.}\ }\textbf {\bibinfo {volume} {112}},\
  \bibinfo {pages} {3275} (\bibinfo {year} {2015})}\BibitemShut {NoStop}%
\bibitem [{\citenamefont {M\"uller}(2018)}]{Mul18}%
  \BibitemOpen
  \bibfield  {author} {\bibinfo {author} {\bibfnamefont {M.~P.}\ \bibnamefont
  {M\"uller}},\ }\bibfield  {title} {\bibinfo {title} {Correlating thermal
  machines and the second law at the nanoscale},\ }\href
  {https://doi.org/10.1103/PhysRevX.8.041051} {\bibfield  {journal} {\bibinfo
  {journal} {Phys. Rev. X}\ }\textbf {\bibinfo {volume} {8}},\ \bibinfo {pages}
  {041051} (\bibinfo {year} {2018})}\BibitemShut {NoStop}%
\bibitem [{\citenamefont {Datta}\ \emph {et~al.}(2023)\citenamefont {Datta},
  \citenamefont {Varun~Kondra}, \citenamefont {Miller},\ and\ \citenamefont
  {Streltsov}}]{Dat22}%
  \BibitemOpen
  \bibfield  {author} {\bibinfo {author} {\bibfnamefont {C.}~\bibnamefont
  {Datta}}, \bibinfo {author} {\bibfnamefont {T.}~\bibnamefont {Varun~Kondra}},
  \bibinfo {author} {\bibfnamefont {M.}~\bibnamefont {Miller}},\ and\ \bibinfo
  {author} {\bibfnamefont {A.}~\bibnamefont {Streltsov}},\ }\bibfield  {title}
  {\bibinfo {title} {Catalysis of entanglement and other quantum resources},\
  }\href {https://doi.org/10.1088/1361-6633/acfbec} {\bibfield  {journal}
  {\bibinfo  {journal} {Rep. Prog. Phys.}\ }\textbf {\bibinfo {volume} {86}},\
  \bibinfo {pages} {116002} (\bibinfo {year} {2023})}\BibitemShut {NoStop}%
\bibitem [{\citenamefont {Lipka-Bartosik}\ \emph {et~al.}(2024)\citenamefont
  {Lipka-Bartosik}, \citenamefont {Wilming},\ and\ \citenamefont {Ng}}]{BWN23}%
  \BibitemOpen
  \bibfield  {author} {\bibinfo {author} {\bibfnamefont {P.}~\bibnamefont
  {Lipka-Bartosik}}, \bibinfo {author} {\bibfnamefont {H.}~\bibnamefont
  {Wilming}},\ and\ \bibinfo {author} {\bibfnamefont {N.~H.~Y.}\ \bibnamefont
  {Ng}},\ }\bibfield  {title} {\bibinfo {title} {Catalysis in quantum
  information theory},\ }\href {https://doi.org/10.1103/RevModPhys.96.025005}
  {\bibfield  {journal} {\bibinfo  {journal} {Rev. Mod. Phys.}\ }\textbf
  {\bibinfo {volume} {96}},\ \bibinfo {pages} {025005} (\bibinfo {year}
  {2024})}\BibitemShut {NoStop}%
\bibitem [{\citenamefont {Shiraishi}(2025{\natexlab{a}})}]{Shi25}%
  \BibitemOpen
  \bibfield  {author} {\bibinfo {author} {\bibfnamefont {N.}~\bibnamefont
  {Shiraishi}},\ }\bibfield  {title} {\bibinfo {title} {Correlated catalyst in
  quantum thermodynamics},\ }\href {https://doi.org/10.1209/0295-5075/ad8f5f}
  {\bibfield  {journal} {\bibinfo  {journal} {Europhys. Lett.}\ }\textbf
  {\bibinfo {volume} {149}},\ \bibinfo {pages} {18001} (\bibinfo {year}
  {2025}{\natexlab{a}})}\BibitemShut {NoStop}%
\bibitem [{\citenamefont {Marvian~Mashhad}(2012)}]{Mar-thesis}%
  \BibitemOpen
  \bibfield  {author} {\bibinfo {author} {\bibfnamefont {I.}~\bibnamefont
  {Marvian~Mashhad}},\ }\emph {\bibinfo {title} {Symmetry, asymmetry and
  quantum information}},\ {Ph.D. thesis} (\bibinfo {year}
  {2012})\BibitemShut {NoStop}%
\bibitem [{\citenamefont {Faist}\ \emph {et~al.}(2015)\citenamefont {Faist},
  \citenamefont {Oppenheim},\ and\ \citenamefont {Renner}}]{FOR15}%
  \BibitemOpen
  \bibfield  {author} {\bibinfo {author} {\bibfnamefont {P.}~\bibnamefont
  {Faist}}, \bibinfo {author} {\bibfnamefont {J.}~\bibnamefont {Oppenheim}},\
  and\ \bibinfo {author} {\bibfnamefont {R.}~\bibnamefont {Renner}},\
  }\bibfield  {title} {\bibinfo {title} {Gibbs-preserving maps outperform
  thermal operations in the quantum regime},\ }\href
  {https://doi.org/10.1088/1367-2630/17/4/043003} {\bibfield  {journal}
  {\bibinfo  {journal} {New J. Phys.}\ }\textbf {\bibinfo {volume} {17}},\
  \bibinfo {pages} {043003} (\bibinfo {year} {2015})}\BibitemShut {NoStop}%
\bibitem [{\citenamefont {Tajima}\ and\ \citenamefont {Takagi}(2025)}]{TT25}%
  \BibitemOpen
  \bibfield  {author} {\bibinfo {author} {\bibfnamefont {H.}~\bibnamefont
  {Tajima}}\ and\ \bibinfo {author} {\bibfnamefont {R.}~\bibnamefont
  {Takagi}},\ }\bibfield  {title} {\bibinfo {title} {Gibbs-preserving
  operations requiring infinite amount of quantum coherence},\ }\href
  {https://doi.org/10.1103/PhysRevLett.134.170201} {\bibfield  {journal}
  {\bibinfo  {journal} {Phys. Rev. Lett.}\ }\textbf {\bibinfo {volume} {134}},\
  \bibinfo {pages} {170201} (\bibinfo {year} {2025})}\BibitemShut {NoStop}%
\bibitem [{\citenamefont {Narasimhachar}\ and\ \citenamefont
  {Gour}(2015)}]{NG15}%
  \BibitemOpen
  \bibfield  {author} {\bibinfo {author} {\bibfnamefont {V.}~\bibnamefont
  {Narasimhachar}}\ and\ \bibinfo {author} {\bibfnamefont {G.}~\bibnamefont
  {Gour}},\ }\bibfield  {title} {\bibinfo {title} {Low-temperature
  thermodynamics with quantum coherence},\ }\href
  {https://doi.org/10.1038/ncomms8689} {\bibfield  {journal} {\bibinfo
  {journal} {Nat. Commun.}\ }\textbf {\bibinfo {volume} {6}},\ \bibinfo {pages}
  {7689} (\bibinfo {year} {2015})}\BibitemShut {NoStop}%
\bibitem [{\citenamefont {\ifmmode \acute{C}\else
  \'{C}\fi{}wikli\ifmmode~\acute{n}\else \'{n}\fi{}ski}\ \emph
  {et~al.}(2015)\citenamefont {\ifmmode \acute{C}\else
  \'{C}\fi{}wikli\ifmmode~\acute{n}\else \'{n}\fi{}ski}, \citenamefont
  {Studzi\ifmmode~\acute{n}\else \'{n}\fi{}ski}, \citenamefont {Horodecki},\
  and\ \citenamefont {Oppenheim}}]{Cwi15}%
  \BibitemOpen
  \bibfield  {author} {\bibinfo {author} {\bibfnamefont {P.}~\bibnamefont
  {\ifmmode \acute{C}\else \'{C}\fi{}wikli\ifmmode~\acute{n}\else
  \'{n}\fi{}ski}}, \bibinfo {author} {\bibfnamefont {M.}~\bibnamefont
  {Studzi\ifmmode~\acute{n}\else \'{n}\fi{}ski}}, \bibinfo {author}
  {\bibfnamefont {M.}~\bibnamefont {Horodecki}},\ and\ \bibinfo {author}
  {\bibfnamefont {J.}~\bibnamefont {Oppenheim}},\ }\bibfield  {title} {\bibinfo
  {title} {Limitations on the evolution of quantum coherences: Towards fully
  quantum second laws of thermodynamics},\ }\href
  {https://doi.org/10.1103/PhysRevLett.115.210403} {\bibfield  {journal}
  {\bibinfo  {journal} {Phys. Rev. Lett.}\ }\textbf {\bibinfo {volume} {115}},\
  \bibinfo {pages} {210403} (\bibinfo {year} {2015})}\BibitemShut {NoStop}%
\bibitem [{\citenamefont {Lostaglio}\ \emph {et~al.}(2015)\citenamefont
  {Lostaglio}, \citenamefont {Korzekwa}, \citenamefont {Jennings},\ and\
  \citenamefont {Rudolph}}]{Los15}%
  \BibitemOpen
  \bibfield  {author} {\bibinfo {author} {\bibfnamefont {M.}~\bibnamefont
  {Lostaglio}}, \bibinfo {author} {\bibfnamefont {K.}~\bibnamefont {Korzekwa}},
  \bibinfo {author} {\bibfnamefont {D.}~\bibnamefont {Jennings}},\ and\
  \bibinfo {author} {\bibfnamefont {T.}~\bibnamefont {Rudolph}},\ }\bibfield
  {title} {\bibinfo {title} {Quantum coherence, time-translation symmetry, and
  thermodynamics},\ }\href {https://doi.org/10.1103/PhysRevX.5.021001}
  {\bibfield  {journal} {\bibinfo  {journal} {Phys. Rev. X}\ }\textbf {\bibinfo
  {volume} {5}},\ \bibinfo {pages} {021001} (\bibinfo {year}
  {2015})}\BibitemShut {NoStop}%
\bibitem [{\citenamefont {Lostaglio}\ and\ \citenamefont
  {M\"uller}(2019)}]{LM19}%
  \BibitemOpen
  \bibfield  {author} {\bibinfo {author} {\bibfnamefont {M.}~\bibnamefont
  {Lostaglio}}\ and\ \bibinfo {author} {\bibfnamefont {M.~P.}\ \bibnamefont
  {M\"uller}},\ }\bibfield  {title} {\bibinfo {title} {Coherence and asymmetry
  cannot be broadcast},\ }\href
  {https://doi.org/10.1103/PhysRevLett.123.020403} {\bibfield  {journal}
  {\bibinfo  {journal} {Phys. Rev. Lett.}\ }\textbf {\bibinfo {volume} {123}},\
  \bibinfo {pages} {020403} (\bibinfo {year} {2019})}\BibitemShut {NoStop}%
\bibitem [{\citenamefont {Marvian}\ and\ \citenamefont
  {Spekkens}(2019)}]{MS19}%
  \BibitemOpen
  \bibfield  {author} {\bibinfo {author} {\bibfnamefont {I.}~\bibnamefont
  {Marvian}}\ and\ \bibinfo {author} {\bibfnamefont {R.~W.}\ \bibnamefont
  {Spekkens}},\ }\bibfield  {title} {\bibinfo {title} {No-broadcasting theorem
  for quantum asymmetry and coherence and a trade-off relation for approximate
  broadcasting},\ }\href {https://doi.org/10.1103/PhysRevLett.123.020404}
  {\bibfield  {journal} {\bibinfo  {journal} {Phys. Rev. Lett.}\ }\textbf
  {\bibinfo {volume} {123}},\ \bibinfo {pages} {020404} (\bibinfo {year}
  {2019})}\BibitemShut {NoStop}%
\bibitem [{\citenamefont {Skrzypczyk}\ \emph {et~al.}(2014)\citenamefont
  {Skrzypczyk}, \citenamefont {Short},\ and\ \citenamefont {Popescu}}]{SSP14}%
  \BibitemOpen
  \bibfield  {author} {\bibinfo {author} {\bibfnamefont {P.}~\bibnamefont
  {Skrzypczyk}}, \bibinfo {author} {\bibfnamefont {A.~J.}\ \bibnamefont
  {Short}},\ and\ \bibinfo {author} {\bibfnamefont {S.}~\bibnamefont
  {Popescu}},\ }\bibfield  {title} {\bibinfo {title} {Work extraction and
  thermodynamics for individual quantum systems},\ }\href
  {https://doi.org/10.1038/ncomms5185} {\bibfield  {journal} {\bibinfo
  {journal} {Nat. Commun.}\ }\textbf {\bibinfo {volume} {5}},\ \bibinfo {pages}
  {4185} (\bibinfo {year} {2014})}\BibitemShut {NoStop}%
\bibitem [{\citenamefont {Faist}\ \emph {et~al.}(2019)\citenamefont {Faist},
  \citenamefont {Sagawa}, \citenamefont {Kato}, \citenamefont {Nagaoka},\ and\
  \citenamefont {Brand\~ao}}]{Fai19}%
  \BibitemOpen
  \bibfield  {author} {\bibinfo {author} {\bibfnamefont {P.}~\bibnamefont
  {Faist}}, \bibinfo {author} {\bibfnamefont {T.}~\bibnamefont {Sagawa}},
  \bibinfo {author} {\bibfnamefont {K.}~\bibnamefont {Kato}}, \bibinfo {author}
  {\bibfnamefont {H.}~\bibnamefont {Nagaoka}},\ and\ \bibinfo {author}
  {\bibfnamefont {F.~G. S.~L.}\ \bibnamefont {Brand\~ao}},\ }\bibfield  {title}
  {\bibinfo {title} {Macroscopic thermodynamic reversibility in quantum
  many-body systems},\ }\href {https://doi.org/10.1103/PhysRevLett.123.250601}
  {\bibfield  {journal} {\bibinfo  {journal} {Phys. Rev. Lett.}\ }\textbf
  {\bibinfo {volume} {123}},\ \bibinfo {pages} {250601} (\bibinfo {year}
  {2019})}\BibitemShut {NoStop}%
\bibitem [{\citenamefont {Gour}(2022)}]{Gou22}%
  \BibitemOpen
  \bibfield  {author} {\bibinfo {author} {\bibfnamefont {G.}~\bibnamefont
  {Gour}},\ }\bibfield  {title} {\bibinfo {title} {Role of quantum coherence in
  thermodynamics},\ }\href {https://doi.org/10.1103/PRXQuantum.3.040323}
  {\bibfield  {journal} {\bibinfo  {journal} {PRX Quantum}\ }\textbf {\bibinfo
  {volume} {3}},\ \bibinfo {pages} {040323} (\bibinfo {year}
  {2022})}\BibitemShut {NoStop}%
\bibitem [{\citenamefont {Shiraishi}\ and\ \citenamefont
  {Sagawa}(2021)}]{SS21}%
  \BibitemOpen
  \bibfield  {author} {\bibinfo {author} {\bibfnamefont {N.}~\bibnamefont
  {Shiraishi}}\ and\ \bibinfo {author} {\bibfnamefont {T.}~\bibnamefont
  {Sagawa}},\ }\bibfield  {title} {\bibinfo {title} {Quantum thermodynamics of
  correlated-catalytic state conversion at small scale},\ }\href
  {https://doi.org/10.1103/PhysRevLett.126.150502} {\bibfield  {journal}
  {\bibinfo  {journal} {Phys. Rev. Lett.}\ }\textbf {\bibinfo {volume} {126}},\
  \bibinfo {pages} {150502} (\bibinfo {year} {2021})}\BibitemShut {NoStop}%
\bibitem [{\citenamefont {Ganardi}\ \emph {et~al.}(2024)\citenamefont
  {Ganardi}, \citenamefont {Kondra},\ and\ \citenamefont {Streltsov}}]{GKS24}%
  \BibitemOpen
  \bibfield  {author} {\bibinfo {author} {\bibfnamefont {R.}~\bibnamefont
  {Ganardi}}, \bibinfo {author} {\bibfnamefont {T.~V.}\ \bibnamefont
  {Kondra}},\ and\ \bibinfo {author} {\bibfnamefont {A.}~\bibnamefont
  {Streltsov}},\ }\bibfield  {title} {\bibinfo {title} {Catalytic and
  asymptotic equivalence for quantum entanglement},\ }\href
  {https://doi.org/10.1103/PhysRevLett.133.250201} {\bibfield  {journal}
  {\bibinfo  {journal} {Phys. Rev. Lett.}\ }\textbf {\bibinfo {volume} {133}},\
  \bibinfo {pages} {250201} (\bibinfo {year} {2024})}\BibitemShut {NoStop}%
\bibitem [{\citenamefont {Chitambar}\ and\ \citenamefont
  {Gour}(2019)}]{Chitambar2019quantum}%
  \BibitemOpen
  \bibfield  {author} {\bibinfo {author} {\bibfnamefont {E.}~\bibnamefont
  {Chitambar}}\ and\ \bibinfo {author} {\bibfnamefont {G.}~\bibnamefont
  {Gour}},\ }\bibfield  {title} {\bibinfo {title} {Quantum resource theories},\
  }\href {https://doi.org/10.1103/RevModPhys.91.025001} {\bibfield  {journal}
  {\bibinfo  {journal} {Rev. Mod. Phys.}\ }\textbf {\bibinfo {volume} {91}},\
  \bibinfo {pages} {025001} (\bibinfo {year} {2019})}\BibitemShut {NoStop}%
\bibitem [{\citenamefont {Sagawa}\ \emph {et~al.}(2021)\citenamefont {Sagawa},
  \citenamefont {Faist}, \citenamefont {Kato}, \citenamefont {Matsumoto},
  \citenamefont {Nagaoka},\ and\ \citenamefont
  {Brandão}}]{Sagawa2021asymptotic-reversibility}%
  \BibitemOpen
  \bibfield  {author} {\bibinfo {author} {\bibfnamefont {T.}~\bibnamefont
  {Sagawa}}, \bibinfo {author} {\bibfnamefont {P.}~\bibnamefont {Faist}},
  \bibinfo {author} {\bibfnamefont {K.}~\bibnamefont {Kato}}, \bibinfo {author}
  {\bibfnamefont {K.}~\bibnamefont {Matsumoto}}, \bibinfo {author}
  {\bibfnamefont {H.}~\bibnamefont {Nagaoka}},\ and\ \bibinfo {author}
  {\bibfnamefont {F.~G. S.~L.}\ \bibnamefont {Brandão}},\ }\bibfield  {title}
  {\bibinfo {title} {Asymptotic reversibility of thermal operations for
  interacting quantum spin systems via generalized quantum stein’s lemma},\
  }\href {https://doi.org/10.1088/1751-8121/ac333c} {\bibfield  {journal}
  {\bibinfo  {journal} {J. Phys. A: Math. Theor.}\ }\textbf {\bibinfo {volume}
  {54}},\ \bibinfo {pages} {495303} (\bibinfo {year} {2021})}\BibitemShut
  {NoStop}%
\bibitem [{\citenamefont {Shiraishi}\ and\ \citenamefont
  {Takagi}(2024)}]{ST23}%
  \BibitemOpen
  \bibfield  {author} {\bibinfo {author} {\bibfnamefont {N.}~\bibnamefont
  {Shiraishi}}\ and\ \bibinfo {author} {\bibfnamefont {R.}~\bibnamefont
  {Takagi}},\ }\bibfield  {title} {\bibinfo {title} {Arbitrary amplification of
  quantum coherence in asymptotic and catalytic transformation},\ }\href
  {https://doi.org/10.1103/PhysRevLett.132.180202} {\bibfield  {journal}
  {\bibinfo  {journal} {Phys. Rev. Lett.}\ }\textbf {\bibinfo {volume} {132}},\
  \bibinfo {pages} {180202} (\bibinfo {year} {2024})}\BibitemShut {NoStop}%
\bibitem [{\citenamefont {Kondra}\ \emph {et~al.}(2024)\citenamefont {Kondra},
  \citenamefont {Ganardi},\ and\ \citenamefont {Streltsov}}]{KGS23}%
  \BibitemOpen
  \bibfield  {author} {\bibinfo {author} {\bibfnamefont {T.~V.}\ \bibnamefont
  {Kondra}}, \bibinfo {author} {\bibfnamefont {R.}~\bibnamefont {Ganardi}},\
  and\ \bibinfo {author} {\bibfnamefont {A.}~\bibnamefont {Streltsov}},\
  }\bibfield  {title} {\bibinfo {title} {Coherence manipulation in asymmetry
  and thermodynamics},\ }\href {https://doi.org/10.1103/PhysRevLett.132.200201}
  {\bibfield  {journal} {\bibinfo  {journal} {Phys. Rev. Lett.}\ }\textbf
  {\bibinfo {volume} {132}},\ \bibinfo {pages} {200201} (\bibinfo {year}
  {2024})}\BibitemShut {NoStop}%
\bibitem [{\citenamefont {Lami}\ and\ \citenamefont
  {Regula}(2023)}]{Lami2023no}%
  \BibitemOpen
  \bibfield  {author} {\bibinfo {author} {\bibfnamefont {L.}~\bibnamefont
  {Lami}}\ and\ \bibinfo {author} {\bibfnamefont {B.}~\bibnamefont {Regula}},\
  }\bibfield  {title} {\bibinfo {title} {No second law of entanglement
  manipulation after all},\ }\href {https://doi.org/10.1038/s41567-022-01873-9}
  {\bibfield  {journal} {\bibinfo  {journal} {Nat. Phys.}\ }\textbf {\bibinfo
  {volume} {19}},\ \bibinfo {pages} {184} (\bibinfo {year} {2023})}\BibitemShut
  {NoStop}%
\bibitem [{\citenamefont {Duan}\ \emph {et~al.}(2005)\citenamefont {Duan},
  \citenamefont {Feng}, \citenamefont {Li},\ and\ \citenamefont
  {Ying}}]{Duan2005multiple-copy}%
  \BibitemOpen
  \bibfield  {author} {\bibinfo {author} {\bibfnamefont {R.}~\bibnamefont
  {Duan}}, \bibinfo {author} {\bibfnamefont {Y.}~\bibnamefont {Feng}}, \bibinfo
  {author} {\bibfnamefont {X.}~\bibnamefont {Li}},\ and\ \bibinfo {author}
  {\bibfnamefont {M.}~\bibnamefont {Ying}},\ }\bibfield  {title} {\bibinfo
  {title} {Multiple-copy entanglement transformation and entanglement
  catalysis},\ }\href {https://doi.org/10.1103/PhysRevA.71.042319} {\bibfield
  {journal} {\bibinfo  {journal} {Phys. Rev. A}\ }\textbf {\bibinfo {volume}
  {71}},\ \bibinfo {pages} {042319} (\bibinfo {year} {2005})}\BibitemShut
  {NoStop}%
\bibitem [{\citenamefont {Takagi}\ and\ \citenamefont
  {Shiraishi}(2022)}]{TS22}%
  \BibitemOpen
  \bibfield  {author} {\bibinfo {author} {\bibfnamefont {R.}~\bibnamefont
  {Takagi}}\ and\ \bibinfo {author} {\bibfnamefont {N.}~\bibnamefont
  {Shiraishi}},\ }\bibfield  {title} {\bibinfo {title} {Correlation in
  catalysts enables arbitrary manipulation of quantum coherence},\ }\href
  {https://doi.org/10.1103/PhysRevLett.128.240501} {\bibfield  {journal}
  {\bibinfo  {journal} {Phys. Rev. Lett.}\ }\textbf {\bibinfo {volume} {128}},\
  \bibinfo {pages} {240501} (\bibinfo {year} {2022})}\BibitemShut {NoStop}%
\bibitem [{\citenamefont {Marvian}(2020)}]{marvian2020coherence}%
  \BibitemOpen
  \bibfield  {author} {\bibinfo {author} {\bibfnamefont {I.}~\bibnamefont
  {Marvian}},\ }\bibfield  {title} {\bibinfo {title} {Coherence distillation
  machines are impossible in quantum thermodynamics},\ }\href
  {https://doi.org/10.1038/s41467-019-13846-3} {\bibfield  {journal} {\bibinfo
  {journal} {Nat. Commun.}\ }\textbf {\bibinfo {volume} {11}},\ \bibinfo
  {pages} {25} (\bibinfo {year} {2020})}\BibitemShut {NoStop}%
\bibitem [{\citenamefont {Ferrari}\ \emph {et~al.}(2023)\citenamefont
  {Ferrari}, \citenamefont {Lami}, \citenamefont {Theurer},\ and\ \citenamefont
  {Plenio}}]{Ferrari2023asymptotic}%
  \BibitemOpen
  \bibfield  {author} {\bibinfo {author} {\bibfnamefont {G.}~\bibnamefont
  {Ferrari}}, \bibinfo {author} {\bibfnamefont {L.}~\bibnamefont {Lami}},
  \bibinfo {author} {\bibfnamefont {T.}~\bibnamefont {Theurer}},\ and\ \bibinfo
  {author} {\bibfnamefont {M.~B.}\ \bibnamefont {Plenio}},\ }\bibfield  {title}
  {\bibinfo {title} {Asymptotic {State} {Transformations} of {Continuous}
  {Variable} {Resources}},\ }\href {https://doi.org/10.1007/s00220-022-04523-6}
  {\bibfield  {journal} {\bibinfo  {journal} {Commun. Math. Phys.}\ }\textbf
  {\bibinfo {volume} {398}},\ \bibinfo {pages} {291} (\bibinfo {year}
  {2023})}\BibitemShut {NoStop}%
\bibitem [{\citenamefont {Shiraishi}(2020)}]{Shiraishi2020twoconstructive}%
  \BibitemOpen
  \bibfield  {author} {\bibinfo {author} {\bibfnamefont {N.}~\bibnamefont
  {Shiraishi}},\ }\bibfield  {title} {\bibinfo {title} {Two constructive proofs
  on d-majorization and thermo-majorization},\ }\href
  {https://doi.org/10.1088/1751-8121/abb041} {\bibfield  {journal} {\bibinfo
  {journal} {J. Phys. A: Math. Theor.}\ }\textbf {\bibinfo {volume} {53}},\
  \bibinfo {pages} {425301} (\bibinfo {year} {2020})}\BibitemShut {NoStop}%
\bibitem [{\citenamefont {\AA{}berg}(2014)}]{Abe14}%
  \BibitemOpen
  \bibfield  {author} {\bibinfo {author} {\bibfnamefont {J.}~\bibnamefont
  {\AA{}berg}},\ }\bibfield  {title} {\bibinfo {title} {Catalytic coherence},\
  }\href {https://doi.org/10.1103/PhysRevLett.113.150402} {\bibfield  {journal}
  {\bibinfo  {journal} {Phys. Rev. Lett.}\ }\textbf {\bibinfo {volume} {113}},\
  \bibinfo {pages} {150402} (\bibinfo {year} {2014})}\BibitemShut {NoStop}%
\bibitem [{\citenamefont {Shiraishi}(2025{\natexlab{b}})}]{Shi25-2}%
  \BibitemOpen
  \bibfield  {author} {\bibinfo {author} {\bibfnamefont {N.}~\bibnamefont
  {Shiraishi}},\ }\bibfield  {title} {\bibinfo {title} {Quantum thermodynamics
  with coherence: Covariant gibbs-preserving operation is characterized by the
  free energy},\ }\href {https://doi.org/10.1103/PhysRevLett.134.160402}
  {\bibfield  {journal} {\bibinfo  {journal} {Phys. Rev. Lett.}\ }\textbf
  {\bibinfo {volume} {134}},\ \bibinfo {pages} {160402} (\bibinfo {year}
  {2025}{\natexlab{b}})}\BibitemShut {NoStop}%
\bibitem [{\citenamefont {Ding}\ \emph {et~al.}(2021)\citenamefont {Ding},
  \citenamefont {Ding},\ and\ \citenamefont {Hu}}]{DDH21}%
  \BibitemOpen
  \bibfield  {author} {\bibinfo {author} {\bibfnamefont {Y.}~\bibnamefont
  {Ding}}, \bibinfo {author} {\bibfnamefont {F.}~\bibnamefont {Ding}},\ and\
  \bibinfo {author} {\bibfnamefont {X.}~\bibnamefont {Hu}},\ }\bibfield
  {title} {\bibinfo {title} {Exploring the gap between thermal operations and
  enhanced thermal operations},\ }\href
  {https://doi.org/10.1103/PhysRevA.103.052214} {\bibfield  {journal} {\bibinfo
   {journal} {Phys. Rev. A}\ }\textbf {\bibinfo {volume} {103}},\ \bibinfo
  {pages} {052214} (\bibinfo {year} {2021})}\BibitemShut {NoStop}%
\bibitem [{\citenamefont {Son}\ and\ \citenamefont {Ng}(2024)}]{SN23}%
  \BibitemOpen
  \bibfield  {author} {\bibinfo {author} {\bibfnamefont {J.}~\bibnamefont
  {Son}}\ and\ \bibinfo {author} {\bibfnamefont {N.~H.~Y.}\ \bibnamefont
  {Ng}},\ }\bibfield  {title} {\bibinfo {title} {A hierarchy of thermal
  processes collapses under catalysis},\ }\href
  {https://doi.org/10.1088/2058-9565/ad7ef5} {\bibfield  {journal} {\bibinfo
  {journal} {Quantum Sci. Technol.}\ }\textbf {\bibinfo {volume} {10}},\
  \bibinfo {pages} {015011} (\bibinfo {year} {2024})}\BibitemShut {NoStop}%
\bibitem [{\citenamefont {Marvian}\ and\ \citenamefont
  {Spekkens}(2014)}]{MS14}%
  \BibitemOpen
  \bibfield  {author} {\bibinfo {author} {\bibfnamefont {I.}~\bibnamefont
  {Marvian}}\ and\ \bibinfo {author} {\bibfnamefont {R.~W.}\ \bibnamefont
  {Spekkens}},\ }\bibfield  {title} {\bibinfo {title} {Modes of asymmetry: The
  application of harmonic analysis to symmetric quantum dynamics and quantum
  reference frames},\ }\href {https://doi.org/10.1103/PhysRevA.90.062110}
  {\bibfield  {journal} {\bibinfo  {journal} {Phys. Rev. A}\ }\textbf {\bibinfo
  {volume} {90}},\ \bibinfo {pages} {062110} (\bibinfo {year}
  {2014})}\BibitemShut {NoStop}%
\bibitem [{\citenamefont {Fang}\ and\ \citenamefont
  {Liu}(2024)}]{fang2024surpassingfundamentallimitsdistillation}%
  \BibitemOpen
  \bibfield  {author} {\bibinfo {author} {\bibfnamefont {K.}~\bibnamefont
  {Fang}}\ and\ \bibinfo {author} {\bibfnamefont {Z.-W.}\ \bibnamefont {Liu}},\
  } {\bibinfo {title} {Surpassing the
  fundamental limits of distillation with catalysts}} (\bibinfo {year}
  {2024}),\ \href {https://arxiv.org/abs/2410.14547} {arXiv:2410.14547
  [quant-ph]} \BibitemShut {NoStop}%
\bibitem [{\citenamefont {Hayashi}(2002)}]{Hayashi2002optimal}%
  \BibitemOpen
  \bibfield  {author} {\bibinfo {author} {\bibfnamefont {M.}~\bibnamefont
  {Hayashi}},\ }\bibfield  {title} {\bibinfo {title} {Optimal sequence of
  quantum measurements in the sense of stein's lemma in quantum hypothesis
  testing},\ }\href {https://doi.org/10.1088/0305-4470/35/50/307} {\bibfield
  {journal} {\bibinfo  {journal} {J. Phys. A: Math. Gen.}\ }\textbf {\bibinfo
  {volume} {35}},\ \bibinfo {pages} {10759} (\bibinfo {year}
  {2002})}\BibitemShut {NoStop}%
\bibitem [{\citenamefont {Bartlett}\ \emph {et~al.}(2007)\citenamefont
  {Bartlett}, \citenamefont {Rudolph},\ and\ \citenamefont
  {Spekkens}}]{Bartlett2007reference}%
  \BibitemOpen
  \bibfield  {author} {\bibinfo {author} {\bibfnamefont {S.~D.}\ \bibnamefont
  {Bartlett}}, \bibinfo {author} {\bibfnamefont {T.}~\bibnamefont {Rudolph}},\
  and\ \bibinfo {author} {\bibfnamefont {R.~W.}\ \bibnamefont {Spekkens}},\
  }\bibfield  {title} {\bibinfo {title} {Reference frames, superselection
  rules, and quantum information},\ }\href
  {https://doi.org/10.1103/RevModPhys.79.555} {\bibfield  {journal} {\bibinfo
  {journal} {Rev. Mod. Phys.}\ }\textbf {\bibinfo {volume} {79}},\ \bibinfo
  {pages} {555} (\bibinfo {year} {2007})}\BibitemShut {NoStop}%
\bibitem [{\citenamefont {Gour}\ and\ \citenamefont
  {Spekkens}(2008)}]{Gour2008resource}%
  \BibitemOpen
  \bibfield  {author} {\bibinfo {author} {\bibfnamefont {G.}~\bibnamefont
  {Gour}}\ and\ \bibinfo {author} {\bibfnamefont {R.~W.}\ \bibnamefont
  {Spekkens}},\ }\bibfield  {title} {\bibinfo {title} {The resource theory of
  quantum reference frames: manipulations and monotones},\ }\href
  {https://doi.org/10.1088/1367-2630/10/3/033023} {\bibfield  {journal}
  {\bibinfo  {journal} {New J. Phys.}\ }\textbf {\bibinfo {volume} {10}},\
  \bibinfo {pages} {033023} (\bibinfo {year} {2008})}\BibitemShut {NoStop}%
\bibitem [{\citenamefont {Gour}(2024)}]{Gour2024resourcesquantumworld}%
  \BibitemOpen
  \bibfield  {author} {\bibinfo {author} {\bibfnamefont {G.}~\bibnamefont
  {Gour}},\ } {\bibinfo {title}
  {Resources of the quantum world}} (\bibinfo {year} {2024}),\ \href
  {https://arxiv.org/abs/2402.05474} {arXiv:2402.05474} \BibitemShut {NoStop}%
\bibitem [{\citenamefont {Watanabe}\ and\ \citenamefont
  {Takagi}(2025)}]{Watanabe2025universalworkextractionquantum}%
  \BibitemOpen
  \bibfield  {author} {\bibinfo {author} {\bibfnamefont {K.}~\bibnamefont
  {Watanabe}}\ and\ \bibinfo {author} {\bibfnamefont {R.}~\bibnamefont
  {Takagi}},\ } {\bibinfo {title}
  {Universal work extraction in quantum thermodynamics}} (\bibinfo {year}
  {2025}),\ \href {https://arxiv.org/abs/2504.12373} {arXiv:2504.12373
  [quant-ph]} \BibitemShut {NoStop}%
\bibitem [{\citenamefont {Wilming}\ \emph {et~al.}(2017)\citenamefont
  {Wilming}, \citenamefont {Gallego},\ and\ \citenamefont
  {Eisert}}]{Wilming2017axiomatic}%
  \BibitemOpen
  \bibfield  {author} {\bibinfo {author} {\bibfnamefont {H.}~\bibnamefont
  {Wilming}}, \bibinfo {author} {\bibfnamefont {R.}~\bibnamefont {Gallego}},\
  and\ \bibinfo {author} {\bibfnamefont {J.}~\bibnamefont {Eisert}},\
  }\bibfield  {title} {\bibinfo {title} {Axiomatic characterization of the
  quantum relative entropy and free energy},\ }\bibfield  {journal} {\bibinfo
  {journal} {Entropy}\ }\textbf {\bibinfo {volume} {19}},\ \href
  {https://doi.org/10.3390/e19060241} {10.3390/e19060241} (\bibinfo {year}
  {2017})\BibitemShut {NoStop}%
\bibitem [{\citenamefont {Keyl}\ and\ \citenamefont
  {Werner}(1999)}]{Keyl1999optimal}%
  \BibitemOpen
  \bibfield  {author} {\bibinfo {author} {\bibfnamefont {M.}~\bibnamefont
  {Keyl}}\ and\ \bibinfo {author} {\bibfnamefont {R.~F.}\ \bibnamefont
  {Werner}},\ }\bibfield  {title} {\bibinfo {title} {Optimal cloning of pure
  states, testing single clones},\ }\href {https://doi.org/10.1063/1.532887}
  {\bibfield  {journal} {\bibinfo  {journal} {J. Math. Phys.}\ }\textbf
  {\bibinfo {volume} {40}},\ \bibinfo {pages} {3283} (\bibinfo {year}
  {1999})}\BibitemShut {NoStop}%
\bibitem [{\citenamefont {Marvian}\ and\ \citenamefont
  {Mann}(2008)}]{Marvian2008building}%
  \BibitemOpen
  \bibfield  {author} {\bibinfo {author} {\bibfnamefont {I.}~\bibnamefont
  {Marvian}}\ and\ \bibinfo {author} {\bibfnamefont {R.~B.}\ \bibnamefont
  {Mann}},\ }\bibfield  {title} {\bibinfo {title} {Building all time evolutions
  with rotationally invariant hamiltonians},\ }\href
  {https://doi.org/10.1103/PhysRevA.78.022304} {\bibfield  {journal} {\bibinfo
  {journal} {Phys. Rev. A}\ }\textbf {\bibinfo {volume} {78}},\ \bibinfo
  {pages} {022304} (\bibinfo {year} {2008})}\BibitemShut {NoStop}%
\bibitem [{\citenamefont
  {Marvian}(2022)}]{Marvian2022operational-interpretation}%
  \BibitemOpen
  \bibfield  {author} {\bibinfo {author} {\bibfnamefont {I.}~\bibnamefont
  {Marvian}},\ }\bibfield  {title} {\bibinfo {title} {Operational
  interpretation of quantum fisher information in quantum thermodynamics},\
  }\href {https://doi.org/10.1103/PhysRevLett.129.190502} {\bibfield  {journal}
  {\bibinfo  {journal} {Phys. Rev. Lett.}\ }\textbf {\bibinfo {volume} {129}},\
  \bibinfo {pages} {190502} (\bibinfo {year} {2022})}\BibitemShut {NoStop}%
\bibitem [{\citenamefont {Kitaev}\ \emph {et~al.}(2004)\citenamefont {Kitaev},
  \citenamefont {Mayers},\ and\ \citenamefont
  {Preskill}}]{Kitaev2004superselection}%
  \BibitemOpen
  \bibfield  {author} {\bibinfo {author} {\bibfnamefont {A.}~\bibnamefont
  {Kitaev}}, \bibinfo {author} {\bibfnamefont {D.}~\bibnamefont {Mayers}},\
  and\ \bibinfo {author} {\bibfnamefont {J.}~\bibnamefont {Preskill}},\
  }\bibfield  {title} {\bibinfo {title} {Superselection rules and quantum
  protocols},\ }\href {https://doi.org/10.1103/PhysRevA.69.052326} {\bibfield
  {journal} {\bibinfo  {journal} {Phys. Rev. A}\ }\textbf {\bibinfo {volume}
  {69}},\ \bibinfo {pages} {052326} (\bibinfo {year} {2004})}\BibitemShut
  {NoStop}%
\bibitem [{\citenamefont {Grimmett}\ and\ \citenamefont
  {Stirzaker}(2020)}]{Grimmett2020probability}%
  \BibitemOpen
  \bibfield  {author} {\bibinfo {author} {\bibfnamefont {G.}~\bibnamefont
  {Grimmett}}\ and\ \bibinfo {author} {\bibfnamefont {D.}~\bibnamefont
  {Stirzaker}},\ } {\emph {\bibinfo {title} {Probability and
  random processes}}}\ (\bibinfo  {publisher} {Oxford university press},\
  \bibinfo {year} {2020})\BibitemShut {NoStop}%
\bibitem [{\citenamefont {Watanabe}\ and\ \citenamefont
  {Takagi}(2024)}]{Watanabe2024black}%
  \BibitemOpen
  \bibfield  {author} {\bibinfo {author} {\bibfnamefont {K.}~\bibnamefont
  {Watanabe}}\ and\ \bibinfo {author} {\bibfnamefont {R.}~\bibnamefont
  {Takagi}},\ }\bibfield  {title} {\bibinfo {title} {Black box work extraction
  and composite hypothesis testing},\ }\href
  {https://doi.org/10.1103/PhysRevLett.133.250401} {\bibfield  {journal}
  {\bibinfo  {journal} {Phys. Rev. Lett.}\ }\textbf {\bibinfo {volume} {133}},\
  \bibinfo {pages} {250401} (\bibinfo {year} {2024})}\BibitemShut {NoStop}%
\end{thebibliography}%

\let\addcontentsline\oldaddcontentsline

\bigskip

{\bf \normalsize Date Availability}

Data sharing not applicable to this article as no datasets were generated or analysed during the current study.

\bigskip

{\bf \normalsize Acknowledgement}

N.S. acknowledges the support of JST ERATO Grant No. JPMJER2302, Japan.
R.T. acknowledges the support of JST CREST Grant Number JPMJCR23I3, JSPS KAKENHI Grant Number JP24K16975, JP25K00924, and MEXT KAKENHI Grant-in-Aid for Transformative
Research Areas A ``Extreme Universe” Grant Number JP24H00943.

\bigskip

{\bf \normalsize Author contributions}

NS put key ideas to derive main results, and RT refined these ideas.
Both authors contributed to the interpretation
and discussion of the results, as well as to the writing of the manuscript. 

\bigskip

{\bf \normalsize Competing Interest}

The authors declare no competing interest.

}

\newpage
\newgeometry{hmargin=1.2in,vmargin=0.8in}

\onecolumngrid


\newcommand{\vo}{\upsilon}
\newcommand{\midskip}{\vspace{3pt}}

\setcounter{equation}{0}

\makeatletter
\def\shorttableofcontents#1#2{\bgroup\c@tocdepth=#2\@restonecolfalse
  \settowidth\js@tocl@width{\headfont\prechaptername\postchaptername}%
  \settowidth\@tempdima{\headfont\appendixname}%
  \ifdim\js@tocl@width<\@tempdima \setlength\js@tocl@width{\@tempdima}\fi
  \ifdim\js@tocl@width<5zw \divide\js@tocl@width by 5 \advance\js@tocl@width 4zw\fi
\if@tightshtoc
    \parsep\z@
  \fi
  \if@twocolumn\@restonecoltrue\onecolumn\fi
  \@ifundefined{chapter}%
  {\section*{{#1}
        \@mkboth{\uppercase{#1}}{\uppercase{#1}}}}%
  {\chapter*{{#1}
        \@mkboth{\uppercase{#1}}{\uppercase{#1}}}}%
  \@startshorttoc{toc}\if@restonecol\twocolumn\fi\egroup}
\makeatother


\newpage

\begin{center}
{\large \bf Supplemental Material of ``Recovery of the second law in fully quantum thermodynamics''}
 \\
 \vspace*{0.3cm}
 Naoto Shiraishi and Ryuji Takagi  \\
 \vspace*{0.1cm}

 {Department of Basic Science, The University of Tokyo} 

\end{center}

\tableofcontents

\section{Preliminaries}

\subsection{Thermal and covariant operations}

A central goal in quantum thermodynamics is to characterize feasible state transformation with respect to the property of the initial and target states.  
Here, we are particularly interested in transforming the initial state $\rho$ to another target state $\rho'$, both of which are in the $d$-dimensional system $S$ with Hamiltonian $H_S$.
In this manuscript, we generally write the Hamiltonian of a system $X$ as $H_X$.
Without loss of generality, we assume that $H_S$ has the ground energy 0 and order their eigenenergies $\{E_i\}_{i=1}^{d}$ in non-decreasing order $0= E_1\leq E_2\dots \leq E_d$ with potential degeneracy. 
We let $\ket{E_i}$ denote an energy eigenstate with energy $E_i$.
If $E_i=E_j$, we take $\ket{E_i}$ and $\ket{E_j}$ to be orthogonal energy eigenstates with the same energy.
We assume that the system is surrounded by a heat bath with a given fixed inverse temperature $\beta$.
We write the thermal Gibbs state for the system $X$ by $\tau_{\Gibbs,X} = e^{\beta H_X}/\Tr(e^{-\beta H_X})$.
We suppress the subscript $X$ if the relevant system is clear from the context.

Given this setting, a reasonable choice as the set of accessible thermodynamic operations is the operations such that the system may interact with the thermal bath while respecting the energy conservation law, where the thermal bath is finally discarded. 
This is the set of thermal operations~\cite{HO13,Sagawa2021asymptotic-reversibility,Fai19}.

\begin{dfn}[Thermal operations~\cite{Sagawa2021asymptotic-reversibility}]
A channel $\Lambda$ from $S$ to $S'$ is called a thermal operation if there is an environmental system $E$ with Hamiltonian $H_E$, the Gibbs state $\tau_{\Gibbs,E} = e^{-\beta H_E}/\Tr(e^{-\beta H_E})$, and energy-conserving partial isometry $V_{SE\to S'E}$ with $V_{SE\to S'E}(H_S+H_E)=(H_{S'}+H_E)V_{SE\to S'E}$ satisfying 
\bal
 \Lambda(\rho) = \Tr_E\left[V_{SE\to S'E}(\rho\otimes \tau_{\Gibbs,E}) V^\dagger_{SE\to S'E}\right].
\eal
\end{dfn}

There has been much research that investigated the operational capability of thermal operations in state transformation. 
However, most of the results are restricted to the limited settings where the target states are block-diagonal in the energy eigenbasis, corresponding to a {\it semi-classical} scenario. 
This is due to the technical difficulty in characterizing the manipulation of \emph{quantum coherence}---the off-diagonal parts of density matrices---which represents a specific quantum feature.

The main contribution of our results is to comprehensively deal with the manipulation of quantum coherence and find the effective use of it to establish the fully quantum second law of quantum thermodynamics. 
To discuss coherence manipulation in a quantitative manner, it is central to consider another class of operations that cannot create or increase quantum coherence, known as \emph{covariant operations}~\cite{Bartlett2007reference,Gour2008resource,Mar-thesis}.

\begin{dfn}[Covariant operations]
 A channel $\Lambda$ from $S$ to $S'$ is called a covariant operation if 
\bal
 \Lambda(e^{-iH_S t}\rho e^{iH_S t}) = e^{-iH_{S'}t}\Lambda(\rho) e^{iH_{S'} t}\quad
\eal    
is satisfied for all $t$.
\end{dfn}

It is straightforward to check that thermal operations are always covariant. 
This means that thermal operations themselves cannot create quantum coherence, which is why quantum coherence behaves as a precious resource under thermodynamic transformation.

In general, the framework that investigates the feasible state transformation under the given set of accessible operations is known as quantum resource theories~\cite{Chitambar2019quantum}.
The set of operations accessible in the given physical settings is often called the set of free operations and is denoted by $\bbO$, which is a subset of quantum channels (completely-positive trace-preserving maps). 
The framework of quantum thermodynamics introduced above can be considered as a resource theory with $\bbO$ being the set of thermal operations~\cite{Bra13}. 
If we focus on coherence manipulation without taking into account the non-equilibrium energy distribution, the relevant resource theory is known as a resource theory of coherence or asymmetry~\cite{Gour2008resource}, where the covariant operations are chosen as its free operations.

\subsection{Free energy as an information-theoretic quantity}

As our main theorem claims, the non-equilibrium free energy of the quantum state $\rho$ defined by 
\bal
 F(\rho) = \Tr(\rho H) - S(\rho)/\beta
\eal
serves as a fundamental quantity that reflects the resourcefulness in the thermodynamics scenario. 
An important property of this quantity is that it can never increase under free thermodynamic processes. 
This property can be clearly seen by rewriting it as 
\bal
 F(\rho) = D(\rho\|\tau_{\Gibbs})/\beta + F(\tau_{\Gibbs})
\eal
where $F(\tau_{\Gibbs}) = -\log \Tr(e^{-\beta H})/\beta$ is the equilibrium free energy, and 
\bal
 D(\rho\|\sigma) = \Tr(\rho\log\rho) - \Tr(\rho\log\sigma)
\eal
is the quantum relative entropy for arbitrary states $\rho$ and $\sigma$.
This ensures that for a thermal operation $\Lambda:S\to S$, we get for any state $\rho$ on $S$ that 
\bal
 F(\Lambda(\rho))-F(\tau_{\Gibbs}) = D(\Lambda(\rho)\|\tau_{\Gibbs})/\beta = D(\Lambda(\rho)\|\Lambda(\tau_{\Gibbs}))/\beta\leq D(\rho\|\tau_{\Gibbs})/\beta = F(\rho)- F(\tau_{\Gibbs}), 
\eal
implying $F(\Lambda(\rho))\leq F(\rho)$.
Here, in the second equality we used that a thermal operation maps the Gibbs state to Gibbs state, and the inequality is due to the data-processing inequality of quantum relative entropy. 

This formulation of the free energy in terms of quantum relative entropy allows us to use information-theoretic tools to analyze this quantity. 
Since $D(\rho\|\tau_{\Gibbs})$ is equivalent to free energy up to constant factor and off-set by the equilibrium free energy, here we sometimes call $D(\rho\|\tau_{\Gibbs})$ simply free energy.

\subsection{Catalytic and asymptotic transformations}\label{sec:catalytic and asymptotic}

The setting of state transformation focused on this work is the one where we use an ancillary state to aid the state transformation, while the aiding state is returned to its original form. 
Such a transformation is known as a catalytic transformation. 
Depending on in what form the original catalyst state is returned, there are several potential settings of catalytic transformation. 
Our main focus in this article is on the one known as a {\it correlated catalyst}, where the final state may have a correlation between the main system and the catalytic system, while the reduced state on the catalytic system has exactly the same form as the initial catalyst state. 

\begin{dfn}[Correlated-catalytic transformation]\label{dfn:correlated catalytic}
We say that a state $\rho$ can be transformed to a state $\rho'$ with error $\epsilon$ by a set $\bbO$ of free operations with a correlated catalyst if there exists a state $c$ on a finite-dimensional system $C$ and a quantum channel $\Lambda\in\bbO$ such that 
\bal
 \Lambda(\rho\otimes c) = \pi_{SC},\quad \|\Tr_C(\pi_{SC})-\rho'\|_1\leq \epsilon,\quad \Tr_S(\pi_{SC})=c.
\eal
    
\end{dfn}

Besides the catalytic framework, we also have a useful framework of state transformations, an asymptotic transformation, which considers transformation between an infinite number of copies of states and studies the optimal rate of the number of initial and final states.  

The standard asymptotic transformation aims to transform $\rho^{\otimes n}$ to $\rho'^{\otimes rn}$ so that the error in state transformation approaches 0 in the limit of $n\to \infty$. The figure of merit in this setting is the maximum achievable rate $r$. 
Formally, the asymptotic transformation rate from $\rho$ to $\rho'$ by a set $\bbO$ of free operations is defined as 
\bal
 R(\rho\to\rho')\coloneqq \sup\lset r\sbar \exists \Lambda_n\in\bbO\ \ {\rm s.t.} \  \lim_{n\to\infty}\|\Lambda_n(\rho^{\otimes n})-\rho'^{\otimes \lfloor rn\rfloor}\|_1=0\rset.
 \label{eq:standard asymptotic rate}
\eal

For our later use, we here introduce another (arguably less familiar) type of asymptotic transformation known as asymptotic marginal transformation~\cite{Ferrari2023asymptotic,GKS24}. 
For the states $\rho$ and $\rho'$ on the system $S$, we consider transforming $\rho^{\otimes n}$ on $S^{\otimes n}$ to a state $\rho'_{\lfloor rn\rfloor}$ on the system $S^{\otimes \lfloor rn \rfloor}$ so that each reduced state approaches to the target state $\rho'$ at $n\to\infty$ limit. 
Formally, we define the asymptotic marginal transformation rate under a set $\bbO$ of free operations by 
\bal
 R_{\rm m}(\rho\to\rho')\coloneqq \sup\lset r\sbar \exists \Lambda_n\in\bbO,\ \Lambda_n:S^{\otimes n}\to S'^{\otimes \lfloor rn \rfloor},\ \ {\rm s.t.}\ \forall i, \  \lim_{n\to\infty}\|\Tr_{\overline{S}_i}\Lambda_n(\rho^{\otimes n})-\rho'\|_1=0\rset
 \label{eq:aymsptotic marginal}
\eal
where $\Tr_{\overline{X}}$ denotes the partial trace taken over all the systems except $X$.
We particularly say that $\rho$ can be transformed to $\rho'$ by an asymptotic marginal transformation if $R_{\rm m}(\rho\to\rho')=1$.

As we will see in the later section, our construction first performs the asymptotic marginal asymptotic transformation and converts it to the \emph{single-shot} correlated-catalytic transformation (i.e., a single copy of $\rho$ to a single copy of $\rho'$) as formalized in Definition~\ref{dfn:correlated catalytic}.

\subsection{Quantum coherence and coherent modes}

In this work, quantum coherence refers to the superposition between energy eigenstates with different eigenenergies. 
In the density matrix representation, it corresponds to the off-diagonal terms with respect to the energy eigenbasis corresponding to different energy levels.  
To formalize it, we first introduce the quantum channel on system $X$ known as the pinching channel defined by 
\bal
 \calP(\rho) = \lim_{T\to\infty}\frac{1}{T}\int_{-\infty}^\infty e^{-iH_Xt}\rho e^{iH_X t}  = \sum_E \Pi_E \rho \Pi_E
 \label{eq:pinching}
\eal
where $\Pi_E$ is the projector onto the subspace with eigenenergy $E$. 
Namely, the pinching channel is a decohering channel that removes the quantum coherence, i.e., off-diagonal part corresponding to the superposition between eigenstates with different eigenenergies. 
We note that the pinching channel is a thermal operation~\cite{Gour2024resourcesquantumworld,Watanabe2025universalworkextractionquantum}.
With this, we say that a state $\rho$ is \emph{incoherent} or \emph{classical} if $\rho = \calP(\rho)$, i.e., it is block-diagonal with respect to energy eigenbasis. 
Otherwise, $\rho$ is called a coherent state. 

It turns out that we need more fine-grained notion of quantum coherence to characterize the feasible state transformation.
Coherence is a superposition between two eigenstates with different energies, and it is crucial to take into account the value of the energy difference. 
This notion was introduced in Ref.~\cite{MS14} as modes of asymmetry, which we call coherent modes here to be consistent with other terminology in this article.

\begin{dfn}[Coherent modes~\cite{MS14}] \label{def:mode of asymmetry}
Coherent modes of a state $\rho$ are the energy differences for which $\rho$ has nonzero coherence.
\begin{equation}
\calD(\rho):=\lset \Di_{ij} \sbar \Di_{ij}=E_i-E_j,\ (i,j)\in \calM(\rho)\rset,
\end{equation}
where $\calM(\rho)$ is the set of integer pairs $(i,j)$ such that there exists a pair $\ket{E_i}$, $\ket{E_j}$ of energy eigenstates with energies $E_i$ and $E_j$ such that $\braket{E_i|\rho|E_j}\neq 0$.
\end{dfn}

The coherent modes monotonically decrease under a covariant operation.

\blm{[Monotonicity of coherent modes~\cite{MS14}]
Suppose that there exists a covariant operation $\Lambda$ transforming $\rho$ to $\rho'=\Lambda(\rho)$.
Then, we have $\calD(\rho')\subseteq \calD(\rho)$.
}

In contrast to the above monotonicity, under the presence of catalysts, linear combinations of coherent modes can also be created under covariant operations~\cite{TS22,ST23}. 
Ref.~\cite{ST23} particularly showed that the integer-linear combination of coherent modes can be created with the help of correlated catalysts and thus played a major role there to characterize the possible catalytic transformation. 
This naturally motivates us to focus the set of linear combinations of coherent modes with integer coefficients and investigate its role in characterizing the state transformation in our setting.

\begin{dfn}[Resonant coherent modes~\cite{ST23}] \label{dfn:rationally resonant coherent modes}
We define the set $\calC(\rho)$ of integer resonant coherent modes of a state $\rho$ as
\begin{equation}
\calC(\rho):=\lset x  \sbar x=\sum_{\Di\in \calD(\rho)} a_\Di \Di,\ a_\Di\in \bbZ\rset,
\end{equation}
where $\bbZ$ is the set of integers. 
\end{dfn}

For later use, for any set of real numbers $\calS$ we introduce a set of numbers obtained by a linear combination of elements in $\calS$ with integer coefficients as 
\eq{
\calI(\calS):=\lset x  \sbar x=\sum_{s\in \calS} a_s s,\ a_s\in \bbZ\rset.
}
Using this symbol, the resonant coherent modes are also given by $\calC(\rho)=\calI(\calD(\rho))$.

\section{Asymptotic transformation (First part of the proof of the main theorem)}

The fact that $F(\rho)\geq F(\rho')$ is necessary for a correlated-catalytic thermal operation to transform $\rho$ to $\rho'$ with arbitrary accuracy follows from the superadditivity of free energy~\cite{Wilming2017axiomatic,TS22}. 
Therefore, we focus on the achievable part, i.e., if $F(\rho)\geq F(\rho')$ and $\calC(\rho')\subset\calC(\rho)$, then there exists a correalted-catalytic thermal operation that transforms $\rho$ to $\rho'$ with arbitrary accuracy.

Here, if $F(\rho)=F(\rho')$, for any given error $\ep>0$ in the correlated-catalytic transformation, we can find $\rho''$ such that $\| \rho''-\rho'\|_1<\ep$ and $F(\rho)>F(\rho'')$.
Then, by redefining $\ep$ as $\ep/2$ and $\rho'$ as $\rho''$, without loss of generality we suppose $F(\rho)>F(\rho')$ without equality.
In the following, we show the desired correlated-catalytic transformation under this condition $F(\rho)>F(\rho')$.

\bigskip

Our construction to show the main theorem consists of two parts. 
In the first part, we focus on realizing the \emph{asymptotic transformation} from $\rho$ to $\rho'$.
We particularly provide a protocol that transforms $n$ copies of $\rho$ to $n-o(n)$ copies of $\rho'$ in the sense of asymptotic marginal transformation. 
In the second step, we employ this asymptotic transformation to construct a single-shot transformation from $\rho$ to $\rho'$ with a correlated catalyst. 

In this section, we focus on the first part concerning the asymptotic transformation, and the asymptotic-catalytic correspondence is argued in the next section.
Our goal of this section is to prove the following theorem.

\bthm{
Consider two quantum states $\rho$ and $\rho'$ on $S$ satisfying $F(\rho)\geq F(\rho')$ and $\calC(\rho')\subset\calC(\rho)$.
Then, $R_{\rm m}(\rho\to\rho')=1$ holds.
Furthermore, the final correlation between a subsystem and the rest can be made arbitrarily small.
}

\subsection{Setting up the stage: ladder systems}\label{sec:ladder}

Let us first introduce the concept of linear independence relevant in our settings. 

\begin{dfn}
  Let $\calS=\{x_i\}_i$ be a set of real numbers. We call $\calS$ integer-linearly independent if $\sum_i a_i x_i = 0$ with $a_i\in\bbZ$ implies $a_i=0$ for all $i$.
\end{dfn}

The above definition ensures that if $\calS=\{x_i\}_i$ is integer-linearly independent and if a real number $y$ admits the form $y=\sum_j a_j x_j$ with $a_j\in\bbZ$, then there is a one-to-one correspondence between $y$ and the set $\{a_i\}_i$ of integers.

The next result shows that one can always construct an integer-linearly independent set as a basis for a given set of real numbers, admitting an integer-linear decomposition for every element in the given set.

\begin{lm}\label{lm:rational linearly independent}
  Let $\calT=\{y_i\}_i$ be a set of real numbers. Then, there exists an integer-linearly independent set $\calS=\{x_i\}_i$ of real numbers such that $\calI(\calS)=\calI(\calT)$.
\end{lm}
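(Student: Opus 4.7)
The plan is to realize $\calS$ as a $\bbZ$-basis of the additive subgroup $M := \calI(\calT) \subseteq (\bbR,+)$, by invoking the structure theorem for finitely generated abelian groups. The lemma is used in the paper only with $\calT = \calD(\rho)$ for a state $\rho$ on a finite-dimensional system, so $\calT$ is automatically finite, and I will focus on that case.

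First I would record three basic facts about $M$: (i) it is an abelian group under addition; (ii) it is finitely generated as a $\bbZ$-module, with $\calT$ itself as a generating set by definition of $\calI$; and (iii) it is torsion-free, because it sits inside $(\bbR,+)$ which has no torsion. By the structure theorem for finitely generated modules over the principal ideal domain $\bbZ$, properties (ii) and (iii) together force $M$ to be a free $\bbZ$-module of some finite rank $k$, so that $M \cong \bbZ^k$ with $k \leq |\calT|$.

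Choosing any $\bbZ$-basis $\calS = \{x_1, \dots, x_k\}$ of $M$ then completes the argument: integer-linear independence of $\calS$ is exactly the defining property of a free basis (any relation $\sum_i a_i x_i = 0$ with $a_i \in \bbZ$ forces every $a_i = 0$), and $\calI(\calS) = M = \calI(\calT)$ because $\calS$ generates $M$ over $\bbZ$. If a concrete construction is desired, one can extract a maximal $\bbQ$-linearly independent subset $\{y_{i_1}, \dots, y_{i_k}\} \subseteq \calT$, express every remaining element of $\calT$ as a rational combination of this subset, clear denominators to obtain an integer relation matrix, and read off $\calS$ from its Smith normal form. Equivalently, one can prove the lemma by induction on $|\calT|$: adjoining a new $y$ already in the $\bbQ$-span of the current basis triggers a single Smith-normal-form update of the basis, while adjoining one outside the span simply extends it.

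The main subtlety I anticipate is the question of arbitrary (possibly infinite) $\calT$: the statement genuinely fails in that generality, since $\calT = \{1/n : n \geq 1\}$ yields $\calI(\calT) = \bbQ$, which is torsion-free but \emph{not} free over $\bbZ$ and so cannot be generated by any integer-linearly independent set. Because $\calD(\rho)$ is finite for every state on a finite-dimensional system, this obstruction never arises in the downstream applications, and the finite-rank structure-theorem argument above suffices throughout.
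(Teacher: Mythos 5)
Your proof is correct, and it takes a genuinely different route from the paper's. The paper gives an explicit Euclidean-algorithm-style construction: it iterates over the elements $y_k \in \calT$, and whenever adjoining $y_k$ creates an integer dependence $0 = \sum_i a_i x_i$, it performs repeated replacements $x_i \to x_i + p x_j$ (clearing one coefficient at a time, exactly a Smith-normal-form / Euclidean reduction) until one coefficient becomes $\pm 1$ so the corresponding generator can be dropped. Your primary argument instead sidesteps all bookkeeping by noting that $M := \calI(\calT)$ is a finitely generated, torsion-free abelian group and invoking the structure theorem to conclude that $M$ is free of finite rank; any $\bbZ$-basis is then the desired $\calS$. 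Your sketched ``concrete construction'' via a maximal $\bbQ$-independent subset plus Smith normal form is essentially the paper's algorithm, so the two routes are closely related at bottom, but the structure-theorem framing is cleaner and makes the invariant (what exactly is preserved through the replacements) transparent, which is genuinely fuzzy in the paper's step (1b). What the paper's approach buys in return is that it avoids appealing to a black-box theorem and gives a procedure one could, in principle, implement.

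You also correctly flag a real issue that the paper does not acknowledge: as stated, with $\calT$ an arbitrary ``set of real numbers,'' the lemma is false. Your counterexample $\calT = \{1/n : n \geq 1\}$, for which $\calI(\calT) = \bbQ$ is torsion-free but not free over $\bbZ$, is exactly right, and the paper's iterative procedure would not terminate in a meaningful basis in that case. Since the only inputs the paper feeds into the lemma are $\{E_i\}_{i=1}^d$ and $\calD(\rho)$, both finite because the system is $d$-dimensional, the restriction to finite $\calT$ that your proof relies on is harmless for the paper's purposes --- but it would have been cleaner for the paper to state the lemma for finite $\calT$ explicitly, as your argument implicitly requires and your counterexample shows is necessary.
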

\begin{proof}
  We explicitly construct such a set as follows. 
  Let us start with the empty set $\calS=\{\}$ and sequentially update $\calS$ by the following procedure. 
  \begin{enumerate}
      \item Suppose that we have $\calS=\{x_i\}_i$ at the $k$\,th step. 
      We see whether $y_k\in\calT$ can be written as an integer-linear combination of the current $\calS$. 
      If so, we do not add $y_k$ and move to the next element $y_{k+1}$.
      If not, we then examine whether $\calS\cup\{y_k\}$ is an integer-linearly independent set.
      \begin{enumerate}
          \item If so, add $y_k$ in $\calS$.
          \item If not, we have a relation $0=\sum_{i\in I} a_i x_i$ for some subset $I$ of labels where $a_i$'s are nonzero integers and $x_i$'s are elements of $\calS\cup\{y_k\}$.
          Without loss of generality, we assume that $a_i$'s do not have a common factor.
          By assumption, we have relatively prime two integers $a_i$ and $a_j$ with $\abs{a_i}<\abs{a_j}$, which allows us to express $a_j= pa_i+q$ with $1\leq \abs{q}<\abs{a_i}$.
          We then replace as $x_i\to x_i+px_j$ and keep other variables.
          Remark that a set of integers obtained by a linear combination of $\{x_l\}_l$ with integer coefficients is invariant under this replacement.
          In addition, we again have the relation $0=\sum_i a_i x_i$ with replacing as $a_j\to q$.
          We continue this replacement until some coefficient $a_l$ to be one.
          At this time, we safely set $\{x_i\}_{i(\neq l)}$ as $\calS$ since $x_l$ can be expressed by a linear combination of $\{x_i\}_{i(\neq l)}$ with integer coefficients.          
      \end{enumerate}
      \item We repeat this process until all $y_k\in\calT$ are considered. 
  \end{enumerate}
The resulting set $\calS$  then serves as the desired basis set.
\end{proof}

We construct the set $\calS=\{\Di_j\}_j$ from the set $\{E_i\}_{i=1}^d$ of energy levels of $H_S$ and consider $|\calS|$ ladder systems $\otimes_{j=1}^{|\calS|}L_j$ where $L_j$ has the Hamiltonian $H_{L_j}=\sum_{k=-K_j}^{K_j} \Di_j k\dm{k}$ for some $K_j$.
Here, we take a finite but sufficiently large $K_j$ so that the later operations are feasible.
Then, each energy eigenstate of $\otimes_{j=1}^{|\calS|}L_j$ can be denoted by $\ket{E_{\bf c}} = \otimes_{j=1}^{|\calS|}\ket{c_j}$, which has the energy $E_{\bf c}=\sum_j c_j\Di_j$.

For analytical convenience, we first embed the system $S$ to the ladder systems $\otimes_{j=1}^{|\calS|}L_j$.
This mapping can be done by a thermal operation because $V_{S\to \otimes_{j=1}^{|\calS|}L_j} = \sum_{\bf c} {}_{\otimes_j L_j}\ketbra{E_{\bf c}}{E_{\bf c}}_S$ is an energy conserving isometry.
Recall that $H_S$ may be degenerate and have $E_i=E_j$ for $i\neq j$, while $\braket{E_i|E_j}=0$.
To accommodate the degeneracy, we assume that the ladder system also hosts degenerate energy eigenstates.
This allows us to focus on the protocol acting on the ladder system.
We can then come back to the original system at the end of the protocol by applying the partial isometry $V_{S\to \otimes_{j=1}^{|\calS|}L_j}^\dagger$.
Therefore, from now on, we often call the ladder system simply $S$ and focus on working on the ladder system.

Separately from $\calS$, we also construct another integer-linearly independent set $\calQ=\{\tilde\Di_l\}_l$ of real numbers by applying the protocol in Lemma~\ref{lm:rational linearly independent} to the coherent mode $\calD(\rho)$ of input states. 
Noting that $\calI(\calQ)=\calI(\calC(\rho))$, this constructs a ``basis'' of the set $\calC(\rho)$ of resonant coherent modes. 
Based on $\calQ$, we also consider a ladder system $Q=\otimes_{q=1}^{|\calQ|} Q_l$ where $H_{Q_l}=\sum_{k=-\tilde K_l}^{\infty}\tilde\Di_l k\dm{k}$ for some sufficiently large integer $\tilde K_l$.


\subsection{Overview of the protocol}

We aim to construct a thermal operation that, for an arbitrary given real number $\epsilon>0$, transforms $\rho^{\otimes n}$ on $S^{\otimes n}$ to a state $\Xi_{S^{\otimes n'}}$ on $S^{\otimes n'}$ with $n'=n-o(n)$ such that $\|\Tr_{\overline{i}}\Xi_{S^{\otimes n'}} - \rho'\|_1\leq \epsilon$ for all $i$. 
Here is the overview of the protocol. (See also Fig.~\ref{f:schematic})

\figin{8.7cm}{schematic2}{
{\bf Overview of the asymptotic marginal conversion protocol (same as \fref{schematic} in the main article).}
We transform $n$ copies of $\rho$ into $n'=n-o(n)$ copies of $\rho'$ in the sense of the asymptotic marginal conversion.
As its preliminary treatment, we first divide $n$ copies of the initial state $\rho$ into $n'=\mu\nu$ copies and $o(n)$ copies.
In step (1) we use the dephasing channel and remove all the off-diagonal elements of each $\mu$ copies of $\rho$. 
The state becomes an energy-diagonal state $\rho_{\cl,\mu}$ on $S^{\otimes \mu}$.
Through this process, the decrease of the free energy density is arbitrarily suppressed.
In step (2), we convert $\nu$ copies of $\rho_{\cl,\mu}$ into $\nu$ copies of another energy-diagonal state $\rho_{\cl,\mu}'$, which has the same spectrum as $\rho'^{\otimes \mu}$.
The existence of this conversion is confirmed by the second law for classical states.
In step (3), we distill state $\eta$ with broad coherence from $o(n)$ copies of $\rho$.
In step (4), using $\eta$ as catalytic coherence repeatedly, we recover $\rho'^{\otimes \mu}$ from $\rho_{\cl,\mu}'$ by an energy-conserving unitary operation.
The accuracy of this implementation is confirmed by the random-walk argument.
}{schematic_supplemental}

\begin{enumerate}

    \item \label{item:pinching} 
    We write $n'=\mu\nu$ for integers $\mu$, $\nu$, which we choose later large enough to achieve the desired accuracy in the transformation. We first see $\rho^{\otimes n'}$ as $\nu$ groups of $\rho^{\otimes \mu}$ and separately transform each $\rho^{\otimes \mu}$ into a classical state (energy eigenstates). We do this by applying the pinching channel $\calP$ in Eq.~\eqref{eq:pinching} to $\rho^{\otimes \mu}$, obtaining $\rho_{\cl,\mu}\coloneqq\calP(\rho^{\otimes \mu})$.
    By taking sufficiently large $\mu$, the free energy per single copy $\frac{1}{\mu}D(\calP(\rho^{\otimes \mu})\|\tau_{\Gibbs}^{\otimes \mu})$ can be made arbitrarily close to the original free energy $\frac{1}{\mu}D(\rho^{\otimes \mu}\|\tau_{\Gibbs}^{\otimes \mu})=D(\rho\|\tau_{\Gibbs})$.

    \item \label{item:transforming classical states}
    We transform classical state ${\rho}_{\cl,\mu}^{\otimes \nu} = \calP(\rho^{\otimes \mu})^{\otimes \nu}$ to another classical state $\rho_{\cl,\mu}'^{\otimes \nu}$, where $\rho_{\cl,\mu}'$ is a classical state such that its eigenvalues are the same as $\rho'^{\otimes \mu}$ and its eigenstates have similar energies to $\rho'^{\otimes \mu}$. 
    Previous studies~\cite{Bra13} have shown that the condition $D(\rho_{\cl,\mu}\|\tau_{\Gibbs}^{\otimes \mu})\geq D(\rho'_{\cl,\mu}\|\tau_{\Gibbs}^{\otimes \mu})$ for classical states $\rho_{\cl,\mu}$ and $\rho'_{\cl,\mu}$ guarantees the presence of the desired asymptotic transformation (in the standard sense of Eq.~\eqref{eq:standard asymptotic rate}) by a thermal operation.
    Fortunately, we indeed have this condition since
    \bal
    \frac{1}{\mu}D(\rho_{\cl,\mu}\|\tau_{\Gibbs}^{\otimes \mu})\sim D(\rho\|\tau_{\Gibbs})\geq D(\rho'\|\tau_{\Gibbs})\sim \frac{1}{\mu}D(\rho'_{\cl,\mu}\|\tau_{\Gibbs}^{\otimes \mu})
    \eal
    holds and two approximations become arbitrarily accurate by setting sufficiently large $\mu$.

\item \label{item:resource state}  
To apply a coherent unitary in the next step, we distill coherent resource states $\ket{\eta_{L}}=\frac{1}{\sqrt{L}}\sum_{i=0}^L \ket{i}$ from $\rho^{\otimes o(n)}$ on half-infinite ladder systems $Q_l$ with energy interval $\Di_l\in\calQ$.
We choose $L$ so that the desired accuracy is achieved in the next step.
We will see that for our purpose we can set a sufficiently large $L$ to a number depending only on $\mu$, not on $\nu$. 
Hence, by taking $\mu=O(1)$ and $\nu=O(n)$, it suffices to use $O(1)$ copies of $\rho$ to get $\ket{\eta_L}$ working in the next step.

\item \label{item:Aberg} 
We transform $\rho_{\cl,\mu}'^{\otimes \nu}$ to $\rho'^{\otimes \mu\nu}$. 
We achieve this by using the energy-shifted coherent resource states $\ket{\eta_{L,M}}\coloneqq \frac{1}{\sqrt{L}}\sum_{i=0}^L\ket{i+M}$ on $Q_l$ for all $l$ with pumping energy units $M$ into the resource state obtained in the previous step.
We implement a coherent unitary by an energy-conserving unitary together with the coherent resource state and apply this coherent unitary to one copy of $\rho_{\cl,\mu}'$ that rotates the energy eigenbasis of $\rho'_{\cl,\mu}$ to the eigenbasis of $\rho'^{\otimes \mu}$. 
Although the coherent resource states on $Q_l$ change from the original states through this implementation, we reuse these states as coherent resource states and implement the same coherent unitary by $\nu$ times.
We will show that if the free energy ordering $F(\rho)\geq F(\rho')$ is satisfied, then the error of these repeated implementations can be bounded arbitrarily small by taking $M$ sufficiently large but $O(1)$ (independent of $n$).
\end{enumerate}

Besides the $n$ copies of the initial state $\rho$ that are to be transformed to the target state, we use additional $o(n)$ copies of $\rho$ in Steps~(\ref{item:resource state})~and~(\ref{item:Aberg}).
However, those only add $o(n)$ (in fact, constant) number of copies and thus do not affect the ratio, resulting in the unit rate $\frac{n-o(n)}{n}\xrightarrow[n\to\infty]{} 1$.

In the following, we discuss each step in detail.

\subsection{Step~(\ref{item:pinching}): Transform initial states to classical states}

We here transform $\rho^{\otimes \mu}$ to a classical state by applying the pinching channel \eqref{eq:pinching} to $\rho^{\otimes \mu}$ to get 
\bal
\rho_{\cl,\mu}\coloneqq\calP(\rho^{\otimes \mu}),
\eal
which is a classical mixture of energy eigenstates (classical states). 
Since the pinching channel removes coherence, the free energy is generally reduced accordingly during this process. 
Nevertheless, it is known that the decrease in the free energy per copy converges to zero in the infinite copy limit~\cite{Hayashi2002optimal, Gou22} 
\bal
  \lim_{\mu\to\infty} \frac{1}{\mu}D(\calP(\rho^{\otimes \mu})\| \tau_{\Gibbs}^{\otimes \mu}) = D(\rho\|\tau_{\Gibbs}). 
  \label{eq:pinched free energy}
\eal
Hence, by setting $\mu$ sufficiently large, this dephasing process only reduces the free energy per copy by an arbitrarily small amount.

We separately apply $\calP$ to $\nu$ groups of $\rho^{\otimes \mu}$ to obtain $\calP(\rho^{\otimes \mu})^{\otimes \nu}=\rho_{\cl,\mu}^{\otimes \nu}$.

\subsection{Step~(\ref{item:transforming classical states}): Transform classical states}

We express $\mu$ copies of the target state $\rho'$ via the spectral decomposition:
\bal
\rho'^{\otimes \mu}=\sum_j \lambda_j \dm{\psi_j}
\eal
with eigenvalues $\{\lambda_j\}_j$ and eigenbasis $\{\ket{\psi_j}\}_j$. 
We consider the corresponding classical state $\rho'_{\cl,\mu}$ of the form
\bal
 \rho'_{\cl,\mu} = \sum_j \lambda_j \dm{E_{{\bf c}[j]}}.
\eal
Here $\ket{E_{{\bf c}[j]}}$ is an energy eigenstate with energy $E_{{\bf c}[j]} = \sum_l ({\bf c}[j])_l\Di_l$, where ${\bf c}[j]=(c_1[j],\dots, c_{|\calS|}[j])$ is the vector chosen as follows.

Let 
\bal
\ket{\psi_j} = \sum_{\bf c} f_{j,{\bf c}}\ket{E_{\bf c}}
\eal
for some coefficients $f_{j,{\bf c}}$.
Let $\bsc'$ be a vector satisfying $f_{j,\bsc'}\neq 0$.
Then, we choose $\bsc[j]$ so that 
\bal
E_{\bsc[j]}-E_{\bsc'}\in \calC(\rho),
\label{eq:coherent mode condition}
\eal
which guarantees the desired unitary transformation.
Eq.~\eqref{eq:coherent mode condition} allows us to write 
\bal
 E_{{\bf c}[j]}-E_{\bsc'} = \sum_l m_{j\bsc',l} \tilde \Di_l
 \label{eq:decomposing energy difference into ladder interval}
\eal
where $m_{j\bsc',l}$ is an integer and $\tilde\Di_l$ is an element of $\calQ$, i.e., the basis of coherent resonant modes (See Sec.~\ref{sec:ladder}).
We require that the energy $E_{\bsc[j]}$ has a slightly larger energy unit for every mode in $\calQ$ than $\ket{\psi_j}$ on average, i.e.,
\bal
  \sum_{\bsc'} m_{j\bsc',l}|f_{j\bsc'}|^2 > 0 \geq  \sum_{\bsc'} m_{j\bsc',l}|f_{j\bsc'}|^2 - 1,\quad \forall l,j
  \label{eq:classical state energy condition general}.
\eal

We note in passing that the condition \eqref{eq:coherent mode condition} ensures that in the next step we can extract necessary coherent modes from $\rho$.
Precisely, we need to make sure that the unitary that rotates $\ket{E_{{\bf c}[j]}}$ back to $\ket{\psi_j}$---which we discuss in the next step---is something that we can implement using the coherent modes extractable from $\rho$, which satisfies the condition $\calC(\rho')\subseteq\calC(\rho)$.
In other words, the condition $E_{\bsc'}-E_{\bsc[j]}\in \calC(\rho)$ should be satisfied for all $\bsc'$ with $f_{j,\bsc'}\neq 0$.
We touch on this issue in more detail in Step~(\ref{item:Aberg}).

The role of condition \eqref{eq:classical state energy condition general} is two-fold. The first inequality ensures that the average energy only decreases in the system during the basis rotation process in Step~(\ref{item:Aberg}), which allows us to use the same coherent resource state repeatedly. 
On the other hand, the second inequality guarantees that we can transform the classical state $\rho_{\cl,\mu}$ obtained in Step~(\ref{item:pinching}) to $\rho_{\cl,\mu}'$ asymptotically as follows.

We argue that one can transform $\rho_{\cl,\mu}^{\otimes \nu}$ into $\rho_{\cl,\mu}'^{\otimes \nu}$ by a thermal operation with an arbitrary accuracy by taking a sufficiently large $\nu$. 
We first note that
\bal
 D(\rho'_{\cl,\mu}\|\tau_{\Gibbs}^{\otimes \mu}) &= \beta\sum_j \lambda_j E_{{\bf c}[j]}  - S(\rho'_{\cl,\mu}) \\
 & = \beta\sum_j \lambda_j E_{{\bf c}[j]}  - S(\rho'^{\otimes \mu})\\
 & = \beta\sum_j \lambda_j \left(\braket{\psi_j|H|\psi_j}+\sum_{\bsc',l} |f_{j\bsc'}|^2 m_{j\bsc',l}\tilde\Di_l\right) - S(\rho'^{\otimes \mu})\\
 & < \beta\sum_j \lambda_j \left(\braket{\psi_j|H|\psi_j}+\sum_{l=1}^{|\calQ|} \tilde\Di_l\right) - S(\rho'^{\otimes \mu})\\
 & = \mu D(\rho'\|\tau_{\Gibbs}) + \beta \sum_{l=1}^{|\calQ|} \tilde\Di_l.
 \label{eq:relative entropy classical and target}
\eal

Thus we have 
\bal
 \frac{D(\rho_{\cl,\mu}\|\tau_{\Gibbs}^{\otimes \mu})}{\mu}-\frac{D(\rho'_{\cl,\mu}\|\tau_{\Gibbs}^{\otimes \mu})}{\mu} \geq D(\rho\|\tau_{\Gibbs})-D(\rho'\|\tau_{\Gibbs}) -\epsilon_1-\frac{\beta\sum_{l=1}^{|\calQ|} \tilde\Di_l}{\mu} \lb{step2-confirm}
\eal
where $\epsilon_1$ is the deviation of $\frac{D(\rho_{\cl,\mu}\|\tau_{\Gibbs}^{\otimes \mu})}{\mu}=\frac{D(\calP(\rho^{\otimes\mu})\|\tau_{\Gibbs}^{\otimes \mu})}{\mu}$ from $D(\rho\|\tau_{\Gibbs})$.
As argued in the previous subsection, particularly shown in \eqref{eq:pinched free energy}, $\epsilon_1$ can be made arbitrarily small by taking sufficiently large $\mu$.
In addition, as discussed at the beginning of this section, we suppose $D(\rho\|\tau_{\Gibbs})>D(\rho'\|\tau_{\Gibbs})$ without loss of generality.
Since $\sum_{l=1}^{|\calQ|} \tilde\Di_l$ is independent of $\mu$, we now find that there exists a sufficiently large $\mu$ such that the right-hand side of \eref{step2-confirm} becomes positive, particularly guaranteeing $D(\rho_{\cl,\mu}\|\tau_{\Gibbs}^{\otimes \mu})>D(\rho'_{\cl,\mu}\|\tau_{\Gibbs}^{\otimes \mu})$. 

We now recall that, for two classical states $\rho$ and $\sigma$, the asymptotic transformation with vanishing error is possible if and only if $D(\rho\|\tau_{\Gibbs})\geq D(\sigma\|\tau_{\Gibbs})$~\cite{Bra13}.  
This means that the classical state $\rho_{\cl,\mu}^{\otimes \nu}=\calP(\rho^{\otimes \mu})^{\otimes \nu}$ can be transformed to $\rho_{\cl,\mu}'^{\otimes \nu}$ by a thermal operation with an arbitrary accuracy by taking a sufficiently large $\nu$.

\subsection{Step~(\ref{item:resource state}): Distill coherent resource state}

In Step~(\ref{item:Aberg}), we will implement a coherent unitary that transforms $\rho_{\cl,\mu}'$ to the target state $\rho'^{\otimes \mu}$ with thermal operations. 
Since thermal operations are covariant and cannot create coherence from scratch, we need a coherent resource state, with which a thermal operation can implement an (approximate) coherent unitary. 
Here, we aim to prepare the coherent resource state by using a sublinear number of copies of $\rho$.

We first show that a covariant operation can be implemented by a thermal operation with an arbitrary accuracy with the help of non-thermal states. 
\begin{lm}\label{lm:covariant by thermal}
Let $\Phi$ be a covariant channel from $S$ to $S'$, and $\kappa$ be a non-thermal state on another system $\tilde S$.
Then, for an arbitrary real number $\epsilon>0$, there exists a thermal operation $\calE$ and an integer $m$ such that
\bal
 \|\calE(\cdot\otimes\kappa^{\otimes m})-\Phi(\cdot)\|_\diamond\leq\epsilon.
\eal
\end{lm}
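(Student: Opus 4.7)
The plan is to use the operational dilation of covariant operations recalled in the Method section and to reduce the lemma to the preparation of a classical ancilla by a thermal operation. By that dilation, there exist an environment $E$, an energy-diagonal ancilla $\kappa_{\rm diag}$ on $E$, and an energy-conserving isometry $V_{SE\to S'E}$ such that
\eq{
\Phi(\rho)=\Tr_E[V(\rho\otimes\kappa_{\rm diag})V^\dagger].
}
Since $V$ is already energy-conserving, the map $\sigma_{SE}\mapsto \Tr_E[V\sigma_{SE}V^\dagger]$ is itself a thermal operation from $SE$ to $S'$ (one may append a trivial Gibbs factor). Hence it is enough to produce $\kappa_{\rm diag}$ on $E$ from $\kappa^{\otimes m}$ by a thermal operation $\calD$ with $\|\calD(\kappa^{\otimes m})-\kappa_{\rm diag}\|_1\le \epsilon'$; composing such a $\calD$ with the $V$-dilation and invoking the data-processing inequality converts the trace-distance bound into the diamond-norm bound $\|\calE(\cdot\otimes\kappa^{\otimes m})-\Phi(\cdot)\|_\diamond\le \epsilon'$, so taking $\epsilon'=\epsilon$ yields the claim.

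In the generic case $\calP(\kappa)\neq\tau_{\Gibbs}$, I would construct $\calD$ as follows. First apply the pinching channel $\calP$, which is itself a thermal operation, to each copy of $\kappa$, obtaining the classical state $\calP(\kappa)^{\otimes m}$ with strictly positive free-energy density $D(\calP(\kappa)\|\tau_{\Gibbs})>0$. Since $\kappa_{\rm diag}$ is also classical, the remaining step is a classical-to-classical conversion, to which the asymptotic free-energy criterion of Brand\~ao et al.\ (the same result invoked in Step~(2) of the main protocol) applies: for $m$ large enough one has $m\,D(\calP(\kappa)\|\tau_{\Gibbs})>D(\kappa_{\rm diag}\|\tau_{\Gibbs})$, and there exists a thermal operation mapping $\calP(\kappa)^{\otimes m}$ to $\kappa_{\rm diag}$ (tensored with spectator Gibbs factors that are discarded) within trace distance $\epsilon'$.

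The residual obstacle is the degenerate case in which $\calP(\kappa)=\tau_{\Gibbs}$ while $\kappa\neq\tau_{\Gibbs}$, so that $\kappa$ carries only coherence and the pinching step destroys every bit of resource. To handle it I would split the $m$ copies of $\kappa$ into two groups: the first group is used, in the manner of \AA{}berg's catalytic coherence recalled in Step~(4) of the main text, as a coherent ancilla for an energy-conserving unitary that rotates the second group into a state whose pinched image is no longer the Gibbs state. Once the second group acquires nonzero classical free energy, the previous case applies. Controlling the compounding of the coherence-implementation error and the subsequent classical-distillation error uniformly in $m$ is the technically delicate point of this branch of the argument.
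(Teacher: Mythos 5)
Your reduction via the classical-ancilla dilation $\Phi(\rho)=\Tr_E[V(\rho\otimes\kappa_{\rm diag})V^\dagger]$ and the data-processing argument for the diamond norm are correct, and this is a genuinely different route from the paper, which instead uses the \emph{pure-state} dilation $\Phi(\rho)=\Tr_{E'}[U(\rho\otimes\dm{0}_E)U^\dagger]$ (citing~\cite{Keyl1999optimal,Marvian2008building}) and then distills the zero-energy eigenstate $\ket{0}_E$ from $\kappa^{\otimes m}$ via the Brand\~{a}o et al.\ asymptotic rate~\cite{Bra13}, $r=D(\kappa\|\tau_{\Gibbs,\tilde S})/D(\dm{0}_E\|\tau_{\Gibbs,E})>0$. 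Both dilations are legitimate; the paper's choice simplifies the second step because the target ancilla is unambiguous and pure.

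The genuine gap is in your handling of the degenerate case $\calP(\kappa)=\tau_{\Gibbs}$. Your proposed fix via \AA{}berg-style catalytic coherence is incomplete, unnecessarily complicated, and risks circularity: to use the first group $\kappa^{\otimes m_1}$ as a coherent ancilla in the \AA{}berg protocol you would first need to shape it into a broad ladder state $\ket{\eta_L}$, and the tool for doing so by a thermal operation is precisely the lemma you are trying to prove (through \cref{lem:coherence distillation} and \cref{lm:dilution}). Moreover, the case split is not needed at all. The defect in your generic argument is that you pinch \emph{copy by copy}, obtaining $\calP(\kappa)^{\otimes m}$, which indeed collapses to $\tau_{\Gibbs}^{\otimes m}$ when $\calP(\kappa)=\tau_{\Gibbs}$. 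Pinch instead in blocks of $\mu$ copies, producing $\calP(\kappa^{\otimes\mu})$: by the result of Ref.~\cite{Hayashi2002optimal} (Eq.~\eqref{eq:pinched free energy} of the Supplement), $\frac{1}{\mu}D(\calP(\kappa^{\otimes\mu})\|\tau_{\Gibbs}^{\otimes\mu})\to D(\kappa\|\tau_{\Gibbs})>0$ for any non-thermal $\kappa$, regardless of whether $\calP(\kappa)=\tau_{\Gibbs}$. After block-pinching, the classical asymptotic criterion of Ref.~\cite{Bra13} applies exactly as in your generic case and yields $\kappa_{\rm diag}$ with the desired accuracy. This is also effectively what the paper invokes when citing the rate $D(\kappa\|\tau)/D(\dm{0}\|\tau)$ directly for a possibly coherent $\kappa$: because the target is classical, block-pinching preserves the free-energy density, so no special treatment of purely coherent $\kappa$ is required.
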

\begin{proof}
As shown in Refs.~\cite{Keyl1999optimal,Marvian2008building}, any covariant channel $\Phi:S\to S'$ with $\dim S, \dim S'<\infty$ can be implemented by an energy-conserving unitary $U$ applying on $S$ and some external system $E$ with $\dim H_E<\infty$ equipped with some Hamiltonian $H_E$ as
\bal
 \calE(\rho) = \Tr_{E'} \left[U \left(\rho \otimes \dm{0}_E\right)U^\dagger\right]
 \label{eq:covariant dilution}
\eal
where $\ket{0}_E$ is the zero energy eigenstate of $H_E$.
Therefore, it suffices to show that the pure state $\ket{0}_E$ can be prepared by many copies of non-thermal state $\kappa$.
This can be seen by the fact shown in Ref.~\cite{Bra13} that asymptotic transformation from $\kappa\otimes\tau_{\Gibbs,E}$ to $\tau_{\Gibbs,\tilde S}\otimes\dm{0}_E$ can be done at the rate of
\bal
r&=D(\kappa\otimes \tau_{\Gibbs,E}\|\tau_{\Gibbs,\tilde S}\otimes \tau_{\Gibbs,E})/D(\tau_{\Gibbs,\tilde S}\otimes \dm{0}_E\|\tau_{\Gibbs,\tilde S}\otimes\tau_{\Gibbs,E})\\
&= D(\kappa\|\tau_{\Gibbs,\tilde S})/D(\dm{0}_E\|\tau_{\Gibbs,E})\\
&>0.
\eal
This, together with the fact that partial trace is also a thermal operation, means that there exists a sufficiently large integer $m$ such that $\kappa^{\otimes m}$ can be transformed to $\dm{0}_E^{\otimes rm}$ with an arbitrary accuracy by a thermal operation, which then results in $\dm{0}_E$ by tracing out unwanted copies. 

\end{proof}
This allows us to focus on preparing the desired coherent state with a covariant operation, as in our protocol, we are allowed to use $o(n)$ copies of $\rho$, which serves as the non-thermal state that enables the implementation of covariant operations.

Then, the following result shows that one can always distill the maximally coherent state with a specific coherence mode from the given state.

\begin{lm}\label{lem:coherence distillation}
    Let $\calD(\rho)$ be the coherent modes of $\rho$ in Definition~\ref{def:mode of asymmetry}, and let $\ket{+}_\delta=\frac{1}{\sqrt{2}}(\ket{0}+\ket{1})$ be the maximally coherent bit on the two-level system with Hamiltonian $H=\delta\dm{1}$. 
    Then, for an arbitrary energy level $\delta\in\calD(\rho)$ in the coherent modes, an arbitrary integer $k>0$, and an arbitrary real number $\epsilon>0$, there exists an integer $\nu$ and a thermal operation $\Lambda$ such that 
    \bal
 \|\Lambda(\rho^{\otimes \nu}) - \dm{+}_\delta^{\otimes k}\|_1<\epsilon.
\eal
\end{lm}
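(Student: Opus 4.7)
The plan is to use Lemma~\ref{lm:covariant by thermal} to reduce the statement to a purely covariant distillation task: since any state $\rho$ with $\delta\in\calD(\rho)$ is automatically non-thermal, a finite additional number of copies of $\rho$ itself can supply the non-thermal resource needed to promote any covariant channel to a thermal one at arbitrary accuracy. Thus it suffices to construct, for some $\nu'$, a covariant operation $\Phi$ with $\|\Phi(\rho^{\otimes \nu'})-\dm{+}_\delta^{\otimes k}\|_1<\epsilon/2$.

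To build $\Phi$, I would first use that $\delta\in\calD(\rho)$ guarantees the existence of energy eigenstates $\ket{E_i}$ and $\ket{E_j}$ with $E_i-E_j=\delta$ and $\braket{E_i|\rho|E_j}\neq 0$. The covariant partial isometry that embeds the two-dimensional subspace spanned by $\ket{E_j}$ and $\ket{E_i}$ into the target qubit with Hamiltonian $\delta\dm{1}$, while routing the orthogonal complement to an incoherent classical flag that is subsequently discarded, produces on the qubit register a subnormalized state carrying a nonzero off-diagonal element on mode $\delta$. Applying this extraction to each factor of $\rho^{\otimes \nu'}$ yields many copies of a partially coherent qubit state $\rho_2$.

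From many copies of $\rho_2$, I would produce a state close to $\dm{+}_\delta^{\otimes k}$ via a further covariant operation. The crucial observation is that the lemma holds $k$ and $\epsilon$ fixed while allowing $\nu$ to be arbitrarily large, so the effective rate is $k/\nu\to 0$; this sub-linear regime is not obstructed by the coherence no-distillation theorem~\cite{marvian2020coherence}, which only forbids positive asymptotic rates from a full-rank mixed coherent state. A concrete protocol is to first concentrate the distributed coherence of $\rho_2^{\otimes m}$ into a single broad coherent reference on a ladder of energy levels via a collective energy-preserving unitary, and then to transfer coherence from this reference to $k$ fresh qubits using the catalytic-coherence construction of Ref.~\cite{Abe14}, which requires only that the reference be supported over $\Omega(k/\epsilon)$ contiguous energy levels.

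The hard part will be the quantitative analysis of the concentration step: showing that collective energy-preserving unitaries on $\rho_2^{\otimes m}$ genuinely produce a reference register whose energy distribution is supported on arbitrarily many contiguous levels with error vanishing in $m$. One natural route is to work in the $U(1)$-invariant block decomposition of $\rho_2^{\otimes m}$ and invoke concentration of the empirical excitation-number distribution to argue that, conditional on lying in a typical block, the state is close to a pure state with broad support on contiguous eigenenergies; this state can then be rotated into the reference register by a covariant unitary so that the subsequent catalytic-coherence transfer hits the prescribed error tolerance.
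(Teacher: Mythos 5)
Your first step is exactly the paper's: you invoke Lemma~\ref{lm:covariant by thermal} to reduce the claim to producing a \emph{covariant} channel $\Phi$ with $\|\Phi(\rho^{\otimes\nu'})-\dm{+}_\delta^{\otimes k}\|_1<\epsilon/2$, using extra copies of the non-thermal state $\rho$ to pay for the upgrade. You also correctly observe that the sub-linear regime ($k$ fixed, $\nu\to\infty$) is not forbidden by the no-distillation theorem, which only rules out a positive asymptotic rate. Where you part ways with the paper is how this covariant $\Phi$ is obtained: the paper does not construct it but instead cites Lemma~S.16 of Ref.~\cite{ST23}, which is precisely the statement that covariant operations can distill any fixed number $k$ of copies of $\ket{+}_\delta$, with arbitrarily small error, from sufficiently many copies of any state with $\delta\in\calD(\rho)$. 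That lemma is a nontrivial result in its own right; attempting to re-derive it in a paragraph is where your proof breaks.

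The genuine gap is in the ``concentration'' step. Your extraction of a two-level coherent marginal $\rho_2$ with a nonzero off-diagonal on mode $\delta$ is fine (a covariant isometry $\ket{E_j}\mapsto\ket{0}\ket{E_j}_E$, $\ket{E_i}\mapsto\ket{1}\ket{E_j}_E$, rest to a flag, is covariant and preserves the off-diagonal entry). But the next step---that ``conditional on lying in a typical block, the state $\rho_2^{\otimes m}$ is close to a pure state with broad support on contiguous eigenenergies''---is false whenever $\rho_2$ is full rank, which is the generic and hardest case. Conditioning on a total-excitation-number sector is the covariant pinching channel; the restriction of $\rho_2^{\otimes m}$ to any $N$-sector is a highly mixed permutation-symmetric state, not close to pure, and the cross-sector off-diagonal blocks (the actual coherence) do not get amplified merely by passing to a typical window of $N$. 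In particular, this ``reference'' fails the overlap condition that the catalytic-coherence construction of Ref.~\cite{Abe14} requires, so the final transfer step has no guaranteed accuracy. You correctly flag this as ``the hard part'' but offer no argument that actually closes it; the honest route is to cite the established result \cite[Lemma~S.16]{ST23}, or to reproduce that lemma's proof, which is substantially more delicate than a typicality estimate on the excitation-number distribution.
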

\begin{proof}

The result in Ref.~\cite[Lemma S.16]{ST23} ensures that for an arbitrary coherence mode $\delta\in\calD(\rho)$, an arbitrary integer $k>0$, and a real number $\epsilon'>0$, there exists an integer $\nu'$ and a covariant operation $\Phi$ such that 
\bal
 \|\Phi(\rho^{\otimes \nu'}) - \dm{+}_\delta^{\otimes k}\|_1<\epsilon'.
 \label{eq:coherence distillation}
\eal
This covariant operation $\Phi$ can be implemented by a thermal operation by using additional copies of $\rho$ as ensured by Lemma~\ref{lm:covariant by thermal}.

\end{proof}

Furthermore, the copies of a two-level coherent state can be transformed to a uniform coherent state on a higher-dimensional system. 

\begin{lm}\label{lm:dilution}
Let $\ket{+}_\delta=\frac{1}{\sqrt{2}}(\ket{0}+\ket{1})$ be the maximally coherent state on the two-level system $T$ with Hamiltonian $H_T=\delta \dm{1}$, and let $\ket{\eta_L}\coloneqq \frac{1}{\sqrt{L}}\sum_{j=0}^{L-1} \ket{j}$ be the uniform coherent state over $L$ energy levels on a finite-dimensional system $T'$ with Hamiltonian $H_{T'}=\sum_{k=-K}^{K} k\delta\dm{k}$ for a positive integer $K>L$.
For an arbitrary $\epsilon>0$ and a positive integer $L$, there exists a sufficiently large integer $m$ such that there is a covariant operation $\Phi$ satisfying 
\bal
 \|\Phi(\dm{+}_\delta^{\otimes m}) - \dm{\eta_L}\|_1\leq \epsilon .
\eal

\end{lm}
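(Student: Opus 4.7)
The plan is to convert $\ket{+}_\delta^{\otimes m}$ into $\ket{\eta_L}$ in two covariant stages: first, collapse the tensor-product representation to the symmetric (Hamming-weight) ladder, and then reshape the resulting Gaussian-envelope coherent state into the uniform-envelope state $\ket{\eta_L}$.

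For the first stage, expand the input in the Hamming-weight basis,
\begin{equation}
 \ket{+}_\delta^{\otimes m}=\sum_{w=0}^{m}a_w\ket{W_w}, \qquad a_w=\sqrt{\binom{m}{w}2^{-m}},
\end{equation}
where $\ket{W_w}$ is the normalized equal-weight superposition of Hamming-weight-$w$ bitstrings and is an energy eigenstate with eigenvalue $w\delta$. The linear map $V_1:\ket{W_w}\mapsto\ket{w}_{\tilde T}$ into an auxiliary ladder $\tilde T$ of spacing $\delta$ is energy-preserving on the symmetric subspace; padding it on the non-symmetric complement with an energy-preserving completion into an ancilla yields a covariant isometry. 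This produces the intermediate pure state $\ket{\psi_m}=\sum_w a_w\ket{w}_{\tilde T}$ whose amplitudes, by Stirling, form a near-Gaussian envelope of width $\Theta(\sqrt m)$ centered at $m/2$. In particular, the amplitudes on any window of $L$ consecutive levels around $m/2$ are equal up to a multiplicative factor $1+O(L^2/m)$.

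For the second stage, I would covariantly align this Gaussian envelope with the uniform envelope of $\ket{\eta_L}$ on $L$ consecutive levels of $T'$. A direct projection onto an $L$-window around the mode fails because its mass is $O(L/\sqrt m)\to 0$, so the operation must redistribute amplitude rather than select. The cleanest route I envisage is a cascaded pump: build $\ket{\eta_L}$ one level at a time by iterating a covariant merging step that takes $\ket{\eta_k}$ together with a few fresh copies of $\ket{+}_\delta$ to a state close to $\ket{\eta_{k+1}}$, with per-step error $O(\epsilon/L)$ by a local central-limit estimate on the binomial amplitudes. After $L-1$ such steps, consuming $m=\mathrm{poly}(L/\epsilon)$ copies in total, the accumulated trace-distance error is below $\epsilon$. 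Each merging step is energy-conserving by construction (it only couples levels differing by $\delta$), so the overall map is covariant; equivalently, one can package the whole thing as a single covariant isometry $V_2:\tilde T\to T'\otimes A$ plus discarding the ancilla $A$.

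The main obstacle I anticipate is Step~2: producing a rigorously covariant isometry whose action on the Gaussian-shaped $\ket{\psi_m}$ approximates $\ket{\eta_L}\otimes\ket{\phi}_A$. The difficulty is the mismatch between an envelope of width $\sqrt m$ and a flat envelope of width $L$, which prevents any single energy-preserving rearrangement from succeeding in one shot; this is why the cascaded construction (or an equivalent convex combination of energy-shifted windowings, indexed by a covariantly readable classical label recording the attained Hamming weight) is needed. Once this step is established, combining $V_2\circ V_1$ gives the desired covariant operation $\Phi$, and the bound on trace distance follows from the flatness estimate $O(L^2/m)$ inside the bulk together with the Hoeffding tail bound $O(e^{-cL^2/m})$ on the probability escaping the bulk; both are made smaller than $\epsilon/2$ by taking $m$ sufficiently large in $L$ and $\epsilon$.
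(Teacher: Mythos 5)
There is a genuine gap, and you have in fact named it yourself: Step~2 is not established, only gestured at. Step~1 (compressing $\ket{+}_\delta^{\otimes m}$ onto the symmetric Hamming-weight ladder by an energy-conserving partial isometry, yielding a binomial-amplitude state $\ket{\psi_m}=\sum_w\sqrt{\binom{m}{w}2^{-m}}\,\ket{w}$) is sound. But the entire content of the lemma is the second step, and neither of your two candidate mechanisms is actually a proof. The ``cascaded pump'' reduces the problem to showing that $\ket{\eta_k}\otimes\ket{+}_\delta^{\otimes c_k}\to\ket{\eta_{k+1}}$ is achievable covariantly with error $O(\epsilon/L)$; this is a genuine one-shot pure-state conversion under covariant operations, for which no explicit energy-conserving isometry (nor an error bound) is given, and the appeal to a ``local central-limit estimate'' doesn't produce one. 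The alternative ``convex combination of energy-shifted windowings indexed by a covariantly readable label'' would output a mixture of translates of windowed states rather than a state close to the \emph{pure} $\ket{\eta_L}$, so as described it does not land on the target in trace norm. You have correctly diagnosed \emph{why} a one-shot windowing fails (mass $O(L/\sqrt m)$) but not supplied the replacement mechanism; writing ``once this step is established'' concedes the gap rather than closing it.

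The paper avoids this entirely by citing Theorem~1 of Marvian (Ref.~\cite{Marvian2022operational-interpretation}): the asymptotic conversion rate between pure states under covariant operations is governed by the ratio of their quantum Fisher informations (equivalently, energy variances). Since $\ket{+}_\delta$ has positive Fisher information, $m$ copies of it suffice asymptotically to produce one copy of $\ket{\eta_L}$ to arbitrary accuracy, and that is the entire proof. Your Step~1 is a natural preprocessing reduction, and a fully worked-out Step~2 would amount to reproving a special case of that theorem (the achievability of pure-to-pure covariant conversion with positive rate). If you want a self-contained argument you should either (i) adapt Marvian's concentration/phase-estimation construction to go directly from the binomial state $\ket{\psi_m}$ to $\ket{\eta_L}$, or (ii) simply invoke the known result as the paper does.
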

\begin{proof}
The result in Ref.~\cite[Theorem 1]{Marvian2022operational-interpretation} ensures that $\dm{+}_\delta$ can be asymptotically transformed to any state on the system $T'$ by a covariant operation. This particularly means that using a sufficiently large number $\dm{+}_\delta$, a covariant operation can transform them to a single copy of $\ket{\eta_L}$ with an arbitrary accuracy.

\end{proof}

Combining Lemmas~\ref{lm:covariant by thermal}, \ref{lem:coherence distillation}, and \ref{lm:dilution}, we obtain the following. 

\begin{lm}\label{thm:prepareation of coherent resource state}
For an arbitrary $\epsilon>0$, positive integer $L$, and a state $\rho$ such that $\Di\in\calD(\rho)$, there exists a sufficiently large integer $m$ and a thermal operation $\Lambda$ such that 
\bal
 \|\Lambda(\rho^{\otimes \mu}) - \dm{\eta_L}_{\Di}\|_1\leq \epsilon .
\eal
\end{lm}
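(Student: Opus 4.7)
The plan is to obtain this lemma as a direct composition of the three previous lemmas already established in the excerpt, namely Lemma~\ref{lem:coherence distillation} (coherence distillation into copies of $\ket{+}_\Di$), Lemma~\ref{lm:dilution} (dilution of copies of $\ket{+}_\Di$ into $\ket{\eta_L}$), and Lemma~\ref{lm:covariant by thermal} (simulation of any covariant channel by a thermal operation with the help of auxiliary non-thermal copies). The strategy is to chain these three steps and control error propagation via the triangle inequality together with data-processing of the trace norm.

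Concretely, the first step is to choose an error budget split $\epsilon = \epsilon_1 + \epsilon_2 + \epsilon_3$. Given the mode $\Di \in \calD(\rho)$, I invoke Lemma~\ref{lm:dilution} to find an integer $k$ and a covariant operation $\Phi$ on $T^{\otimes k}$ satisfying $\|\Phi(\dm{+}_\Di^{\otimes k}) - \dm{\eta_L}_\Di\|_1 \leq \epsilon_1$. Next, with $k$ fixed, I apply Lemma~\ref{lem:coherence distillation} to obtain an integer $\nu_1$ and a thermal operation $\Lambda_{\rm dist}$ with
\begin{equation}
\|\Lambda_{\rm dist}(\rho^{\otimes \nu_1}) - \dm{+}_\Di^{\otimes k}\|_1 \leq \epsilon_2.
\end{equation}
The third step is to realize $\Phi$ by a thermal operation: since $\rho$ has nonzero coherence on the mode $\Di$, it is in particular non-thermal, so Lemma~\ref{lm:covariant by thermal} yields an integer $\nu_2$ and a thermal operation $\calE$ with $\|\calE(\,\cdot\,\otimes \rho^{\otimes \nu_2}) - \Phi(\,\cdot\,)\|_\diamond \leq \epsilon_3$.

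The composite protocol is then: take $m = \nu_1 + \nu_2$ copies of $\rho$, apply $\Lambda_{\rm dist}$ to the first $\nu_1$ copies to produce an approximation of $\dm{+}_\Di^{\otimes k}$, and then feed the result together with the remaining $\nu_2$ copies of $\rho$ into $\calE$. Call this combined thermal operation $\Lambda$. By the triangle inequality, the contractivity of the trace norm under quantum channels, and the definition of the diamond norm,
\begin{equation}
\begin{aligned}
\|\Lambda(\rho^{\otimes m}) - \dm{\eta_L}_\Di\|_1
&\leq \|\calE(\Lambda_{\rm dist}(\rho^{\otimes \nu_1}) \otimes \rho^{\otimes \nu_2}) - \Phi(\Lambda_{\rm dist}(\rho^{\otimes \nu_1}))\|_1 \\
&\quad + \|\Phi(\Lambda_{\rm dist}(\rho^{\otimes \nu_1})) - \Phi(\dm{+}_\Di^{\otimes k})\|_1
+ \|\Phi(\dm{+}_\Di^{\otimes k}) - \dm{\eta_L}_\Di\|_1 \\
&\leq \epsilon_3 + \epsilon_2 + \epsilon_1 = \epsilon,
\end{aligned}
\end{equation}
which is the desired bound.

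There is no real mathematical obstacle here since every ingredient is supplied by the preceding lemmas; the only subtlety is bookkeeping. The two small points requiring mild care are (i) choosing the error budgets in the right order, fixing $\epsilon_1$ first, then $k$, then $\epsilon_2$ and $\epsilon_3$, since the later constants depend on the earlier ones, and (ii) ensuring that Lemma~\ref{lm:covariant by thermal} may indeed be invoked with the resource state $\rho$ itself, which is legitimate because the hypothesis $\Di \in \calD(\rho)$ already rules out the possibility that $\rho$ is the Gibbs state. Once these are in place, the statement follows immediately.
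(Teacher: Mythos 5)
Your proof takes essentially the same route as the paper's: both chain Lemma~\ref{lem:coherence distillation} (distill $\ket{+}_\Di$), Lemma~\ref{lm:dilution} (dilute to $\ket{\eta_L}_\Di$), and Lemma~\ref{lm:covariant by thermal} (simulate the covariant dilution step by a thermal operation using extra copies of $\rho$), and your version merely spells out the error budget, the order of quantifiers, and the triangle/data-processing/diamond-norm bookkeeping that the paper's one-paragraph proof leaves implicit.
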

\begin{proof}
 Lemma~\ref{lem:coherence distillation} ensures that one can transform copies of $\rho$ into the $\ket{+}_\Di$ with an arbitrary accuracy by a covariant operation, where $\ket{+}_\Di = \frac{1}{\sqrt{2}}(\ket{0}+\ket{1})$ is the maximally coherent state on the two-level system with Hamiltonian $H_{\Di}=\Di\dm{1}$.  
 Lemma~\ref{lm:dilution} then ensures that one can transform copies of $\ket{+}_{\Di}$ to $\ket{\eta_L}_\Di$.
 Finally, Lemma~\ref{lm:covariant by thermal} ensures that this covariant operation can be implemented by a thermal operation by using the additional number of copies of $\rho$.
 
\end{proof}

By running this distillation process for every $\tilde\Di_l\in\calD(\rho)$, we get the coherent resource state $\ket{\eta_L}_{\tilde\Di_l}$ for every $\tilde\Di_l\in\calD(\rho)$.
This eventually provides the coherent state 
\bal
\eta_{\calD(\rho)}=\otimes_{l:\tilde\Di_l\in\calD(\rho)}\dm{\eta_L}_{\tilde\Di_l}
\label{eq:coherent resource state on coherent modes}
\eal
with arbitrary accuracy. 
We label the system that hosts $\ket{\eta_L}_{\tilde\Di_l}$ as $\tilde Q_l$, which has a non-degenerate energy levels with Hamiltonian $H_{\tilde Q_l}=\sum_{j=-K}^K j\tilde\Di_l \dm{j}$ for an integer $K>L$.

We now argue that, using these states, we can prepare $\ket{\eta_L}_{\tilde\Di}$ on the system $Q_{\tilde\Di}$ with Hamiltonian $H_{\tilde\Di}=\sum_j j\tilde\Di\dm{j}$ for an arbitrary $\tilde\Di\in\calQ$.
The crucial point is that, because $\calQ\subset\calC(\rho)$, any $\tilde\Di\in\calQ$ can be written by an integer-linear combination of the elements in $\calD(\rho)$ as $\tilde\Di=\sum_l c_l\tilde\Di_l$ 
for some integers $c_l$ and $\tilde\Di_l\in\calD(\rho)$.
Then, one can implement an arbitrary unitary $V$ on $Q_{\tilde\Di}$ with arbitrary accuracy by applying~\cite{Kitaev2004superselection,Abe14,TS22,ST23} 
\bal
 U_{Q_{\tilde\Di} \tilde Q}=\sum_{j,{\bf c, c'}} \dm{j'} V \dm{j}_{\Delta} \otimes \bigotimes_l S^{c_l(j-j')}_{\tilde Q_l} + W_{Q_{\tilde\Di} \tilde Q}
 \label{eq:coherence transfer}
\eal
with $\eta_{\calD(\rho)}$ on $\otimes_l \tilde Q_l$ with a sufficiently large $L$, where $S_{\tilde Q_l}\coloneqq \sum_{j=0}^\infty \ketbra{j+1}{j}_{\tilde\Di_l}$ is the ladder-shift operator, and $W_{Q_{\tilde\Di} \tilde Q}$ is an energy-conserving operator that makes $U_{Q_{\tilde\Di} \tilde Q}$ an energy-conserving unitary.
In particular, this allows us to prepare $\ket{\eta_L}_{\Di}$ on $Q_\Delta$ with arbitrary accuracy by implementing a unitary $V$ such that $V\ket{0}=\ket{\eta_L}_{\Di}$, which creates the desired coherent resource state from the incoherent pure state on $Q_{\tilde\Di}$, which can be prepared with arbitrary accuracy by using a sufficient (constant) number of copies of $\rho$ by a thermal operation following the same argument in the proof of Lemma~\ref{lm:covariant by thermal}.
Calling the system that hosts $\ket{\eta_L}_{\tilde\Di_l}$ for $\tilde\Di_l\in\calQ$ $Q_l$, we showed that the coherent resource state 
\bal
\eta_{\calQ}=\otimes_{l:\tilde\Di_l\in\calQ}\dm{\eta_L}_{\tilde\Di_l}
\label{eq:coherent resource state}
\eal
with an arbitrary constant $L$ can be prepared with arbitrary accuracy by using a constant number of the input state $\rho$.

\subsection{Step~(\ref{item:Aberg}): Transform classical states to the target state}

We finally transform the state $\rho_{\cl,\mu}$ to $\rho'^{\otimes \mu}$ by using the coherent resource state obtained in Step~(\ref{item:resource state}).
This amounts to implementing repeatedly the unitary $U_\mu$ that connects $\ket{E_{{\bf c}[j]}}$ and $\ket{\psi_j}$ as $U_\mu\ket{E_{{\bf c}[j]}}=\ket{\psi_j}$.
In this subsection, we first show that we can successfully implement the unitary $U_\mu$, and then we demonstrate that we can repeatedly apply this implementation by reusing a single coherent resource state accurately.

We first clarify the action of the unitary $U_\mu$, which works as 
\bal
 U_\mu\rho_{\cl,\mu}' U_\mu^\dagger = \sum_j \lambda_j U_\mu\dm{E_{{\bf c}[j]}} U_\mu^\dagger = \sum_j \lambda_j \dm{\psi_j} = \rho'^{\otimes \mu}. 
 \label{eq:unitary basis rotation}
\eal
As $\ket{\psi_j}$ is generally not an energy eigenstate, the unitary $U_\mu$ needs to be a coherence-generating unitary.  
We implement this by using a coherent resource state. 
Recalling that $\ket{\psi_j}=\sum_{\bf c}f_{j,{\bf c}}\ket{E_{\bf c}}$, we particularly consider the following energy-conserving unitary applying over $S^{\otimes \mu}$ and the system $\otimes_l Q_l$ given by 
\bal
 U_{SQ}=\sum_{j,{\bf c, c'}} f_{j,{\bf c'}}\ketbra{E_{\bf c'}}{E_{{\bf c}[j]}}_S\otimes \bigotimes_l S^{m_{j\bsc',l}}_{Q_l} + W_{SQ}
 \label{eq:Aberg unitary}
\eal
where $m_{j\bsc',l}$ is the coefficient in \eqref{eq:decomposing energy difference into ladder interval}, and $W_{SQ}$ is an energy-conserving operator that makes $U_{SQ}$ unitary.
This unitary operator is constructed so that the energy change in the system $S$ is compensated by the energy change in the ladder system $Q$.
We implement the desired unitary $U_\mu$ by employing $U_{SQ}$ with the resource state $\eta_{\calQ}$ in \eqref{eq:coherent resource state} by having a sufficiently large $L$~\cite{Abe14,TS22} in order to achieve an arbitrary accuracy.

Recalling that $\calC(\rho)$ can be constructed by integer linear combination of $\calQ$, we conclude that the above implementation indeed succeeds if $\calC(\rho)$ contains all the relevant modes in creating $\ket{\psi_j}$ from $\ket{E_{\bsc'[j]}}$, and it is precisely ensured by the condition \eqref{eq:coherent mode condition}.

At the first glance, it appears to require us to implement $U_\mu^{\otimes \nu}$ in order to create $\rho'^{\otimes \mu\nu}$.
However, since we take $\mu=O(1)$ with respect to the initial number $n$ of copies, implying $\nu=O(n)$, we need to implement $U_\mu^{\otimes \nu}$ by a resource state $\ket{\eta_{L}}$ with exponentially large $L$ with respect to $n$~\cite{Abe14} or prepare $\nu=O(n)$ copies of resource states $\ket{\eta_{L}}$ with $L=O(1)$, both of which cannot be prepared by using a sublinear number of copies of $\rho$.
This drastically affects the asymptotic transformation rate and will not lead to our main claim.

Fortunately, we can overcome this difficulty by modifying the target state in a way that it still meets our needs.
Instead of preparing $\rho'^{\otimes \mu\nu}$ with arbitrary accuracy, we aim to obtain a state $\Xi$ on the system $S^{\otimes \mu\nu}$ such that a marginal state on every subsystem $S$ is close to $\rho'$, i.e.,  $\Tr_{\overline{i}}[\Xi]\simeq \rho'$ for all $i$. 
To this end, we \emph{reuse} the same coherent resource state $\nu$ times, which requires us to prepare the resource state $\ket{\eta_{L}}$ only once.
Indeed, Ref.~\cite{Abe14} showed that after application of the unitary $U_{SE}$ in \eqref{eq:Aberg unitary}, the marginal state on $Q$ can be reused as a resource state in the next round. 
The crucial observation there was that, although the resource state itself gets changed, it realizes the implementation of the unitary on $S$ with exactly the same accuracy---with no degradation in the capability of unitary implementation. 
This allows us to keep using the coherent resource state that ends up on system $E$ after the previous implementation of $U_{\mu}$---in particular, we aim to apply $U_{\mu}$ sequentially for $\nu=O(n)$ times so that we get the desired state $\Xi$ that locally approximates $\rho'$ with an arbitrary accuracy (Fig.~\ref{fig:Aberg_repeat}). 

\begin{figure}
    \centering
    \includegraphics[width=0.7\linewidth]{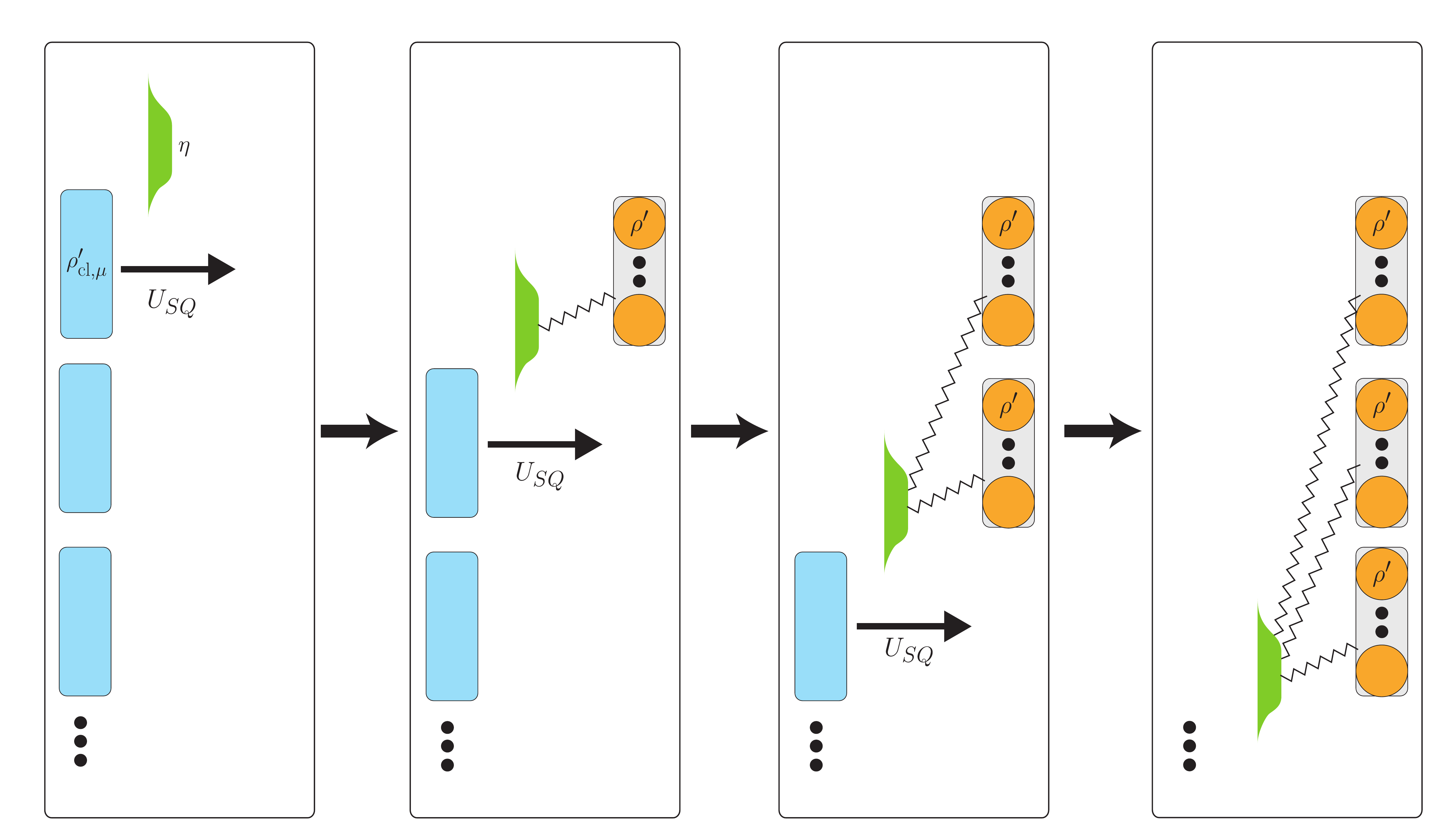}
    \caption{Sequential implementation of the coherent unitary $U_\mu$ with a coherent resource state $\eta$. This approximately produces a state whose marginal on each subsystem $S$ is close to $\rho'$ with potential correlation with other subsystems and catalytic system.}
    \label{fig:Aberg_repeat}
\end{figure}

\figin{8cm}{random-walk}{
The dynamics of the resource state on $Q$.
If the energy in the system is lowered (resp. raised), the resource state is raised (resp. lowered).
This dynamics can be evaluated by using the theory of classical random walks.
}{random-walk}

Here we need a careful examination on the validity of repeated applications. 
After one application of the unitary implementation, the resource state on $Q$ itself changes in a way that the energy levels involved in the state are spread out (\fref{random-walk}). 
This particularly becomes a problem when this spread hits the ground state---once the energy spread is blocked at the ground energy level, the perfect repeatability of this protocol does not stand any longer and the error in implementing $U_\mu$ starts increasing from that point.

We shall demonstrate that the above undesirable event has only negligible probability, which is shown by modeling this energy spread as a random walk. 
As we will discuss, the constraint \eqref{eq:classical state energy condition general} ensures that the random walk moves to the positive direction, allowing us to bound the probability of hitting the ground energy using the martingale theory.
This analysis particularly guarantees that supplying a sufficiently large but \emph{constant} energy $M=O(1)$ for the energy-shifted resource state
\bal
\ket{\eta_{L,M}}\coloneqq\frac{1}{\sqrt{L}}\sum_{i=0}^{L-1} \ket{i+M}
\eal
suffices to keep the performance of repetitive use of coherent states in the desired accuracy. 

We here note that, if the resource state hits the \emph{upper} limit of the Hamiltonian, it would cause a similar issue. 
This can be avoided by considering an infinite-dimensional system without an upper bound of energy levels---the ground energy needs to exist from the physical requirement, but the upper bound does not need to exist. 
Therefore, we first embed the resource state $\eta_{\calQ}$ in \eqref{eq:coherent resource state} on system $Q=\otimes_l Q_l$ with $H_{Q_l}=\sum_{j=1}^L j\tilde\Di_l\dm{j}$ prepared in Step~(\ref{item:resource state}) for some finite $d$ into the infinite-dimensional system  $\tilde Q=\otimes_l \tilde Q_l$ with the half-ladder Hamiltonian $H_{\tilde Q_l}=\sum_{j=0}^\infty j\tilde\Di_l \dm{j}$, which can be done by an energy-conserving isometry $V_{Q_l\to\tilde Q_l}=\sum_{j=1}^L {}_{\tilde Q_l}\dm{j}_{Q_l}$ and thus thermal operation.
In the following, we simply call this infinite half-ladder system $Q$ and assume that the resource state is already embedded in this system. 

This allows us to focus on not hitting the ground state during the repetitive implementation of coherent unitary. 
We first argue that we can pump up the energy of the resource state by a thermal operation together with an additional supply of the non-thermal input state.

\begin{lm}\label{lm:energy pumping}
  Let $Q$ be the system with the Hamiltonian $H_Q=\sum_{j=0}^\infty j\Di \dm{j}$, and $\eta$ be an arbitrary state on $Q$.
  Let $\rho$ be a non-thermal state on another system $S$. 
  Then, there exists an integer $\lambda$ such that for any $\epsilon>0$, there exists an integer $M$ and a thermal operation $\Lambda$ satisfying 
  \bal
   \|\Lambda(\eta\otimes \rho^{\otimes \lambda M}) - \Delta^M_Q\eta\Delta^{\dagger M}_Q \|_1<\epsilon,
  \eal
where $S_Q=\sum_{j=0}^\infty\ketbra{j+1}{j}$ is the shift operator.
\end{lm}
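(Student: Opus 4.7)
The plan is to decompose the task into two thermodynamic subroutines: (i) distill from $\lambda M$ copies of $\rho$ an ancilla state consisting of $M$ copies of an excited qubit carrying energy $\Di$, and (ii) use each such copy to advance the ladder state $\eta$ by one step via an energy-conserving swap. Iterating (ii) $M$ times then realizes the desired $M$-step shift.

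For (i), introduce a qubit $A$ with Hamiltonian $H_A=\Di\,\dm{1}_A$ and set $r\coloneqq D(\rho\|\tau_{\Gibbs,S})/D(\dm{1}_A\|\tau_{\Gibbs,A})$, which is strictly positive because $\rho$ is non-thermal. First apply the pinching channel in blocks of $\mu$ copies (itself a thermal operation) to convert $\rho^{\otimes\mu}$ into an energy-diagonal state whose per-copy free energy approaches $D(\rho\|\tau_{\Gibbs,S})$ as $\mu\to\infty$, as guaranteed by \eref{eq:pinched free energy}; then invoke the asymptotic classical thermodynamic conversion theorem of Ref.~\cite{Bra13} to turn the result into approximately $\dm{1}_A^{\otimes k}$ at any rate $k/N$ strictly less than $r$. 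Fix any integer $\lambda$ with $\lambda r>1$; then for all sufficiently large $M$ one can realize $M$ copies of $\dm{1}_A$ from $\lambda M$ copies of $\rho$ by a thermal operation with trace-distance error at most $\epsilon$, after discarding any excess.

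For (ii), introduce the energy-conserving unitary $V$ on $Q\otimes A$ defined by $V\ket{j}_Q\ket{1}_A=\ket{j+1}_Q\ket{0}_A$ for $j\geq 0$, $V\ket{j}_Q\ket{0}_A=\ket{j-1}_Q\ket{1}_A$ for $j\geq 1$, and $V\ket{0}_Q\ket{0}_A=\ket{0}_Q\ket{0}_A$. A direct calculation yields
\eq{
\Tr_A\bigl[V(\eta\otimes\dm{1}_A)V^\dagger\bigr]=S_Q\,\eta\, S_Q^\dagger,
}
where $S_Q=\sum_{j\geq 0}\ketbra{j+1}{j}$ is the ladder shift identified with $\Delta_Q$ in the lemma statement. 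Applying this thermal operation $M$ times in sequence, consuming a fresh distilled qubit each round, produces $\Delta_Q^{M}\eta\,\Delta_Q^{\dagger M}$ exactly, with no accumulated error because the shift only moves weight to strictly higher levels and never encounters the ground state.

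The total protocol is the composition of the distillation and the $M$ energy-conserving swap steps, each of which is a thermal operation; by monotonicity of trace distance under CPTP maps, its deviation from the target is bounded by the distillation error $\epsilon$. The main subtlety I anticipate is coordinating three asymptotic convergences---pinching, classical conversion, and the integer floor $\lfloor r\lambda M\rfloor$---uniformly in $M$. This is handled by choosing $\lambda>1/r$ with a strict margin, so that for all $M$ above some threshold we simultaneously achieve $\lfloor r\lambda M\rfloor\geq M$ and keep the composite error below $\epsilon$. The quantum coherence possibly present in $\rho$ is the only obstruction to applying the classical conversion theorem directly, and the pinching step removes it without asymptotic free-energy cost while remaining within the thermal-operation framework.
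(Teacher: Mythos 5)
Your proof is correct, and at its core it follows the same underlying idea as the paper's: the shift channel $\eta\mapsto\Delta_Q\eta\Delta_Q^\dagger$ is covariant, so it admits an energy-conserving (Stinespring) dilation whose environment is a pure non-thermal state, and that state can be asymptotically distilled from copies of $\rho$ by a thermal operation. The paper dispatches the lemma in one line by invoking Lemma~\ref{lm:covariant by thermal}, whose proof itself relies on the general covariant-dilation theorem of Keyl--Werner and Marvian--Spekkens plus distillation of $\dm{0}_E$. What you do differently is make the dilation concrete: you exhibit the explicit energy-conserving unitary $V$ on $Q\otimes A$ that realizes one shift step by consuming a single excited qubit $\dm{1}_A$ of energy $\Di$, and then iterate $M$ times. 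This buys several things. First, your argument is self-contained and does not cite the covariant dilation theorem, which in Lemma~\ref{lm:covariant by thermal} is stated under the hypothesis $\dim S<\infty$; since $Q$ here is the infinite half-ladder, the paper's one-line appeal to that lemma implicitly requires a finite-dimensional truncation of $Q$ that is not spelled out, whereas your construction of $V$ works on the half-infinite ladder directly. Second, your construction makes the linear resource scaling $\lambda M$ manifest: you need exactly $M$ excited qubits, each costing $O(1)$ copies of $\rho$ by the classical asymptotic conversion rate $r=D(\rho\|\tau_{\Gibbs,S})/D(\dm{1}_A\|\tau_{\Gibbs,A})>0$, so choosing any integer $\lambda>1/r$ works, which matches the quantifier structure of the lemma ($\exists\lambda\,\forall\epsilon\,\exists M$). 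Third, your observation that the error does not compound over the $M$ swap rounds (each swap deterministically converts $\dm{1}_A\to\dm{0}_A$ while shifting $Q$ up, and the shift never touches the ground level from above) correctly justifies that the total error is bounded by the one-time distillation error via trace-distance monotonicity. One minor point worth keeping in mind: the pinching preprocessing only guarantees a per-copy free energy approaching $D(\rho\|\tau_{\Gibbs,S})$ in the block-size limit, so the achievable rate is strictly below $r$ for any fixed block; your choice of a strict margin $\lambda r>1$ handles this correctly.
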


\begin{proof}
This lemma directly follows from Lemma \ref{lm:covariant by thermal}, since the shift operation is covariant.
\end{proof}

Lemma~\ref{lm:energy pumping} allows us to prepare a coherent resource state with shifted energy levels. 
We aim to set this energy shift large enough so that in the repeated application of coherent unitary, the energies involved in the resource state do not hit the ground energy.
In the following, we argue that a sufficiently large but constant ($O(1)$ with respect to $n$) energy shift guarantees the desired accuracy in implementing the coherent unitary repeatedly.
The key idea in the analysis is to see the change in the resource state as a random walk and show that the hitting probability after the \emph{infinite} repetition can be bounded arbitrarily small.

Let us consider $U_{SQ}$ in \eqref{eq:Aberg unitary}.
If we use the coherent resource state $\eta$ on $Q$, the remaining coherent state after the application of $U_{SE}$ is 
\bal
 \Tr_S U_{SQ}(\rho_{\cl,\mu}\otimes\eta) U_{SQ}^\dagger = \sum_{j,{\bf c'}} \lambda_j |f_{j,{\bf c'}}|^2\left(\otimes_l\Delta^{m_{j\bsc',l} }_{Q_l}\right)\eta\left( \otimes_l{\Delta^\dagger}^{m_{j\bsc',l} }_{Q_l}\right)
 \label{eq:coherent state after rotation}
\eal
This implies that the dynamics of $\eta$ can be described as the random walk, which occurs on each ladder system $Q_l$ with transition probability $p_{l,\delta} = \sum_j \sum_{{\bf c'}:m_{j\bsc',l}=\delta}\lambda_j |f_{j{\bf c}'}|^2$. 
Moreover, the average move of each random walk is positive, which can be seen as follows.
Since $\sum_{\bsc'}m_{j\bsc',l}|f_{j\bsc'}|^2\geq 0$ because of \eqref{eq:classical state energy condition general}, we get 
\bal
 \sum_\delta \delta p_{l,\delta} = \sum_\delta\sum_j \sum_{{\bf c'}:m_{j\bsc',l}=\delta}\delta\lambda_j |f_{j{\bf c}'}|^2 = \sum_{j,{\bf c'}} m_{j\bsc',l}\lambda_j |f_{j{\bf c'}}|^2 \geq 0.
\eal
This means that the average move of $\eta$ on the $l$\,th ladder is positive.

This argument shows that the energy shift in the coherent resource state can be modeled by a random walk with moving toward positive direction on average. 
This in turn implies that, if we start the random walk from a sufficiently large positive energy, the probability for the walk to hit the ground energy can be well bounded. 
The following result shows that even a sufficiently large constant energy shift independent of $n$ is sufficient to achieve an arbitrary given accuracy.

\begin{lm} \label{lm:random-walk-hit}
Consider a discrete-time one-dimensional classical random walk on integers $\bbN$.
Let $X_n$ be a stochastic variable at the $n$-th step, which represents the position.
Suppose that the jump probability depends only on the difference between two positions: $\bbP(X_{n+1}-X_n=i)=p_i$ with given $p_i$'s, where $\bbP$ denotes the probability.
The jump is not long, i.e., we have a fixed $l$ such that $p_i=0$ for $\abs{i}>l$.

Suppose that this random walk has a positive bias: $\sum_{i=-l}^l ip_i>0$.
We set $X_0=0$ with probability 1.
Then, the probability that $X_n$ takes a value less than or equal to $-\xi$ ($\xi>0$) is bounded from above as
\eqa{
\bbP(\exists n \ {\rm s.t.} \ X_n\leq -\xi)\leq \frac{1}{\gamma^{-\xi-1}-\gamma^{-1}+1}
}{P-bound-RW}
Here, $\gamma<1$ is a solution of
\bal
f(\gamma):=\sum_{i=1}^l \( \sum_{j=1}^i \gamma^j\) p_i - \sum_{i=1}^l \( \sum_{j=1}^i \gamma^{-j}\) p_{-i}=0.
\label{eq:gamma-solution}
\eal

\end{lm}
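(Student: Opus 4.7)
The plan is to apply optional stopping to a judiciously chosen nonnegative supermartingale whose values at $x=0$ and $x=-\xi$ yield precisely the stated ratio. First I would verify that $f$ has a root $\gamma \in (0,1)$. Since $f$ is continuous, $f(1) = \sum_i i p_i > 0$ by the positive-drift hypothesis, and $f(\gamma) \to -\infty$ as $\gamma \to 0^+$ (because the terms $p_{-i}\gamma^{-i}$ blow up; if all $p_{-i}=0$ the walk never decreases and the claim is vacuous), the intermediate value theorem delivers the desired root.

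Next I would introduce the piecewise test function
$$
V(x) = \begin{cases} \gamma^{x}, & x \geq 0,\\ \gamma^{x-1} - \gamma^{-1} + 1, & x \leq 0, \end{cases}
$$
which is continuous at $x=0$ with $V(0)=1$, strictly positive, strictly decreasing in $x$, tends to $0$ as $x\to+\infty$, and takes the value $V(-\xi) = \gamma^{-\xi-1}-\gamma^{-1}+1$ — exactly the reciprocal of the target bound. The claim to verify is that $V(X_n)$ is a supermartingale on the non-absorbed region $\{x > -\xi\}$. An algebraic manipulation of $f(\gamma)=0$ shows $\sum_i p_i \gamma^i \leq 1$, which covers the regions $x \geq l$ (pure $\gamma^x$ formula) and $-\xi < x \leq -l$ (pure lower formula); at the kink $x=0$ the supermartingale inequality reduces \emph{exactly} to $f(\gamma)=0$; in the transitional window $-l < x < l$ a case-by-case telescoping again reduces to $f(\gamma)=0$.

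Finally, optional stopping applied to $V(X_{T\wedge n})$ with $T = \inf\{n : X_n \leq -\xi\}$ gives $\mathbb{E}[V(X_{T\wedge n})] \leq V(0) = 1$. Sending $n\to\infty$ and invoking Fatou's lemma, together with the positive-drift fact that $X_n\to+\infty$ almost surely on $\{T=\infty\}$ (so $V(X_n)\to 0$ on that event) and the monotonicity bound $V(X_T) \geq V(-\xi)$ on $\{T<\infty\}$, yields $V(-\xi)\, \mathbb{P}(T<\infty) \leq 1$, which rearranges into the claimed inequality. The main obstacle I anticipate is the transitional-window verification of the supermartingale property: jumps that straddle the kink at $x=0$ force both pieces of $V$ to mix, and the identity $f(\gamma)=0$ must be deployed carefully to absorb the cross-terms produced by the change of formula. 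Establishing this cleanly — most likely by exhibiting an explicit telescoping decomposition of $\mathbb{E}[V(x+\Delta)] - V(x)$ in which each kink-crossing step contributes precisely the quantity that $f(\gamma)=0$ nullifies — will be the technical heart of the argument.
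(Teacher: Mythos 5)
Your approach is essentially the same as the paper's: the test function $V(x)$ you construct is the affine image $V(x)=1-\tfrac{1-\gamma}{\gamma}Y(x)$ of the piecewise process $Y$ the paper introduces ($Y(x)=\sum_{i=1}^{x}\gamma^{i}$ for $x>0$, $Y(0)=0$, $Y(x)=-\sum_{i=1}^{-x}\gamma^{-i}$ for $x<0$), and the optional-stopping closure is the same one the paper runs, with your Fatou argument and the paper's $\mu\to\infty$ upper barrier being interchangeable. One subtlety is worth recording: the paper asserts that $Y$ is a martingale ``by construction of $\gamma$,'' but $f(\gamma)=0$ gives the martingale equality only at $x=0$; away from the kink one instead gets $\sum_c p_c\gamma^{c}<1$ (which, as you correctly observe, follows from $f(\gamma)=0$ and $\gamma<1$ under the positive-bias hypothesis), so $Y$ is a strict submartingale and $V$ a strict supermartingale --- your framing, not the paper's, is the accurate one, and the optional-stopping chain still goes through because only that inequality direction is needed. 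Your flagged ``transitional window'' $|x|<l$ is also genuinely the delicate step and is silently skipped in the paper; after using $f(\gamma)=0$ to cancel, what remains for $0<x<l$ is $\gamma^{x}(1-\gamma)\sum_{-x\le c<0}p_c\abs{Y(c)}+(1-\gamma^{x})\sum_{c<-x}p_c\ge 0$ (and an analogous nonnegative residual for $-l<x<0$), so the supermartingale inequality there comes from a sign check on the leftover terms rather than from a pure reduction to $f(\gamma)=0$; with that observation in hand your proposal is complete.
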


Note that \eqref{eq:gamma-solution} has a solution less than 1 since $f(\gamma)$ is a strictly increasing function of $\gamma$ and $f(1)=\sum_{i=-l}^l ip_i>0$.
Since the right-hand side of \eqref{P-bound-RW} is further bounded as $\frac{1}{\gamma^{-\xi-1}-\gamma^{-1}+1}<\gamma^\xi$, this relation means an exponential decay of the arrival probability.

\begin{proof}
We introduce a new stochastic variable $Y_n$, which is a function of $X_n$ as
\eq{
Y_n=\bcases{
\sum_{i=1}^{X_n} \gamma^i & X_n>0 \\
0 & X_n=0 \\
-\sum_{i=1}^{-X_n} \gamma^{-i} & X_n<0.
}
}
This map is given as $Y_n(X=i)-Y_n(X=i-1)=\gamma^i$.
By construction of $\gamma$ in \eqref{eq:gamma-solution}, $Y_n$ is martingale.

We consider a stopping process that if $X_n\leq -\xi$, then we stop this process.
For an integer $a$, define $T_a^{\downarrow}=\inf\lset n\sbar X_n\leq a\rset$ and $T_a^{\uparrow}=\inf\lset n\sbar X_n\geq a\rset$, and let $T\coloneqq\min\{T_{-\xi}^\downarrow,T_\mu^\uparrow\}$ for a positive integer $\mu$. 
Since $\min\{T,n\}$ is a bounded stopping time, we can apply the optional stopping theorem~\cite{Grimmett2020probability} to get 
\bal
 \bbE[Y_{\min\{T,n\}}] = \bbE[Y_0] = 0,
\eal
where $\bbE$ denotes the expected value.
Together with that $\min\{n,T\}\to T$ with $n\to\infty$ almost surely, and $\lim_{n\to\infty}\bbE[Y_{\min\{T,n\}}]=\bbE[Y_T]$, we get 
\bal
 \bbE[Y_T] = 0.
 \label{eq:expectaion stopping time}
\eal

Since $P(T<\infty)=1$, we have $P(T=T_{-\xi}^\downarrow)+P(T=T_\mu^\uparrow)=1$.
This, together with \eqref{eq:expectaion stopping time}, implies
\bal
 0 = P(T=T_{-\xi}^\downarrow) Y_{T_{-\xi}^\downarrow} + P(T=T_\mu^\uparrow) Y_{T_\mu^\uparrow}.
 \label{eq:martingale average}
\eal
We observe that 
\bal
 Y_{T_{-\xi}^\downarrow} \leq -\sum_{i=1}^{\xi}\gamma^{-i} = -\frac{\gamma^{-\xi}-1}{1-\gamma},\qquad
 Y_{T_{\mu}^\uparrow} \xrightarrow[\mu\to\infty]{} \sum_{i=1}^{\infty} \gamma^{i}&=\frac{\gamma}{1-\gamma}.
 \label{eq:martingale variable bounds}
 \eal

Letting $P\coloneqq \bbP(\exists n \ {\rm s.t.} \ X_n\leq -\xi)$, we note that $P=\bbP(T=T_{-\xi}^\downarrow)$ when taking $\mu\to\infty$.
Therefore, \eqref{eq:martingale average} and \eqref{eq:martingale variable bounds} imply 
\bal
 -\frac{\gamma^{-\xi}-1}{1-\gamma} P + \frac{\gamma}{1-\gamma}(1-P)\geq 0,
\eal
which results in 
\bal
 P\leq \frac{1}{\gamma^{-\xi-1}-\gamma^{-1}+1}.
\eal

\end{proof}

Lemma~\ref{lm:random-walk-hit} ensures that, by taking $M$ sufficiently large depending on the target accuracy, the repeated implementation of coherent unitary $U_\mu$ retains the performance. 
More precisely, suppose that $\rho_{\cl,\mu}^{\otimes \nu}$ are on $S_1^{\otimes \mu}\otimes S_2^{\otimes \mu}\dots \otimes S_\nu^{\otimes \mu}$ where $S_1,\dots, S_\nu$ are identical copies of the system $S$, but we just put the labels to make it explicit that the $i$\,th copy of $\rho_{\cl,\mu}$ acts on $S_i^{\otimes \mu}$.
Let $\calU_{S_i^{\otimes \mu}Q}(\cdot) = U_{S_i^{\otimes \mu}Q}\cdot U_{S_i^{\otimes \mu}Q}^\dagger$ 
be the unitary channel that implements $U_{\mu}$ via \eqref{eq:Aberg unitary}, $P$ be the hitting probability in Lemma~\ref{lm:random-walk-hit} with $\xi=M$, and $\epsilon>0$ be the error that comes with the one implementation of $U_\mu$.
Since we have $|\calQ|$ random walks each of which has no-hitting probability $1-P$, the probability of the successful implementation with error $\epsilon$ happen can be greater than $(1-P)^{|\calQ|}$.
Taking the other failure events into account as the error, we have for all $i=1,\dots,\nu$ that
\bal
 \|\Tr_{\overline{S_i^{\otimes \mu}}Q}\circ \calU_{S_\nu^{\otimes \mu} Q}\circ\dots\calU_{S_2^{\otimes \mu}Q}\circ\calU_{S_1^{\otimes \mu}Q}(\rho_{\cl,\mu}^{\otimes \nu}\otimes \dm{\eta_{L,M}})-\rho'^{\otimes \mu}\|_1 &\leq  (1-P)^{|\calQ|}\epsilon + 1-(1-P)^{|\calQ|}\\
 &\leq \epsilon + |\calQ|\gamma^M
\eal
where the second inequality is due to Lemma~\ref{lm:random-walk-hit}.
Since $\gamma<1$, one can take $M$ sufficiently large to realize an arbitrary desired accuracy.
This concludes the implementation of the desired asymptotic marginal transformation.

We finally comment on the smallness of correlations between a subsystem and the rest.
As shown in Fig.~\ref{fig:Aberg_repeat}, the correlation stems from the reuse of the coherent resource states.
However, fortunately, we implement a unitary operation mapping a pure state $\ket{E_{\bsc[j]}}$ to another pure state $\ket{\psi_j}$ in parallel, and the resulting state of a subsystem is the mixture of pure states obtained through this unitary operation.
Given that a pure state does not correlate with other systems, we conclude that correlation between a subsystem and the rest can be made arbitrarily small by preparing the unitary operation with sufficiently high accuracy.

\section{Single-shot correlated catalytic transformation from asymptotic transformation (Second part of the proof of the main theorem)}

We here prove the reduction from a asymptotic marginal transformation with transformation rate 1 to a single-shot correlated-catalytic transformation.
Although the general method of converting the asymptotic transformation to a single-shot correlated catalytic transformation is well known~\cite{Duan2005multiple-copy,SS21,ST23,fang2024surpassingfundamentallimitsdistillation}, we need to slightly modify the protocol to apply to our setting because our asymptotic transformation involves input and output states of different numbers of copies.
We address this by appropriately placing thermal Gibbs states in our catalyst as discussed in Ref.~\cite{fang2024surpassingfundamentallimitsdistillation}.

The precise definition and statement is as follows.

\begin{lm}
Suppose that $\rho$ can be transformed to $\rho'$ by asymptotic marginal transformation under thermal operations with unit rate, i.e., $R_{\rm m}(\rho\to\rho') =1$. 
Then, there exists a single-shot correlated-catalytic transformation from $\rho$ to $\rho'$ with a vanishing error.
Moreover, if the asymptotic marginal transformation creates arbitrarily small correlation between a subsystem and the rest, then the correlated-catalytic transformation creates arbitrarily small correlation between the system and the catalyst.
\end{lm}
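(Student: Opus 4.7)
The plan is to use a cyclic, permutation-based catalyst construction in the spirit of Duan et al., adapted to the asymptotic \emph{marginal} setting and combined with the Gibbs-padding trick mentioned in the excerpt in order to handle the mismatch between $n$ input copies and $n'=n-o(n)$ output copies. First I would invoke the hypothesis to obtain, for each desired accuracy $\varepsilon>0$, a thermal operation $\Lambda_n:S^{\otimes n}\to S^{\otimes n'}$ producing $\Xi_n$ whose every single-site marginal is $\varepsilon$-close to $\rho'$. Since the standard cyclic catalyst construction requires the free operation to map a fixed Hilbert space to itself, I would append $n-n'$ copies of the Gibbs state (a free resource under thermal operations) to form $\tilde\Lambda_n(\sigma):=\Lambda_n(\sigma)\otimes\tau_{\Gibbs}^{\otimes n-n'}$, which remains a thermal operation mapping $S^{\otimes n}$ to itself. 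Set $\tilde\Xi_n:=\tilde\Lambda_n(\rho^{\otimes n})$; its single-site marginals on the first $n'$ sites remain $\varepsilon$-close to $\rho'$.

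Next I would set the catalyst to be $C=S^{\otimes n-1}\otimes R$ with $R$ a classical register of dimension $n$, trivial Hamiltonian, and basis $\{|k\rangle\}_{k=1}^n$, in the state
\[
 c=\frac{1}{n}\sum_{k=1}^n \rho^{\otimes k-1}\otimes \tilde\Xi_{n,n-k}\otimes |k\rangle\langle k|,
\]
where $\tilde\Xi_{n,i}$ denotes the reduction of $\tilde\Xi_n$ to its first $i$ sites. The combined initial state $\rho\otimes c=\frac{1}{n}\sum_k \rho^{\otimes k}\otimes \tilde\Xi_{n,n-k}\otimes |k\rangle\langle k|$ decomposes into $n$ classically labelled branches; crucially, the $|n\rangle$ branch equals $\rho^{\otimes n}$ on $S^{\otimes n}$, to which $\tilde\Lambda_n$ can be applied. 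The global thermal operation I would apply on $S\otimes C=S^{\otimes n}\otimes R$ is: (i) conditioned on the register value $|n\rangle$, run $\tilde\Lambda_n$ on all $n$ copies of $S$, producing $\tilde\Xi_n$; (ii) perform a conditional cyclic site permutation --- controlled on the register value $k$ --- that, in each branch, brings one fresh $\tilde\Xi_n$-site to the system slot (position $1$) while sweeping the used $\rho$-copies to the back; (iii) apply the register shift $k\mapsto k+1$ mod $n$, arranged so that the post-operation branches telescope back onto the mixture defining $\rho\otimes c$. Each of (i)--(iii) is a classical-controlled sequence of Gibbs-appendings, energy-conserving unitaries, and site swaps of identical copies of $S$, hence a valid thermal operation.

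Verification then reduces to two checks: $\Tr_S \pi = c$ and $\|\Tr_C \pi - \rho'\|_1\to 0$. The system approximation is immediate because every branch, after the conditional permutation, contributes a single-site marginal of $\tilde\Xi_n$ into the system slot, each such marginal being $\varepsilon$-close to $\rho'$ by construction. The catalyst identity follows from the telescoping structure --- after the register shift, the relabeled $|k\rangle$ branch ($k<n$) reproduces the catalyst part of the $|k+1\rangle$ term of $c$, while the $|n\rangle$ branch containing $\tilde\Xi_n$ is mapped to the $|1\rangle$ catalyst term $\tilde\Xi_{n,n-1}$ upon tracing out the system site. The final system--catalyst correlation inherits the correlation between one site of $\tilde\Xi_n$ and its complement, which is arbitrarily small by the small-correlation clause of the hypothesis, yielding the second half of the lemma. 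The main obstacle is step~(a): pinning down the precise form of the conditional permutation and register relabeling so that the catalyst is restored \emph{exactly} (the definition of a correlated catalyst tolerates no error on $C$, in contrast with the $\varepsilon$-tolerance on $S$). The Gibbs-padding is essential in this bookkeeping --- without it the $|n\rangle$ branch would land on $S^{\otimes n'}$ rather than $S^{\otimes n}$, and the cyclic structure would fail to close.
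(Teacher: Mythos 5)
Your proposal is essentially the same construction as the paper's proof: Gibbs-pad the asymptotic marginal map so that it acts on $S^{\otimes n}\to S^{\otimes n}$, build the cyclic catalyst $c=\frac{1}{n}\sum_k \rho^{\otimes k-1}\otimes\Xi'_{n-k}\otimes|k\rangle\langle k|$ on $S^{\otimes n-1}\otimes R$, apply the padded map conditioned on the register being $|n\rangle$, shift the register cyclically, and read off the exact catalyst restoration and the $\varepsilon$-approximation of the system marginal from the telescoping structure. The "obstacle" you flag at the end (pinning down the conditional permutation and relabeling) is in fact resolved by exactly what you sketch---the register shift $k\mapsto k+1\bmod n$ together with reassigning which $n-1$ copies of $S$ count as the catalyst---so your proposal matches the paper's argument in both structure and detail.
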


\begin{figure}
    \centering
    \includegraphics[width=\linewidth]{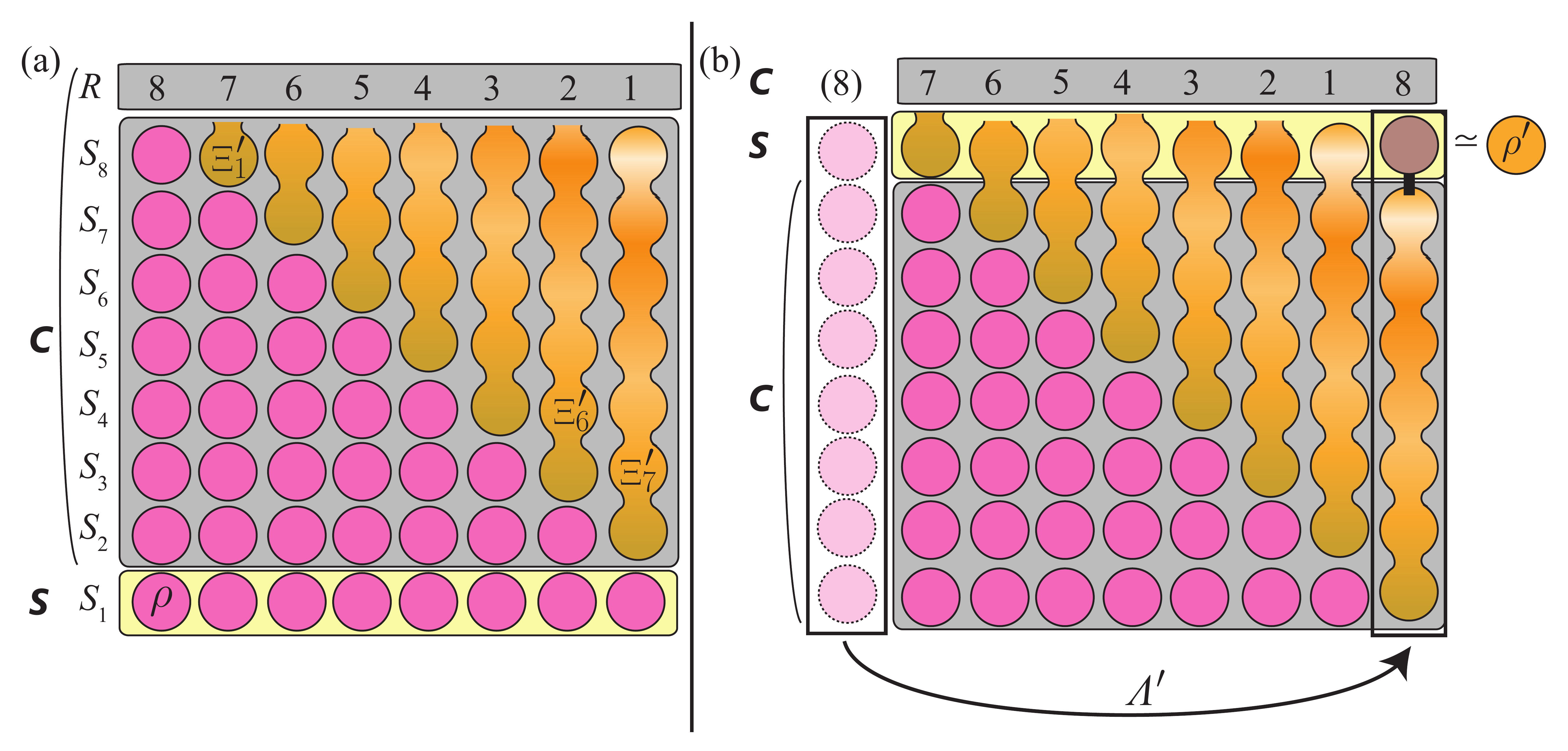}
    \caption{{\bf Construction of catalyst and correlated-catalytic conversion from asymptotic marginal conversion.}
{\bf (a)} We present the initial state of the composite system of the system $S$ and the catalyst $C=S^{\otimes 7}\otimes R$.
The vertical axis represents different copies of systems (i.e., eight different copies of $S$), and the horizontal axis implies different labels (i.e., classical mixture).
The connected circles and a star colored by orange and yellow represents a reduced state of the final state of the asymptotic marginal and correlated-catalytic conversion.
The gray square represents the label system.
{\bf (b)} We apply the asymptotic marginal conversion $\Lambda'$ to the state whose label system is in state 8.
By rearranging the labels of copies of $S$, the system $S$ is at $\rho'$ and the catalyst $c$ returns to its original state. The output of $\Lambda'$ contains a $o(n)$ number of the thermal Gibbs state (one copy in the present example) due to the mismatch of the size of the input and output systems for the asymptotic marginal transformation, which is denoted by a gray circle on the top right.}
    \label{fig:catalyst supplemental}
\end{figure}

\begin{proof}
The assumption $R_{\rm m}(\rho\to\rho')$ means that, for any $\delta>0$ and $\ep>0$, there exists a sufficiently large $n$, a thermal operation $\Lambda:S^{\otimes n}\to S^{\otimes \left\lfloor(1-\delta)n\right\rfloor}$ such that $\|\Tr_{\overline{S_i}}\Lambda(\rho^{\otimes n})-\rho'\|_1\leq \ep$ for every subsystem $S_i$ of $S^{\otimes \left\lfloor(1-\delta)n\right\rfloor}$. 
We modify this thermal operation $\Lambda$ that comes with input and output systems of different sizes to the one with the same input and output systems by attaching the copies of thermal Gibbs state $\tau_{\Gibbs}^{\otimes n-\left\lfloor (1-\delta)n \right\rfloor }$ to the output system and call this map $\Lambda'$.
This means that, letting $\Xi\coloneqq \Lambda(\rho^{\otimes n})$, we have another state $\Xi'$ as an output from $\Lambda'$ as $\Xi'=\Lambda'(\rho^{\otimes n}) = \Xi\otimes \tau_{\Gibbs}^{\otimes n-\left\lfloor (1-\delta)n \right\rfloor }$.

Now we construct the catalyst of the single-shot correlated-catalytic transformation from $\rho$ to $\rho'$ (Fig.~\ref{fig:catalyst supplemental}).
Let $R$ be a register (label) system spanned by $\{\ket{i}\}_{i=1}^n$ with a trivial Hamiltonian $H_R=I$, where $I$ is an identity operator.
Then we set catalyst $C$ as $S^{\otimes n-1}\otimes R$ and its state $c$ as
\eq{
c=\frac1n \sum_{k=1}^n \rho^{\otimes k-1}\otimes \Xi'_{n-k}\otimes \dm{k},
}
where $\Xi'_i$ is the reduced state of $\Xi'$ to the first $i$-th copies of $S$, that is, $\Xi'_i:=\Tr_{\overline{S^{\otimes i}}}[\Xi']$.
The initial state of the composite system $S\otimes C$ reads
\eq{
\rho\otimes c=\frac1n \sum_{k=1}^n \rho^{\otimes k}\otimes \Xi'_{n-k}\otimes \dm{k}.
}
Here, the last $n-1$ copies of $S$ are assigned to $C$.

Our single-shot correlated-catalytic transformation is constructed as follows.
We first apply $\Lambda'$ on $S\otimes C=S^{\otimes n}\otimes R$ conditioned on the label $\ket{n}$ on $R$, where the state on $S^{\otimes n}$ is $\rho^{\otimes n}$.
This---a thermal operation conditioned on the outcome of an incoherent measurement---can be implemented by a thermal operation~\cite[Proposition S.23]{Watanabe2024black}.
We also relabel the state of $R$ as $\ket{i}\to \ket{i+1}$ for $1\leq i\leq n-1$ and $\ket{n}\to \ket{1}$, which can be done by thermal operation noting that $H_R=I$.
Then, the resulting state is
\eq{
\pi=\frac1n \sum_{k=1}^n \rho^{\otimes k-1}\otimes \Xi'_{n-k+1}\otimes \dm{k}.
}
Here, by assigning the first $n-1$ copies of $S$ to $C$, the state of catalyst $c$ is recovered.
In addition, the state of the main system $S$ reads
\eq{
\Tr_{C}[\pi]=\frac1n \sum_{k=1}^n \Tr_{\overline{S_k}}[\Xi'_k]=(1-\delta)\frac{1}{(1-\delta)n} \( \sum_{k=1}^{(1-\delta)n} \Tr_{\overline{S_k}}[\Xi_k] \) +\delta \tau_{\Gibbs}.
}
Since we have
\eq{
\abs{\frac{1}{(1-\delta)n} \( \sum_{k=1}^{(1-\delta)n} \Tr_{\overline{S_k}}[\Xi_k] \) -\rho'}_1\leq \frac{1}{(1-\delta)n} \( \sum_{k=1}^{(1-\delta)n} \abs{\Tr_{\overline{S_k}}[\Xi_k]  -\rho'}_1\) <\ep
}
and
\eq{
\abs{\tau_{\Gibbs}-\rho'}_1\leq 2,
}
by setting $\ep=\delta=\frac{\ep}{4}$, we have the desired relation
\eq{
\abs{\Tr_{C}[\pi]-\rho'}<\ep .
}
\end{proof}


\end{document}